\documentclass[a4paper,USenglish,cleveref,thm-restate]{lipics-v2021}
\hideLIPIcs
\nolinenumbers
\usepackage{bbding}
\usepackage{soul}
\usepackage{booktabs} 
\usepackage{amsmath,amssymb,amsfonts}%
\usepackage{amsthm}%
\usepackage[ruled,vlined,linesnumbered]{algorithm2e}%
\usepackage{xspace}
\usepackage{thm-restate}
\usepackage{siunitx} 
\usepackage[table]{xcolor}
\usepackage{colortbl} 
\definecolor{lightgray}{RGB}{245,245,245}
\definecolor{lightblue}{RGB}{230,240,255}

\newcommand{\Liin}{\mathsf{L}^{\mathsf{in}}}
\newcommand{\Liout}{\mathsf{L}^{\mathsf{out}}}
\newcommand{\Riin}{\mathsf{R}^{\mathsf{in}}}
\newcommand{\Riout}{\mathsf{R}^{\mathsf{out}}}
\newcommand{\trace}{\mathrm{trace}}

\newcommand{\N}{\ensuremath{\mathbb{N}\xspace}}

\newcommand{\myID}{\ensuremath{\mathsf{ID}\xspace}}

\newcommand{\id}{\mathsf{ID}\xspace}

\newcommand{\LOCAL}{\mathsf{LOCAL}}
\newcommand{\CONGEST}{\mathsf{CONGEST}}

\DeclareMathOperator{\len}{len}

\usepackage[colorinlistoftodos,prependcaption,textsize=tiny]{todonotes}

\title{Efficient Counting and Simulation in Content-Oblivious Rings} 

\titlerunning{Efficient Computing in Content-oblivious Communication} 

\author{Jérémie {Chalopin}}{Aix Marseille Univ, CNRS, LIS, Marseille, France}{jeremie.chalopin@lis-lab.fr}{https://orcid.org/0000-0002-2988-8969}{}

\author{Yi-Jun Chang}{National University of Singapore, Singapore}{cyijun@nus.edu.sg}{https://orcid.org/0000-0002-0109-2432}{}

\author{Giuseppe A. {Di Luna}}{DIAG, Sapienza University of Rome, Italy}{diluna@diag.uniroma1.it}{}{}

\author{Haoran Zhou}{National University of Singapore, Singapore}{haoranz@u.nus.edu}{https://orcid.org/0009-0001-2458-5344}{}

\authorrunning{J. Chalopin, Y.-J. Chang,  G. A. Di Luna, H. Zhou} 

\Copyright{Jérémie Chalopin, Yi-Jun Chang, Giuseppe A.\ Di Luna, and Haoran Zhou} 

\keywords{Asynchronous Systems, Content-Oblivious Networks, Leader Election, Oriented Rings} 

\ccsdesc[500]{Theory of computation~Distributed algorithms}

\funding{This work has been partially supported by ANR project MIMETIQUE (ANR-25-CE48-4089-01), by the Ministry of Education, Singapore, under its Academic Research Fund Tier 1 (24-1323-A0001), and by the Italian
MUR National Recovery and Resilience Plan funded
by the European Union - NextGenerationEU through
projects SERICS (PE00000014).} 

\EventEditors{}
\EventNoEds{1}
\EventLongTitle{}
\EventShortTitle{}
\EventAcronym{}
\EventYear{}
\EventDate{}
\EventLocation{}
\EventLogo{}
\SeriesVolume{}
\ArticleNo{}

\begin{document}

\maketitle










\begin{abstract}

In the content-oblivious (CO) model (proposed by Censor-Hillel et al.~\cite{censor2023distributed}), processes inhabit an asynchronous network and communicate only by exchanging pulses: zero-size messages that convey no information beyond their existence. 

A series of works has clarified the computational power of this model. In particular, it was shown that, when a leader is present and the network is 2-edge-connected, content-oblivious communication can simulate classical asynchronous message passing. 
Subsequent results extended this equivalence to leaderless oriented and unoriented rings, and, under non-uniform assumptions, to general 2-edge-connected networks.

While these results are decisive from a computability standpoint, they do not address efficiency. The simulator of Censor-Hillel et al. requires $O(n^3b+n^3\log n)$ pulses to emulate the send of a single $b$-bit message, making it impractical even on modest-size networks

In this paper, we therefore focus on message-efficient computation in CO networks. We study the fundamental problem of counting in ring topologies, both because knowing the exact network size is a basic prerequisite for many distributed tasks and because counting immediately implies a broad class of aggregation primitives. We give an algorithm that counts using $O(n^{1.5})$ pulses in anonymous rings with a leader, an $O(n\log^2 n)$ algorithm for counting in rings with IDs. Moreover, we show that any counting algorithm in CO requires $\Omega(n\log n)$ pulses.

Interestingly, in the course of this investigation, we design a simulator for classic message passing that enables parallel neighbor-to-neighbor communication: in one simulated round, each process can send a $b$-bit message to each of its neighbors using only $O(b)$ pulses per process. The simulator extends to general 2-edge-connected networks, after a pre-processing step that requires $O(n^{8}\log n)$ pulses, where $n$ is the number of processes, allowing thus efficient simulation of asynchronous message passing in general 2-edge-connected networks.

\end{abstract}

\thispagestyle{empty}
\newpage
\thispagestyle{empty}
\tableofcontents
\newpage
\pagenumbering{arabic}

\section{Introduction}\label{sec1} 

A key aspect of distributed computing research is the design of computationally efficient primitives under different communication models. A wealth of research has studied how to create efficient algorithms, measured in terms of message complexity, giving us optimal solutions for foundational primitives such as consensus, reliable broadcast, leader election, counting, etc.

A defining aspect of the message-passing models is the amount of information that a message may carry. The usual spectrum in this regard goes from the $\LOCAL$ model, where messages have unrestricted size, to the $\CONGEST$ model, where messages are of logarithmic size, down to models where messages are of constant size.

In this paper, we focus on asynchronous systems where processes exchange messages that carry no content. This is the content-oblivious model (CO model) of Censor-Hillel et al.~\cite{censor2023distributed}. In this model, processes communicate over a bidirectional network, under an asynchronous scheduler, by exchanging pulses. A pulse carries no information, not even the identity of the sending process. This implies that the only information available to a process is the count of the pulses received on each local port. 

A content-oblivious model may arise in the case of fully defective channels, that is, when an adversary is able to corrupt the content of all messages in transit but cannot delete or inject new messages. It is worth noting that there are other realistic physical models in which communication is based on pulses. Examples include ultraviolet communication (where light pulses are used to communicate~\cite{xu2019pulselaser}), neural systems (where electrical spikes across dendrites are used), and asynchronous circuits (where the absence of a clock signal requires pulse-based communication).

Censor-Hillel~et~al.~\cite{censor2023distributed} showed that this weak model is equivalent to asynchronous message passing when a leader is present, and the network is 2-edge connected. Further results~\cite{frei2024content,chalopin_et_al:LIPIcs.DISC.2025.51,cha2025content} gave leader election algorithms for rings and, in the non-uniform case (that is, when an upper bound on the network size is known), for general 2-edge connected networks, showing that in these cases the CO model and classic message passing are computationally equivalent. 

However, computational equivalence says nothing about message efficiency. The simulator of Censor-Hillel et al., although a remarkable landmark from the point of view of computability equivalence, is highly inefficient, generating, in the worst case, $O(n^3b+n^3\log n)$ \cite[Theorem 2]{censor2023distributed} CO messages for each $b$-bit message sent by the simulated algorithm. This high number of messages makes the computation impractical even in networks of modest size. 

In this paper we focus on message-efficient computations for CO networks. We focus in particular on the problem of counting the number of processes in a ring network.
This problem is important per se, as the knowledge of the exact network size is a requirement for many algorithms and computations. Moreover, assuming that each process $p_i$ starts with an input $i_i \in I$, counting the number of occurrences of each input allows us to compute many functions of the processes' inputs, such as $\min$, $\max$, average, median, mode, and similar aggregates. Due to its centrality counting has been investigated in several works \cite{DiLunaViglietta2025EfficientComputation,DiLunaViglietta2022ComputingAnonymousDynamicNetworksLinear,MerrerKermarrecMassoulie2006P2PSizeEstimation} but never in the CO setting.

Quite interestingly, this study of the counting problem leads us to the design of a simulator from a local model to the CO model that is more efficient than the one of Censor-Hillel et al. and works for any 2-edge-connected network.

\subsection{Contributions}

We study distributed computation in oriented rings of $n$ processes communicating via \emph{content-oblivious} messages. Throughout, we assume the presence of a unique leader. In rings where processes have unique IDs, this assumption is without loss of generality, as a content-oblivious leader election algorithm can be executed as a preprocessing step~\cite{chalopin_et_al:LIPIcs.DISC.2025.51,cha2025content,frei2024content}. 

\subparagraph{Self-decomposable aggregation.}
Our primary algorithmic goal is to compute global aggregation functions over data distributed across the ring. We focus on the class of \emph{self-decomposable aggregation functions}, which admit a natural divide-and-combine structure. Formally, an aggregation function $f$ defined over multisets is self-decomposable if there exists an associative operator $\oplus$ such that, for any disjoint multisets $X_1$ and $X_2$,
\[
f(X_1 \uplus X_2) = f(X_1) \oplus f(X_2),
\]
where $\uplus$ denotes the multiset union.

This class captures many fundamental distributed tasks, including counting, summation, maxima, parity, and related semigroup-based computations. Self-decomposability enables hierarchical aggregation: partial results computed on disjoint subsets of processes can be merged.

\subparagraph{Algorithm simulation for content-oblivious rings.}
Our first technical contribution is an efficient algorithm simulator for anonymous content-oblivious rings. In particular, it allows the simulation of a synchronous round of the $\CONGEST[b]$ model, where each process sends a $b$-bit message to each of its neighbors, using only $O(b)$ pulses per process. In this sense, the simulation is asymptotically message-optimal. The simulator supports the emulation of a \emph{logical} ring in which only a subset of processes actively participate in the computation, while the remaining processes act solely as relays that forward information. This flexibility is crucial in our algorithms.

\subparagraph{Efficient aggregation and counting with unique IDs.}
Using this simulator, we obtain an efficient deterministic aggregation algorithm by building a hierarchical decomposition of the ring. This decomposition is computed by simulating a deterministic maximal independent set (MIS) algorithm in the $\LOCAL$ model, which inherently requires unique IDs. The approach yields our main aggregation result.


\begin{restatable}[Efficient aggregation]{theorem}{thmCount}
Let $f$ be a self-decomposable aggregation function defined on multisets over a universe $U$. 
Consider a content-oblivious ring $C$ of $n$ processes with a designated leader, where each process $p$ is equipped with a distinct identifier of at most $\lambda$ bits and holds an element $x_p \in U$.  Assume that for every non-empty subset $S$ of processes,  the bit length of the value $f\left(\biguplus_{p \in S} \{x_p\}\right)$  is at most $\beta$.

Then there exists a deterministic algorithm that quiescently terminates and computes $f\left(\biguplus_{p \in C} \{x_p\}\right)$ at all processes using $O\left((\lambda + \beta) n \log n  \right)$ pulses.
   \label{thm:Count}
\end{restatable}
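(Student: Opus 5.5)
We plan to build a hierarchical decomposition of the ring in $O(\log n)$ phases, shrinking the set of active processes by a constant factor per phase. The main tool is the simulator described in the contributions: it simulates one synchronous $\CONGEST[b]$ round on a logical ring whose active processes form a subset, with the remaining processes acting as relays. Each active process pays $O(b)$ pulses per simulated round. We assume each relay forwarding a $b$-bit message costs $O(b)$ pulses, so one simulated round costs $O(bn)$ pulses in total.

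Initially every process is active and holds the partial aggregate $a_p = f(\{x_p\})$. We maintain the invariant that the active processes split the ring into contiguous arcs, each headed by one active process. The arcs partition $C$, and each head stores $f$ of the multiset of its arc. The leader is kept active throughout: in every phase it is forced into the independent set, and its two neighbours are excluded. In one phase we do the following.
\begin{enumerate}
\item On the logical ring of active processes, compute a maximal independent set $I$ using a deterministic $\LOCAL$ algorithm for rings with $\lambda$-bit IDs. The natural choice is Cole--Vishkin color reduction to $O(1)$ colors followed by $O(1)$ rounds of greedy MIS. Every message has $O(\lambda)$ bits, but this takes $O(\log^* n)$ rounds.
\item Every active process not in $I$ sends its partial aggregate, of at most $\beta$ bits, clockwise to the next active process. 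Each node of $I$ collects from the non-$I$ nodes between itself and its counterclockwise $I$-predecessor. Because $I$ is maximal, there are at most two such nodes, so this takes $O(1)$ rounds with $O(\beta)$-bit messages.
\item Each node of $I$ combines the received values with $\oplus$ in ring order. This is valid because $\oplus$ is associative and the arcs are disjoint. Nodes outside $I$ become relays.
\end{enumerate}
Since $I$ is independent on a ring, at most half of the active processes survive a phase. So after $O(\log n)$ phases only the leader is active and holds $f(\biguplus_{p\in C}\{x_p\})$. It then broadcasts this value of at most $\beta$ bits around the ring with $O(\beta n)$ pulses and triggers quiescent termination.

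\textbf{Main obstacle: the pulse count.} Counted naively, each phase costs $O(\log^* n)$ rounds with $O(\lambda)$ bits over $n$ relaying processes. That gives $O(\lambda n\log n\log^* n)$ in total, which is too much.

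We see two ways to remove the $\log^* n$ factor.
\begin{itemize}
\item \textbf{Relays pay only for traffic.} Ideally a relay pays only for traffic actually crossing it. In phase $i$ with $m_i$ active processes, the actives pay $O(m_i(\lambda+\beta)\log^* n)$, and $\sum_i m_i = O(n)$. The relays then pay $O((\lambda+\beta)n)$ per phase, provided only $O(1)$ messages of $O(\lambda+\beta)$ bits cross each segment per phase.
\item \textbf{Compress the Cole--Vishkin rounds.} After the first round, the colors have only $O(\log\lambda)$ bits. Sending the full bit budget $O(\lambda)$ only in the first round and geometrically fewer bits afterwards makes the total bits per phase $O(\lambda+\beta)$.
\end{itemize}
We will try the second route first, since it only needs the simulator's per-bit cost. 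Combined with $O(\log n)$ phases, it gives the claimed $O((\lambda+\beta)n\log n)$ bound.

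Correctness of quiescent termination rests on two things. Each simulated round terminates in the simulator's sense, and the leader can detect the end of the last phase by simulating one extra round that checks whether any other process is still active.
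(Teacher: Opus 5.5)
Your proposal is correct and follows essentially the paper's proof. It uses $O(\log n)$ phases, each simulating a leader-containing Cole--Vishkin MIS on the logical ring of active processes and then folding the non-MIS aggregates into MIS nodes; your ``compressed rounds'' fix is exactly the paper's accounting of the MIS as $\lambda + O(\log\lambda) + O(\log\log\lambda) + \cdots = O(\lambda)$ rounds of $\CONGEST[1]$. Two details you leave implicit are handled in the paper with \texttt{ComputeOR}: since $\lambda$ is not known, processes first learn it at cost $O(n\lambda)$ so they can agree on how many Cole--Vishkin iterations to run, and the ``is anyone else still active?'' check is a \texttt{ComputeOR} at the start of every phase, so that relays, not only the leader, learn when to stop.
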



As a direct corollary, we obtain an efficient deterministic counting algorithm.

\begin{corollary}[Efficient counting]\label{cor:counting}
In a content-oblivious ring of $n$ processes with \emph{distinct} $O(\log n)$-bit IDs and a designated leader, there exists a deterministic, quiescently terminating algorithm that computes $n$ at all processes using $O(n \log^2 n)$ pulses.
\end{corollary}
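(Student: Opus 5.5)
This follows directly from Theorem~\ref{thm:Count}. The plan is to choose the universe, the aggregation function and the operator so that the aggregate is exactly $n$. Then we check that the parameters $\lambda$ and $\beta$ are both $O(\log n)$.

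\textbf{Instantiation.} Set $U=\{1\}$ and let every process $p$ hold $x_p=1$. Define $f(X)=|X|$ on multisets over $U$. For disjoint multisets $X_1,X_2$ we have $f(X_1\uplus X_2)=|X_1|+|X_2|$. So $f$ is self-decomposable, with $\oplus$ being integer addition, which is associative. Moreover
\[
f\Bigl(\biguplus_{p\in C}\{x_p\}\Bigr)=n ,
\]
so computing this aggregate at all processes is exactly counting.

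\textbf{Parameter bounds.}
\begin{itemize}
\item By hypothesis, the identifiers are distinct and have $O(\log n)$ bits, so $\lambda=O(\log n)$.
\item For every non-empty subset $S$ of processes, $f\bigl(\biguplus_{p\in S}\{x_p\}\bigr)=|S|\le n$. This value has bit length at most $\lceil\log_2(n+1)\rceil$, so $\beta=O(\log n)$.
\end{itemize}
These bounds are used only as parameters in the complexity bound of Theorem~\ref{thm:Count}. No process needs to know $n$ in advance, because the theorem assumes nothing beyond a designated leader, distinct IDs and the local inputs.

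\textbf{Conclusion.} Theorem~\ref{thm:Count} gives a deterministic, quiescently terminating algorithm that computes $n$ at all processes using
\[
O\bigl((\lambda+\beta)\,n\log n\bigr)=O(n\log^2 n)
\]
pulses.

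The only point needing care is how partial sums are represented. If the underlying algorithm encodes aggregate values in fixed-width fields, it would need a bound on $\beta$ up front. I expect the theorem's algorithm to use a self-delimiting encoding of variable length, and then the bit-length bound $\beta$ is all that matters. Under that reading there is no real obstacle.
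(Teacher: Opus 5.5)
Your proposal is correct and matches the paper's proof: both instantiate Theorem~\ref{thm:Count} with $U=\{1\}$, $x_p=1$ and $f(X)=|X|$ (so $\oplus$ is addition), then observe that $\lambda=O(\log n)$ and $\beta=O(\log n)$. Your concern about fixed-width encodings does not arise: the paper's proof of Theorem~\ref{thm:Count} states that the simulator works with messages of varying length even when $\beta$ is unknown.
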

\begin{proof}
We apply \Cref{thm:Count} with $\lambda = O(\log n)$, aggregation function $f(X) = |X|$, universe $U = \{1\}$, and $x_v = 1$ for all $v \in C$. Then $f\!\left(\biguplus_{v \in C} \{x_v\}\right) = n$, and since $\beta = O(\log n)$, the total number of pulses is $O\!\left(n \log n \, (\lambda + \beta)\right) = O(n \log^2 n)$.
\end{proof}
 
We complement our upper bound with an $\Omega(n \log n)$-message lower bound. This result shows that our  $O(n\log^2 n)$-message counting algorithm is optimal up to a multiplicative $O(\log n)$ factor.

\begin{restatable}[Lower bound]{theorem}{thmlb}
In oriented rings with content-oblivious messages, where processes have unique IDs and there is a pre-selected leader, any algorithm that correctly solves counting when processes do not know $n$ must have an execution with message complexity $\Omega(n \log n)$.
    \label{thm:lb}
\end{restatable}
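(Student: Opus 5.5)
We would prove the bound by an information-theoretic argument built on the observation that, in the content-oblivious model, the entire local view of a process is the sequence of its local events. Each event is a pulse received on a port, or a spontaneous step. The actions a process takes depend only on this view, its ID, and whether it is the leader.

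The plan is to compare the execution on a ring of size $n$ with executions on rings of other sizes. We chain these comparisons so that a cheap execution would force two rings of different sizes to produce the same output somewhere.

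Concretely, fix an algorithm and consider a synchronous-like (FIFO, lockstep) schedule. Let $m$ be the total number of pulses in the execution on a ring of size $n$. Since $n$ is unknown, the same algorithm must also run correctly on larger rings. For a process $p$, define its \emph{trace} to be the ordered sequence of ports (left or right) on which it received pulses, up to termination.

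First step: show that if a contiguous segment of processes keeps identical traces, then the execution can be "cut and pasted". We use a standard splicing argument, as in ring lower bounds à la Frederickson--Lynch. Two rings that agree on a segment, under a suitable asynchronous scheduler, yield identical local views for the processes in that segment. Hence those processes output the same count, which is a contradiction if the ring sizes differ.

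Second step, which is the counting part: we want to show that a process receiving few pulses cannot distinguish many positions. The trace of a process receiving $k$ pulses is a binary string of length $k$, so the number of distinct possible traces is at most $2^{k+1}$.

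For the lower bound we want many processes to have pairwise distinct "information". One route is to argue that in the execution on the $n$-ring, the processes' traces must encode enough to break symmetry at distance scales. Here we use the ID assignment: choose IDs adversarially, or use the fact that behavior depends on IDs only through the pulses sent. Alternatively, a cleaner route is to show that every process must "know" its distance to the leader modulo something, or at least that traces within any window must be distinct. If two processes $p,q$ at distance $d$ have identical trace-prefix structure in a suitable sense, then we can splice out the arc between them. This yields a ring of size $n-d$ in which some process has the same view and outputs $n$, which is wrong.

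So the key intermediate claim to aim for is: all $n$ processes have pairwise distinct traces. Given that claim, the counting finishes the proof. At most $2^{k+1}$ processes can have traces of length $\le k$, so at least $n/2$ processes have traces of length $\ge \log_2 n - 2$. Summing gives $m=\Omega(n\log n)$.

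The main obstacle is justifying the splicing: removing the arc between $p$ and $q$ with equal traces while keeping a consistent execution. Pulses are causally chained around the ring, so equal traces at $p$ and $q$ alone may not suffice. One also needs the pulses \emph{sent} by them to match, and a scheduler that delivers the same sequences on the new ring.

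Since IDs differ, sending behavior can depend on IDs. We would handle this by considering the pair of traces on both ports, that is, the interleaving of the receive events from each side. We would choose the scheduler so that each process's send sequence is a function of its received sequence alone. Equal send sequences across the cut would then glue consistently.

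If this fails for arbitrary IDs, the fallback is to restrict to an ID assignment chosen by a pigeonhole over a large ID space. The goal of that choice is to make the sending behavior of matched processes coincide on the relevant prefixes, paying only a constant factor in the bound.
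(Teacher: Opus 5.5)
There is a genuine gap, and it is at the step you yourself flag as the main obstacle. Per-process receive traces are the wrong invariant, so your intermediate claim that all $n$ processes have pairwise distinct traces has no justification: equal traces at $p$ and $q$ do not let you excise the arc between them. Suppose you delete $p+1,\dots,q$ and wire $p$ directly to $q+1$. The outer arc $q+1,\dots,p-1$ must contain the leader, otherwise the spliced ring has none; you also never say which kept process ends up with a wrong output. That outer arc now interacts with $p$ at both of its ends. So $p$ must emit toward $q+1$ what $q$ emitted in $E$, and absorb from $p-1$ what it absorbed in $E$, all within a single local order. In $E$, however, the outer arc may relay a pulse that $q$ sent across $\{q,q+1\}$ after its $s$-th receive all the way around to $p$, where it arrives as $p$'s $r$-th receive. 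Nothing in the two traces rules out $r\le s$. In that case, in the spliced ring $p$ would have to receive a pulse that causally depends on one it has not yet sent. This obstruction remains even if you record the interleaving of both ports, and even for anonymous processes whose sends are a function of their receives. The ID-pigeonhole fallback does not touch it either, since it only addresses mismatched send behaviour. What is lost is the relative order of events on the two cut edges as seen by the outside world. Your counting step with $2^{k+1}$ strings is fine, but only once you have a notion of trace that supports splicing.

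The missing idea is to take cuts rather than processes as the unit, and to keep one cut fixed. The paper picks a sparse edge $e_L$ (with $M(e_L)\le 2M/n$) whose endpoint $v$ is at distance at least $n/2+1$ from the leader along the longer arc. It then considers the nested intervals $I_t=(v,\dots,v+t)$, which all share $e_L$ as left boundary and avoid the leader. The interface trace of $I_t$ is the word over $\{\Liin,\Liout,\Riin,\Riout\}$ listing, in the global order of $E$, all deliveries on both boundary edges. This is exactly the cross-cut interleaving your per-process traces discard. If $\trace_E(I_{t_1})=\trace_E(I_{t_2})$ with $t_1<t_2$, then $I_{t_2}$ can be replaced by $I_{t_1}$ (Lemma~\ref{lem:splice-prefix}). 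The processes of $I_{t_1}$ keep their IDs and replay their own internal schedule from $E$. Every process outside $I_{t_2}$, in particular the leader, sees the same boundary word, so no assumption on IDs is needed. Averaging gives at least $n/4-1$ intervals whose right edge is also sparse, hence with trace length at most $6M/n$. There are at most $4^{6M/n+1}$ such words, so $M=o(n\log n)$ forces a collision. The leader then has identical histories on rings of sizes $n$ and $n'<n$, which is the contradiction.
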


\subparagraph{Counting with anonymity.}
We next turn to a strictly weaker model: an oriented ring with a unique leader\footnote{We stress that the presence of a predetermined leader is necessary in this setting due to the ring symmetry \cite{angluin1980local}.}, where all processes are \emph{anonymous}. In this setting, deterministic local symmetry-breaking techniques, including the simulation of $\LOCAL$ algorithms that underpin \Cref{thm:Count}, are no longer available.

We show that counting remains possible in this anonymous model, and that it simultaneously enables symmetry breaking. Specifically, our algorithm assigns to each process its clockwise distance from the leader, thereby inducing a unique identifier for every process. As a result, the algorithm not only counts the number of processes but also transforms an anonymous ring with a leader into an ID-equipped ring. This transformation serves as a bootstrapping step that enables the deterministic aggregation and MIS-based algorithms developed earlier.

 \begin{restatable}[Counting with anonymity]{theorem}{countnsqrtn}
There is a deterministic, quiescently terminating algorithm that counts the \ul{size of an anonymous ring} with $O(n^{1.5})$ message complexity, provided that a \ul{unique leader} exists. Moreover, each anonymous process on the ring outputs the \ul{clockwise distance} from the unique leader to itself.
    \label{thm:countnsqrtn}
\end{restatable}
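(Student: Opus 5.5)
We plan a two-phase algorithm with a block decomposition: first locate roughly $\sqrt{n}$ evenly spaced "marked" processes, then finish the count within blocks and on the logical ring of markers. Throughout we use the simulator from the contributions, which realizes a synchronous $\CONGEST[b]$ round on a logical ring (non-participants only relay) with $O(b)$ pulses per process per round, and which terminates quiescently.

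\textbf{Phase 1: guessing and doubling.} The leader does not know $n$, so we run phases $k=1,2,\dots$ with guess $s_k=2^k$. In phase $k$, a counting token is sent clockwise: the leader emits pulses, and each process forwards all but one of them. Concretely, to mark every $s_k$-th process, the leader sends $s_k$ pulses. Each process absorbs one pulse and forwards the rest, and the process that absorbs the last pulse becomes a marker and restarts the count. This alone would cost $\Theta(s_k)$ per marker, hence $\Theta(n s_k)$ per phase, which is too much. Instead we realize distances in unary only inside a block and treat the block count as a logical value.

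The cleaner plan is to take $s\approx\sqrt{n}$ and pay $O(n\sqrt n)$ total through a geometric sum over phases. Phase $k$ costs $O(n\,2^{k/2})$, and we stop at the first $k$ with $2^k\ge n$. The leader detects this when its own unary token of $2^k$ pulses, with each process absorbing one, fails to return, or returns with surplus.

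\textbf{Phase 1, concretely.} In phase $k$ the leader releases $2^k$ pulses clockwise, and every non-leader absorbs exactly one pulse and forwards the remainder. If $2^k\ge n$, exactly $2^k-n+1$ pulses return to the leader, which therefore learns $n$. Every process also learns its clockwise distance: it is the process that absorbed the $d$-th pulse, which it can tell from the number of pulses it saw before absorbing. To make this cost $O(n^{1.5})$ rather than $O(n^2)$, we cannot use unary over the whole ring. We therefore run the unary count only up to $\sqrt{2^k}$, marking every $\lceil 2^{k/2}\rceil$-th process. Each pulse then travels at most $2^{k/2}$ hops before it is absorbed, and each unary block costs $O(2^{k})$ pulses over $2^{k/2}$ processes, so the ring pays $O(n\,2^{k/2})$ in total.

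\textbf{Phase 2: the marker ring.} The marked processes form a logical ring with at most $n/2^{k/2}+1$ participants. On it we run, via the simulator, a binary counting procedure: each marker holds the length of its block in $O(k)$ bits, and the markers pass prefix sums around the logical ring in one traversal. One traversal costs $O(k)$ pulses per hop times the relaying of $n$ processes, that is, $O(nk)$. Within each block, markers then deliver offsets back to their members by a unary pass of $O(2^{k/2})$ pulses per block member. When the total computed is at most $2^k$, the leader's guess is confirmed; otherwise we double. Summing $O(n2^{k/2}+nk)$ over $k\le\log n+1$ gives $O(n^{1.5})$. Termination is broadcast with one extra simulated round, and quiescence follows from the simulator's quiescent termination.

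\textbf{Main obstacle.} The delicate step is detecting an insufficient guess without unbounded circulation, since content-oblivious pulses cannot carry a stop flag. I would handle this with a pulse-conservation argument. The leader sends a fixed number of pulses, each process absorbs at most a bounded amount, and the leader always receives back a well-defined residual. An empty residual signals overflow, and the leader then starts the next phase. Separating phases safely, so that no stale pulses survive, will rely on the same conservation: the leader waits for the exact expected return count before proceeding.
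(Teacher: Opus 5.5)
Your proposal has a genuine gap: several steps rely on detecting the \emph{absence} of pulses, which is impossible in an asynchronous content-oblivious system. The leader can never conclude that its token ``fails to return'' or that the residual is ``empty''. It also cannot know when all $2^k-n+1$ returning pulses have arrived, since that number depends on the unknown $n$. In your marking scheme, the process that absorbs the last pulse of a block cannot know that no further pulse is coming, so it cannot decide to become a marker. Nor can any process read off its distance: it absorbs its first pulse, and the number it forwards afterwards is never known to be final. Every piece of information must be carried by a positive event. The paper does this by explore-bounce-back. The first uncounted process reflects a probing pulse, while counted processes relay the round trip and increment $relayed$, so the temporary leader learns of each success when the pulse returns on port $1$. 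The end of a block is announced by a ring-wide counter-clockwise transfer wave; the unique process probed in the current phase with $relayed=0$ then starts a clockwise confirmation wave and becomes the new temporary leader. Completion is also detected positively: a probe that finds no uncounted process circles the ring and comes back on port $0$.

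Your cost accounting also fails. The stopping test, a full-ring unary token of $2^k\ge n$ pulses, costs $\Theta(n^2)$ by itself. Each simulated round costs $\Omega(n)$ pulses, because the simulator's bit-sending and \texttt{ComputeOR} waves travel around the whole physical ring. Computing offsets on a logical ring of $m$ anonymous markers needs $\Omega(m)$ rounds of neighbor communication, so one traversal costs $\Omega(mn)$, not $O(nk)$. Every phase of yours marks the entire ring, so the count is already exact in every phase and the doubling buys nothing. Phase $k$ has $m\approx n/2^{k/2}$ markers, so the early phases cost $\Omega(n^2)$. This is before paying for the block-to-block handoff, which you never specify and which costs $\Theta(n)$ per block in the paper.

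The missing idea is to let block sizes grow as $1,2,3,\dots$. Phase $i$ probes exactly $i$ fresh processes at cost $O(i^2)=O(n)$, so there are only $\Theta(\sqrt n)$ phases, each paying $O(n)$ for its two waves, and no guess of $n$ is needed. Every process counts the waves, so it knows the phase $myPhase$ in which it was probed. It then computes its distance from $myPhase(myPhase+1)/2$ and its $relayed$ counter, with no prefix sums at all. Only the last leader's partial count is broadcast in binary, using $O(n\log n)$ pulses.
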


3n
\subparagraph{Finding the minimum and computing the multiset of the input.}

When each process $p$ starts with an input $x_p \in \N$, we also consider two notable aggregation problems: finding the minimum input $x_{\min}$ and computing the multiset of inputs, i.e., the set of distinct inputs together with their multiplicities. For these two problems, we develop ad hoc techniques that improve on our bound of \Cref{thm:Count}. Specifically, we show that:

\begin{restatable}{theorem}{minFinding}
In an anonymous ring of $n$ processes with a designated leader where each process has an input $x_p \in \N$, there exists a deterministic, quiescently terminating algorithm  that computes $x_{\min} = \min_{p} \{x_p\}$ using $O(n|x_{\min}|)$ pulses.
\label{th:minFinding}
\end{restatable}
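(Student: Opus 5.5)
The plan is to find $x_{\min}$ by a \emph{unary race} around the ring, organised as a sequence of phases driven by the leader. In phase $k=1,2,\dots$ every process still \emph{active} (not yet known to exceed the minimum) has an input that is at least $k-1$. The phase decides one thing: is there an active process whose input is exactly $k-1$ (or, in the binary refinement below, whose next bit is~$0$)? The answer is the OR of one bit held by each process. We compute it with two clockwise sweeps, each costing $O(n)$ pulses and started by the leader.

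The first sweep is a single token the leader sends clockwise; every process forwards it, and it returns to the leader. This synchronises the start of the phase. The second sweep carries the bit. An OR of one bit can be sent content-obliviously with one token travelling clockwise plus an optional extra token: a process whose bit is~$1$ emits one extra pulse into the stream, and every other process forwards whatever it receives. The leader knows how many tokens it injected, so it learns the OR by counting the pulses that come back. A third sweep broadcasts the OR, in the same way, so that everyone learns the answer. Each process can delimit phases because it counts the pulses on its counter-clockwise port and knows the fixed per-phase protocol. Only a bounded number of extra pulses can be in flight in a phase, so we must check that a process never confuses phases. We also need to check quiescent termination: the leader decides the end and sends a final termination sweep.

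For the phase count to scale with $|x_{\min}|$ rather than with $x_{\min}$, we work bitwise and handle the unknown length first. First determine $\ell=|x_{\min}|$: in phase $j$ the OR asks ``is there a process with $|x_p|\le j$?''. This costs $O(n)$ pulses per phase and stops after $\ell$ phases. Then process the bits of length-$\ell$ inputs from the most significant bit down, for $\ell$ phases. In each phase the OR asks ``does some active process have bit~$0$ here?''. If the answer is yes, the leader broadcasts this, and the active processes whose bit is~$1$ become inactive. Inactive processes only relay. After these phases the broadcast answers spell out $x_{\min}$ for every process. The total is $O(\ell)$ phases of $O(n)$ pulses each, i.e.\ $O(n|x_{\min}|)$.

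The main obstacle is making each phase truly cost $O(n)$ pulses with only anonymous, content-free pulses and asynchrony. A process must infer which sweep each received pulse belongs to, and the OR stream may contain a variable number of pulses. My first attempt is to send the bit stream clockwise and all acknowledgements counter-clockwise, so that the two port counters never interleave. In addition, each process with bit~$1$ adds at most one pulse and only before forwarding the phase's terminating token. Then every process sees the pattern ``some pulses, then a delimiter'' and can parse it locally. If that parsing proves delicate, the fallback is to encode the OR by whether the leader receives the returning token on its clockwise or counter-clockwise port. A process with bit~$1$ ``reverses'' the token, and the leader then re-injects the token to finish the sweep. This keeps every sweep at $O(n)$ pulses.
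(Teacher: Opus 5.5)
Your outer reduction is exactly the paper's. You find $\ell=|x_{\min}|$ with OR queries ``is some input of length at most $j$?'', restrict to inputs of length $\ell$, and fix the bits MSB-first with ``does some active process have a $0$ here?'', deactivating the active processes that have a $1$. In the paper this is just $O(|x_{\min}|)$ calls to the already-established \texttt{ComputeOR} (Theorem~\ref{thm:computeor-correct}: exactly $3n$ pulses, every process returns the OR, composable ending property). So all the difficulty sits in the OR primitive, and that is where your proposal has a genuine gap.

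Your primary mechanism (each process with OR input $1$ injects an extra pulse into a clockwise stream, and the leader counts what returns) fails for two reasons. First, if every such process injects and everyone forwards everything, the stream after $k$ injectors holds $k+1$ pulses. A sweep then costs up to $\Theta(nk)$ pulses, which is $\Theta(n^2)$ when, e.g., all inputs are equal. Suppressing the injection when an extra pulse is already present would require knowing the stream's contents. Second, and more fundamentally, a content-oblivious link conveys only a count of identical pulses. Neither an intermediate process nor the leader can recognise the ``delimiter'', or tell ``no further pulse'' from ``a further pulse still in transit''. The leader can certify \emph{true} once a second pulse arrives, but it can never certify \emph{false}. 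For the same reason, counting pulses against a ``fixed per-phase protocol'' cannot delimit phases when the per-phase count depends on the inputs. Broadcasting the answer ``in the same way'' inherits the same problem. The paper's \texttt{CCWBitSending} shows what count-encoding really requires: a separate leader-initiated wave whose forwarding is gated on acknowledgements.

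Your fallback, encoding the OR in the port on which the token returns to the leader, is the right idea and is precisely how \texttt{ComputeOR} works. The leader sends a pulse counter-clockwise (clockwise if its own input is $1$), false processes forward it, the first true process reflects it, and the leader reads the answer from the arrival port. Your one-sentence version leaves out what the proof actually needs:
\begin{itemize}
\item How the processes on both arcs learn the answer. In the paper the leader either sends a second counter-clockwise wave, so non-leaders output \emph{false} exactly when both their pulses arrive on port~$1$. Or it sends a clockwise pulse around the unvisited arc, which the reflecting process bounces back.
\item The fact that every process then sends exactly two pulses regardless of the inputs, with at most one pulse in transit at any time. This is what makes local counting and phase separation sound.
\item A final clockwise barrier wave giving the composable ending property. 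It lets the length-finding loop, the bit loop, and consecutive OR calls run back-to-back via Theorem~\ref{thm:compose-any}.
\end{itemize}
Commit to the reversal scheme with these details, or simply invoke \texttt{ComputeOR}, and your argument becomes the paper's: $O(|x_{\min}|)$ OR calls of $O(n)$ pulses each.
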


\begin{restatable}{theorem}{multisetComputation}
In a ring of $n$ processes, with distinct $O(\log n)$-bit IDs and designated leader, where each process has an input $x_p \in \N$, there exists a deterministic, quiescently terminating algorithm  that computes the multiset $\{\{x_p\}\}$ of inputs using $O(nD(B+\log^2 n))$ pulses where $B= \max_p \{|x_p|\}$ and $D$ is the number of different inputs. 
\label{th:multisetComputation}
\end{restatable}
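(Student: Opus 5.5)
The plan is to build the multiset one distinct value at a time, in increasing order, using $D$ sequential phases. Each phase finds the smallest input not yet accounted for and then counts how many processes hold it.

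\textbf{Preprocessing.} I first run the algorithm of \Cref{cor:counting}, so that every process learns $n$. This costs $O(n\log^2 n)$ pulses.

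\textbf{Phases.} Each process keeps a flag \emph{active}, initially true. The leader keeps a running total $s$ of multiplicities found so far, initially $0$. Phase $j$ has three steps.
\begin{enumerate}
\item The algorithm of \Cref{th:minFinding} is run on the logical ring of active processes, with inactive processes acting only as relays. This yields $m_j=\min\{x_p : p \text{ active}\}$ at all processes, using $O(n|m_j|)=O(nB)$ pulses.
\item \Cref{thm:Count} is applied with $f(X)=|\{x\in X : x=1\}|$, where each process sets its element to $1$ if it is active with $x_p=m_j$ and to $0$ otherwise. This yields the multiplicity $c_j$ of $m_j$ at every process. Since $\lambda=O(\log n)$ and $\beta=O(\log n)$, this step costs $O(n\log^2 n)$ pulses.
\item Every process with $x_p=m_j$ becomes inactive, and every process updates $s\leftarrow s+c_j$. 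Because all processes know $n$ and every $c_j$, all of them can test $s=n$ locally. When $s=n$, every process terminates with output $\{(m_1,c_1),\dots,(m_j,c_j)\}$; otherwise the next phase starts.
\end{enumerate}
Each phase removes exactly one distinct value, so there are exactly $D$ phases. The total cost is $O(n\log^2 n)+D\cdot O(nB+n\log^2 n)=O(nD(B+\log^2 n))$.

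\textbf{Composition.} Each subroutine terminates quiescently, and at its end every process knows the output. So every process knows locally when one subroutine has finished and the next begins, and the pulses of different subroutines cannot be confused. To make the round boundaries unambiguous, I would have the leader initiate each subroutine, as in the earlier results.

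\textbf{Main obstacle: step 1 with relays.} \Cref{th:minFinding} is stated for a ring in which every process holds an input, but here inactive processes must not influence the minimum. My first attempt is to run the min-finding algorithm over the logical ring of active processes with inactive ones as pure relays, in the same way the simulator supports logical rings. The cost would then still be $O(n|m_j|)$, because every pulse is forwarded at most once around the physical ring per logical step. If the min-finding algorithm does not admit relays directly, my fallback is to let inactive processes behave as holders of an ``infinite'' value that never wins any bit comparison. For this I would check that its pulse cost is governed by $|x_{\min}|$ and not by the inputs of non-minimal processes, which is what the bound $O(n|x_{\min}|)$ suggests.
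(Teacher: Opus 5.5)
Your proposal is correct and is essentially the paper's proof. The paper also iterates \texttt{MinFinding} on the still-eligible processes, counts the multiplicity of the current minimum by aggregating indicator bits with the algorithm of \Cref{sec:colorcount}, and then removes that value. Your ``fallback'' for step~1 is exactly how Algorithm~\ref{algo:minfinding} works: it takes the starting active set as a parameter, and inactive processes simply feed \texttt{false} to every \texttt{ComputeOR} (a non-leader doing so is literally a pure relay), so the cost is $O(n|m_j|)$ with $m_j$ the minimum over active processes.

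Two minor points differ from the paper. First, you detect termination by learning $n$ up front and stopping when the multiplicities sum to $n$, whereas the paper runs \texttt{ComputeOR}(\texttt{eligible}) before each phase; both stay within the bound. Second, the glue between subroutines should be justified by the composable ending property and \Cref{thm:compose-any}. Each process locally knowing that it has finished is not enough, because that alone does not stop a fast process's next-subroutine pulses from reaching a slower neighbor.
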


A summary of our results pertaining Counting and related problems is in Table \ref{table:nicetable}. All the algorithms that we  present are {\em composable}, that is they can be executed back-to-back.

\subparagraph{Algorithm simulation for general 2-edge-connected networks.}
Finally, we show that our simulator for content-oblivious rings extends to arbitrary synchronous message-passing algorithms on general 2-edge-connected networks. The main insight is to exploit a classical structural result from graph theory, which guarantees that the edges of any 2-edge-connected graph can be covered by three even-degree subgraphs, each of whose connected components admits an Eulerian cycle~\cite{alon1985covering}.

After a preprocessing phase that computes such a decomposition, we apply our ring simulator along these Eulerian cycles, thereby enabling the simulation of message-passing communication on the original network. As in the ring setting, the resulting simulator is asymptotically optimal: simulating a single round of the $\CONGEST[b]$ model incurs only $O(b)$ pulses of communication per edge.

\begin{restatable}[Algorithm simulation for 2-edge-connected networks]{theorem}{thmSim}
    Let $\mathcal{A}$ be an algorithm in a \ul{message-passing, synchronous, and anonymous} 2-edge-connected network $G = (V,E)$, where each process can send $b$ bits per round on each edge. There is a quiescently terminating \ul{content-oblivious algorithm on a ring with a unique leader} that simulates $\mathcal{A}$, with the following specifications.
    \begin{itemize}
        \item There is a pre-processing step with $O(n^8\log n)$ pulses, where $n$ is the number of processes in $G$.
        \item After that, simulating a round of $\CONGEST[b]$ requires sending $O(b)$ pulses along each edge, i.e, with a constant multiplicative overhead. 
    \end{itemize}
   \label{thm:thmSim}
\end{restatable}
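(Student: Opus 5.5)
We will simulate $\mathcal{A}$ by reducing communication on $G$ to communication along closed walks, on which the ring simulator can run. The statement's phrase ``on a ring with a unique leader'' we read as: the network is 2-edge-connected, a unique leader exists, and the ring simulator is applied along Eulerian circuits. The plan has three stages.

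\textbf{Stage 1: structure of the cover.} By the covering theorem of Alon and Tarsi~\cite{alon1985covering}, the edge set $E$ is covered by three even subgraphs $H_1,H_2,H_3$. Every connected component of each $H_j$ has an Eulerian circuit. Fix, at each process $v$ and for each $j$, a pairing of the $H_j$-edges at $v$ into ``in/out'' pairs that realizes such a circuit. Then each component of $H_j$ behaves like a logical oriented ring: a vertex of degree $2k$ in $H_j$ appears $k$ times, and these virtual copies are simulated locally by $v$. Each edge of $G$ lies in at most three subgraphs, so a pulse budget that is $O(b)$ per edge per subgraph is still $O(b)$ per edge overall.

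\textbf{Stage 2: preprocessing.} This is the main obstacle. A leader, and the number $n$, let us run the Censor-Hillel et al.\ simulator of asynchronous message passing on $G$. Its cost is $O(n^3 b' + n^3\log n)$ pulses per $b'$-bit message. On top of it we run a centralized computation:
\begin{itemize}
\item the leader gathers the full topology (with port numbers) by flooding or DFS, using $O(n^2)$ messages of $O(n\log n)$ bits each;
\item it computes the three even subgraphs, the Eulerian pairings, and one leader per component, with a fixed traversal orientation;
\item it sends each process its local pairings.
\end{itemize}
Multiplying $O(n^2)$ messages of size up to $O(n^2\log n)$ by the per-message simulation overhead gives a polynomial bound. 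The careful accounting we would aim for is $O(n^2)\cdot O(n^3\cdot n\log n)$ or similar, and we would need to tighten the message sizes to match $O(n^8\log n)$ exactly. Two issues need care here. First, because $\mathcal{A}$'s processes are anonymous, the leader must assign temporary port-based labels; a DFS from the leader provides them. Second, the preprocessing must terminate quiescently, so the switch to the ring simulators on the $H_j$ is unambiguous. Each process learns from its final pulse counts that preprocessing has ended, which guarantees this.

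\textbf{Stage 3: simulating rounds.} For each component of each $H_j$, with its designated leader, we run the paper's ring simulator along the Eulerian circuit, treating virtual copies as distinct ring nodes. For a round of $\mathcal{A}$, each edge $e=\{u,v\}$ is assigned to one $H_j$ containing it. In that logical ring, $u$ and $v$ are consecutive along $e$, so a neighbor-to-neighbor exchange over that ring link carries the $b$-bit messages in both directions. Copies and edges not involved send empty or padded messages.

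The three subgraphs are run either sequentially or interleaved on distinct edge-subgraph instances. Pulses of different $H_j$ on a shared edge must be disambiguated. We would handle this by running $H_1,H_2,H_3$ in sequence within each round, using the quiescent termination of each simulated ring round as a barrier. Each process detects completion locally, after its own $O(b)$ exchanges, as the ring simulator provides.

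Since each simulated ring round costs $O(b)$ pulses per process, and hence per ring link, the total cost per round is $O(b)$ pulses per edge. This gives the claimed constant multiplicative overhead.
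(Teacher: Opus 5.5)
This is essentially the paper's proof. In both, the Censor-Hillel et al.\ machinery is used once to make the topology known, $E$ is covered by three even subgraphs (Alon--Tarsi), and the ring simulator of Theorem~\ref{thm:msgxchangering} runs along the resulting closed walks. The per-round cost is $O(b)$ pulses per edge per walk, with each edge in at most three walks.

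Your one real variation is in how the walks are formed and scheduled:
\begin{itemize}
\item \emph{Your route.} You run the simulator directly on the Eulerian circuit of each component of $H_j$, with a vertex of $H_j$-degree $2k$ hosting $k$ virtual ring nodes. You schedule $H_1,H_2,H_3$ sequentially and run the components of each $H_j$ in parallel.
\item \emph{The paper's route.} It splits each even subgraph into simple cycles, and each vertex processes the cycles containing it in a fixed global order. So every ring consists of distinct processes and the ring theorems apply verbatim.
\end{itemize}
Your version needs one extra (easy) remark: the virtual copies of a vertex have disjoint port pairs, so the host can dispatch each pulse by port, and each copy sees exactly what a standalone ring node would. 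In return it needs only three sequential ring executions per round. You also designate the per-ring leaders explicitly, which the paper leaves implicit.

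Three points need fixing.

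\emph{Preprocessing accounting.} The $O(n^8\log n)$ term is the cost of building the Robbins cycle, i.e., the initialization step of the Censor-Hillel et al.\ simulator, and your count omits it. The topology gathering you run on top of it costs at most $O(n^2)\cdot O(n^3\cdot n^2\log n)=O(n^7\log n)$, even with your crude message sizes. The paper gets $O(m\cdot n^3\log n)$ by reporting each edge once along the $O(n^3)$-length Robbins cycle. So nothing needs to be tightened: the target bound comes from the setup cost you left out.

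\emph{Knowledge of $n$.} The statement does not give processes the value of $n$, and the paper's preprocessing does not use it. Drop that assumption.

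\emph{Separating consecutive ring executions.} Quiescent termination of a ring execution is not by itself a barrier. Processes leave a ring execution at different times, so a neighbor that has already moved on to $H_2$ may send pulses on a shared edge while you are still in $H_1$. What makes the sequential schedule safe is the composable ending property of \texttt{MsgExchange} (Theorem~\ref{thm:msgexchange-correct}), via the argument of Theorem~\ref{thm:compose-any} applied edge by edge. By FIFO, the termination-wave pulse precedes any later pulse on the same edge. The terminator never waits on its port $d$ after launching the wave. With virtual copies, apply this per copy. Once the copy owning an edge has returned, later pulses on that port must be buffered for the next subgraph, even while other copies of the same process are still running.
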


\begin{table}[h!]
\centering
\caption{Summary of the results for the Counting problem and Input multiset problem presented in the paper. $n$ is the number of processes in the system, $B$ indicates the maximum bit len of an input and $D$ the different number of initial inputs.}
\label{table:nicetable}
\rowcolors{2}{lightgray}{white} 
\begin{tabular}{@{}l l l l l@{}}
\toprule
\rowcolor{lightblue}
\textbf{Section} & \textbf{Problem} & \textbf{Leader} & \textbf{IDs} & \textbf{Message Complexity} \\ 
\midrule
\S\ref{sec:nradixn} & Counting & Yes (necessary) & No  & $O(n^{1.5})$ \\

\S\ref{sec:colorcount} & Counting & Yes & Yes & $O(n\cdot \log^{2}n)$ \\

\S \ref{sec:multi} & Inputs Multiset & Yes & Yes   & $O(nD(B+\log^{2} n))$ \\

\S \ref{sec:lb} & Counting & Yes & Yes  & $\Omega(n \cdot \log n)$ \\

\bottomrule
\end{tabular}
\end{table}

\section{Related work}
The investigation of CO models began with Censor-Hillel et al.~\cite{censor2023distributed}, who showed that the presence of a leader and a $2$-edge-connected topology suffices to simulate message-passing algorithms. Their simulator includes an initialization step requiring $O(n^8 \log n)$ pulses, and simulating the transmission of a $b$-bit message needs $O(n^3 b + n^3 \log n)$ pulses.

Other work on the CO model has mainly focused on the \emph{Leader Election} (LE) problem~\cite{frei2024content,cha2025content,chalopin_et_al:LIPIcs.DISC.2025.51}.
Frei et al.~\cite{frei2024content} were the first to investigate LE in rings, giving a terminating LE algorithm for oriented rings that uses $O(n\, \id_{\max})$ pulses, where $\id_{\max}$ denotes the maximum identifier in the network, as well as a stabilizing algorithm for unoriented rings.
This result was extended by Chalopin et al.~\cite{cha2025content}, who presented a terminating LE algorithm for unoriented rings using $O(n\, \id_{\max})$ pulses and, assuming a known upper bound $N$ on the network size, an LE algorithm for general $2$-edge-connected networks using $O(m\, N\, \id_{\min})$ pulses.
The efficiency of LE in non-uniform oriented rings was studied by Chalopin et al.~\cite{chalopin_et_al:LIPIcs.DISC.2025.51}, who gave an algorithm requiring $O(N \, \log \id_{\min})$ pulses.

Censor-Hillel~et~al.~\cite{censor2023distributed} established that no nontrivial task in the CO model can be solved in networks that are not 2-edge-connected. Recently, Chang, Chen, and Zhou~\cite{chang_et_al:LIPIcs.ITCS.2026.36} showed that certain meaningful tasks remain solvable in such networks, without contradicting this impossibility result. In particular, they proved that, given that the network topology is known to all processes, terminating leader election in trees is possible if and only if the tree is not edge-symmetric.

The pursue of message efficiency has been a longstanding overarching research topic in distributed computing. Here we limit to briefly describe the message efficiency concepts in the $\LOCAL$ and $\CONGEST$ model in order to highlight the radical differences with our environment.  

In the $\LOCAL$ model, the relationship between round complexity and message complexity has been investigated in \cite{bitton2019message,dufoulon_et_al:LIPIcs.OPODIS.2024.26}, leading to the conclusion that it is always possible to design algorithms that are simultaneously round-optimal and message-optimal, up to polylogarithmic factors.

In the $\CONGEST$ model, a line of work has developed message-complexity bounds for classic global graph problems, often yielding tight (or near-tight) results. This includes MST   \cite{elkin2020simple,ghaffari_et_al:LIPIcs.DISC.2018.30,king2015construction,pandurangan2019time}, and broader families of optimization problems~\cite{dufoulon_et_al:LIPIcs.ITCS.2024.41}. In addition to these problem-specific bounds, several papers focus explicitly on the relationship between round and message trade-offs~\cite{gmyr_et_al:LIPIcs.DISC.2018.32, haeupler2018round}. 
It is clear that techniques developed in the aformentioned work do not apply to our severely restricted model. 

A closely related strand of work investigates the \emph{beeping} communication model~\cite{ CASTEIGTS201920, CASTEIGTS201932,cornejo2010deploying,VacusZiccardi2025}. In this abstraction, computation proceeds in discrete rounds and, at each round, a process chooses between two actions: it either broadcasts an undifferentiated ``beep'' to its neighborhood or it stays silent and listens. A key feature is that the framework assumes global synchrony. Our model lacks this kind of shared temporal reference, so such timing-based encodings are unavailable. For this reason results in that setting do not extend to our context.

\section{System model}\label{sec:model}
We consider a distributed system of $n$ processes $P = \{p_0, p_1, \ldots, p_{n-1}\}$ that communicate exclusively by message passing. The processes are arranged either in a ring $C$ or in a general 2-edge-connected graph $G = (V,E)$. In a ring $C$, each process has exactly two communication ports, labeled $0$ and~$1$. In a general 2-edge-connected graph $G$, each process corresponds to a vertex in $V$ and has one communication port for each incident edge. Each port connects to exactly one neighboring process and supports bidirectional communication.

The cost of an algorithm is measured by its \emph{message complexity}, defined as the total number of messages sent by all processes during an execution.

\subparagraph{IDs and Anonymity.}
We assume that each process is equipped with a distinct identifier drawn arbitrarily from $\mathbb{N}$. Formally, for every $j \in [0,n-1]$, let $\myID_j$ denote the identifier of process $p_j$. We assume that there exists a distinguished leader $p_{\ell}$.

In some sections we assume an {\em anonymous system} with a distinguished leader, i.e., all processes except the leader start in the same local state.

\subparagraph{Asynchronous System.}
The underlying system is {\em asynchronous}. Messages may experience arbitrary but finite delays, with no known upper bound, and no message is ever dropped. If multiple messages are delivered to a process at the same instant, they are placed into a local buffer from which the process may retrieve them at arbitrary times.

The time needed for a local computation at a process is also arbitrary but bounded, and there is no global clock or shared notion of time. In our correctness and complexity arguments, we adopt the standard abstraction that all non-blocking local actions (i.e., all actions except receiving a message) take zero time. 

\subparagraph{Content-Oblivious Algorithms.}
We focus on {\em content-oblivious algorithms}, in which the payload of messages is irrelevant: the only information conveyed is the bare fact that a message was sent and a process can only see the ports used to receive the message. One can think of each message as an empty string, or equivalently as a {\em pulse}~\cite{frei2024content}.

This abstraction captures an {\em adversarial} environment that may arbitrarily corrupt message contents during transmission. However, the adversary is not allowed to remove messages from the network or to insert extra messages. All processes are fault-free and never crash.  Since messages carry no payload, reordering pulses on a
link cannot affect the behavior of a content-oblivious algorithm. Hence we may
assume FIFO links without loss of generality. 

\subparagraph{Oriented Rings.}
Throughout the paper, we work with {\em oriented} rings. This means that all processes agree on a common direction of {\em clockwise} (CW) versus {\em counter-clockwise} (CCW) around the ring. Concretely, we consider a ring $(p_0, p_1, \ldots, p_{n-1})$ and assume that, for every $j \in [0,n-1]$: at process $p_j$, port~$1$ is connected to $p_{j+1}$, and port~$0$ is connected to $p_{j-1}$, where indices are taken modulo $n$. A message is said to travel clockwise if it is sent on port~$1$ and received on port~$0$ (i.e., it goes from $p_i$ to $p_{i+1}$). Symmetrically, a message travels counter-clockwise if it is sent on port~$0$ and received on port~$1$. The assumption of oriented ring is not restrictive in our model, as the leader can orient the entire ring with a trivial pre-processing step in which a single pulse is sent across the ring.

\subparagraph{Quiescent Termination.}
A \emph{configuration} consists of the local states of all processes together with the multiset of messages currently in transit on communication links. A configuration is \emph{quiescent} if all processes are in a halting state and no messages are in transit. An execution is \emph{quiescently terminating} if it reaches a quiescent configuration after finitely many steps, and an algorithm is \emph{quiescently terminating} if every execution from an initial configuration has this property. A process \emph{quiescently terminates} if, once it enters the halting state, it neither sends nor receives any further messages.




\subparagraph{The LOCAL and CONGEST models.}
In this work, we are interested in simulating classical synchronous message-passing models within an asynchronous content-oblivious framework. In the $\LOCAL$ model~\cite{linial1992locality}, computation proceeds in synchronous rounds, and in each round every process may send a message of unbounded size along each of its incident edges. When communication is restricted to messages of at most $b$ bits per edge per round, the model is referred to as the $\CONGEST[b]$ model~\cite{peleg2000distributed}.


\section{Composability}
\label{sec:composability}

We will often invoke multiple algorithms back-to-back (e.g., run
algorithm~$A$ and, as soon as it returns, start algorithm~$B$).
Since all pulses are indistinguishable, some care is needed in an asynchronous
system: different processes may return from~$A$ at different times, and a process
still executing~$A$ must not confuse pulses from~$B$ with pulses from~$A$.
We handle this using a simple sufficient condition, called the \emph{composable
ending property}:

\begin{definition}[Composable ending property]
\label{def:composable-ending}
Consider a CO algorithm $A$ executed by all processes of an oriented ring with FIFO
links.
We say that $A$ has the \emph{composable ending property} if in every execution:
\begin{itemize}
\item $A$ is quiescently terminating.
\item
There exists a unique \emph{terminator} process that returns
last from~$A$.
\item
There exists a direction $d \in \{0,1\}$ such that $A$ ends with a
\emph{termination wave} initiated by the terminator on port~$d$.
From the moment the terminator sends this pulse until it terminates the execution of~$A$, it never
waits to receive a pulse on port~$d$.
The terminator returns immediately after receiving the termination wave back on
port~$1-d$.
Every other process returns from~$A$ immediately after receiving (on port~$1-d$)
the termination-wave pulse and forwarding exactly one pulse on port~$d$.
\end{itemize}
\end{definition}

\begin{theorem}
\label{thm:compose-any}
If algorithm $A$ has the composable ending property, then it can be composed
with any other algorithm $B$.
\end{theorem}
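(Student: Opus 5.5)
The plan is to argue link by link. Fix any execution of the composition ``run $A$, then on return run $B$''. We want to show two things: every pulse received by a process while it is still executing $A$ is a pulse sent under $A$, and every pulse that is $B$-related at the receiver was sent under $B$. We also need $B$ to behave as if started in a clean configuration. Because links are FIFO and pulses are indistinguishable, it suffices to show two properties for each directed link $(u,v)$. First, all $A$-pulses on $(u,v)$ are received by $v$ before $v$ returns from $A$, so no $A$-pulse is left in transit to be mistaken for a $B$-pulse. Second, no $B$-pulse arrives on $(u,v)$ while $v$ is still in $A$ and waiting on that port. Since $u$ sends $B$-pulses only after it returns from $A$, and links are FIFO, it is enough to show that whenever $u$ returns from $A$ before $v$ does, either $v$ never again waits on that port during $A$, or the pulses $v$ still expects there were already sent by $u$ and precede any $B$-pulse.

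The key step is to use the termination wave. Number the processes along direction $d$ starting at the terminator $t=q_0$, so the wave goes $q_0\to q_1\to\dots\to q_{n-1}\to q_0$. Each $q_i$ with $i\ge 1$ returns right after receiving the wave pulse on port $1-d$ and forwarding one pulse on port $d$. The terminator returns only when the wave comes back. Hence return times are ordered: $q_1$ returns first, then $q_2$, and so on, with $t$ last, which matches the uniqueness of the terminator.

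We then check the two link types.

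\emph{Links in direction $d$, from $q_i$ to $q_{i+1}$.} The last pulse $q_i$ sends in $A$ is the wave pulse, and $q_{i+1}$ returns immediately upon receiving it. By FIFO, all earlier $A$-pulses on this link were received before it. Any $B$-pulse from $q_i$ is sent after $q_i$ returns, so it follows the wave pulse and reaches $q_{i+1}$ only after $q_{i+1}$ has returned. Thus both properties hold. The case $q_{n-1}\to t$ is identical.

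\emph{Links in direction $1-d$, from $q_{i+1}$ to $q_i$.} Here the receiver $q_i$ may return before the sender $q_{i+1}$. For $i\ge1$, $q_{i+1}$ is still in $A$ when $q_i$ returns. We need that $q_{i+1}$ sends no further $A$-pulses toward $q_i$ after $q_i$ returns, since otherwise such a pulse would be read as a $B$-pulse. This is where quiescent termination of $A$ is used. In the execution of $A$ alone, every $A$-pulse is eventually received during $A$. A receiver acts on a pulse only while executing $A$, and $q_i$ neither sends nor receives after returning. So all $A$-pulses on this link must have been received before $q_i$ returned, and therefore sent before then. For the terminator's side, the condition ``never waits on port $d$'' after launching the wave ensures $t$ consumes no $B$-pulse arriving from $q_1$ in direction $1-d$ while still in $A$. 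The symmetric argument covers $q_{n-1}$ receiving from $t$.

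I expect the main obstacle to be making this last step rigorous. The difficulty is transferring the quiescence of $A$ run in isolation to the composed execution, because in the composition $B$-pulses could in principle alter $A$'s behaviour. I would handle this by induction on events. As long as no process in $A$ has consumed a $B$-pulse, the prefix of the composed execution, restricted to $A$-events, is a valid execution of $A$. Quiescence and the wave structure then show that the first potential confusion is impossible. Finally, once every process has returned, no $A$-pulse is in transit, so $B$ starts from an initial configuration modulo asynchronous start times. Asynchronous start times are harmless, since a process that has not started $B$ simply buffers the pulses it receives.
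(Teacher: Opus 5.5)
Your proposal is correct and follows essentially the same route as the paper. Both arguments rest on the same three ingredients. Quiescent termination of $A$ ensures no stray $A$-pulse is ever consumed in $B$. FIFO together with the termination wave ensures a non-terminator receives its predecessor's wave pulse, and hence returns, before it could consume any $B$-pulse from that predecessor. The terminator's not waiting on port $d$ after launching the wave handles early $B$-pulses from $q_1$.

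Your explicit return ordering $q_1,\dots,q_{n-1},t$ and your induction on events go slightly further than the paper. The induction justifies applying the properties of $A$ run alone inside the composed execution, a step the paper simply asserts.
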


\begin{proof}
Fix any execution of the composed algorithm (run $A$, then run $B$).
By the quiescent termination condition, once a process starts $B$ it will never
again receive any pulse of~$A$, so $A$-pulses cannot be mistaken as $B$-pulses.

It remains to argue that a process still executing $A$ cannot consume a pulse
sent by a neighbor that has already started $B$.
Let $d$ be the direction given by Definition~\ref{def:composable-ending}.
Consider any process $p$ different than the terminator that is still in~$A$.
Let $p^{-}$ be the predecessor of $p$ along the termination-wave direction, so
that pulses sent by $p^{-}$ on port $d$ are received by $p$ on port~$1-d$.
Process $p^{-}$ starts $B$ only after returning from $A$, hence only after it has
sent the termination-wave pulse to $p$ on this link.
By FIFO, $p$ receives that termination-wave pulse before any later pulse on the
same link, and after receiving it, $p$
returns from~$A$.
Therefore, $p$ cannot consume any $B$-pulse while still executing $A$.

Finally, consider the terminator $t$.
The first process to return from $A$ is the neighbor reached from $t$ by
following port~$d$, and it may start $B$ long before $t$ starts $B$.
Any pulse it sends to $t$ is delivered to $t$ on port~$d$.
By Definition~\ref{def:composable-ending}, after initiating the termination wave
the terminator does not receive on port~$d$, so these pulses cannot be consumed
during $A$ and are buffered until $t$ starts~$B$.
\end{proof}

\section{Counting in anonymous ring with a leader using $O(n^{1.5})$ messages}\label{sec:nradixn}

In this section, we describe an algorithm that counts the size of an anonymous ring, provided the existence of a leader process. Additionally, each process learns its position on the ring eventually: formally, on a ring of size $n$, each process $p$ eventually outputs $n$, the ring size, and an integer $d\in [0,n]$ equal to the clockwise distance from the unique leader to $p$. 

\subparagraph{A naive $O(n^2)$ algorithm} One may immediately come up with the following $O(n^2)$-pulses algorithm, following a ``explore-bounce back'' motif: The leader process sends one pulse clockwise at initiation, and whenever receiving a counter-clockwise pulse. For a non-leader process, it reflects the first clockwise pulse (absorb the clockwise pulse, and send back a counter-clockwise pulse), and behaves as repeater thereafter. The first clockwise traversal indicates finishing counting the ring, and the ring size equals the number of counter-clockwise pulses the leader has received so far. The leader now triggers a counter-clockwise pulse traversal to terminate the counting. The leader can then utilize a simple procedure to broadcast the size count (see \Cref{algo:sndn,algo:rcvn}).

However, it becomes significantly expensive to explore new processes in the later stage of the algorithm. On average, an ``explore-bounce back'' iteration would travel $\Theta(n)$ in distance, hence resulting in the $O(n^2)$ complexity. Our objective in the following presentation is to significantly reduce the maximum distance traveled by ``explore-bounce back'' iteration, by breaking the algorithm into phases.

\countnsqrtn*

The pseudocode procedures for non-leader and leader processes are presented in \Cref{algo:nonleader,algo:leader}, while \Cref{algo:sndn,algo:rcvn} is invoked to communicate the ring size after counting is finished. Upon initiation, the leader process invokes Leader($phase=1$), while non-leader processes invoke NonLeader($phase=1,counted=false,relayed=0$).

\subparagraph{Technical overview} The algorithm progresses in phases, starting from phase 1. In phase $i$, a temporary leader probes $i$ consecutive processes in its \emph{clockwise} neighborhood and transfers temporary leadership to the process with clockwise distance $i$ from it, i.e., the last process probed in this phase. The transfer to leadership is performed by the \emph{counterclockwise} channel, which distinguishes itself from a probing pulse and is visible to all processes on the ring; therefore, all processes progress synchronously in rounds.

\subparagraph{Probing clockwise neighbors in phase $i$} To probe $i$ consecutive clockwise neighbors, a temporary leader $r$ sends and receives $i$ \emph{probing pulses} (Line 3-6 of \Cref{algo:leader}) on its port 1. A process yet to be probed is always executing the Non-leader algorithm (\Cref{algo:nonleader}) and has variable \textit{counted} set to \textit{false}. When a process in the $i$-clockwise neighborhood of $r$ receives a probing pulse from port 0, it reflects the first pulse and sets \textit{counted} to \textit{true} to indicate itself as already probed (Line 3-5 of \Cref{algo:nonleader}). The newly-probed process writes to its global variable $myPhase$ the current phase number; thereafter, it acts as a repeater for any future incoming pulse from port 0, plus the next immediate pulse from port 1 (Line 7-11 of \Cref{algo:nonleader}), while incrementing its variable $relayed$ whenever it relays a round-trip. Therefore, the $j$th clockwise pulse sent by the temporary leader $r$ is reflected by its $j$th clockwise neighbor, before returning to $r$. Upon $r$ finishing receiving $i$ pulses, all processes in $r$'s $i$-clockwise neighborhood have $counted = true$, among which the $i$th clockwise neighbor is the only process with $relayed = 0$. The $relayed$ variable therefore captures the number of processes probed in the same phase as, but after the variable holder.

\subparagraph{Transfering leadership} The temporary leader $r$ is now ready to transfer leadership to its $i$th clockwise neighbor, which we alias as $r'$. First, $r$ sends a pulse on port 0, and receives a pulse on port 1 (Line 12-13 of \Cref{algo:leader}). This counter-clockwise pulse traversal, which we call the \emph{transfer pulse}, triggers Line 12-13 of \Cref{algo:nonleader} at all other processes on the ring. In particular, $r'$ is the only process that enters the branch of Line 14 in \Cref{algo:nonleader}, while all other non-leader processes enter Line 19 of \Cref{algo:nonleader}. At this moment, $r'$ is ready to become the new leader. It sends a pulse on port 1 and receives a pulse on port 0. This clockwise pulse traversal, which we call the \emph{confirmation pulse}, makes all other non-leader processes enter Line 21; therefore, all proceed to the next phase. This confirmation pulse also travels through the previous leader (Line 14-15 of \Cref{algo:leader}), which makes the old leader also proceed to the next phase as a non-leader.

\subparagraph{Detecting the completion of probing all processes} Define the original leader (before any leadership transfer) as $r_1$, similarly, the $i$th temporary leader as $r_i$. Assume now all processes have been probed ($counted = true$), and the temporary leader $r_i$ is about to probe $r_1$ again. This very probing pulse of $r_i$, unlike any previous probing pulses, will travel around the ring in a clockwise direction before arriving at port 0 or $r_i$. To see this, notice every process except $r_i$ triggers Line 7-9 of \Cref{algo:nonleader} as $counted = true$. Hence $r_i$ learns that all processes on the ring have been probed, and performs Line 8-9 of \Cref{algo:leader}, sending 3 pulses traversing the ring in counter-clockwise direction. The first counter-clockwise traversal invokes Line 10-11 of \Cref{algo:nonleader} at all other processes, the second traversal invokes Line 12-13,19 of \Cref{algo:nonleader}, while the third traversal invokes Line 25-26. All processes except $r_i$ are thus informed of the completion of counting and enter RcvCount() to receive the counting result.

\subparagraph{Broadcasting the counting result} As all other processes are aware of the current phase $i$, the last leader $r_i$ only needs to communicate $count$ encoded in binary for all processes to learn the ring size and their respective clockwise distance to the leader. Note that for the communication of $count$ to terminate, $r_i$ needs to send a dedicated $\bot$ symbol at the end of its transmission. Therefore, $r_i$ can use the orientation of 2 consecutive traversals, which has up to 4 possibilities, to encode an alphabet of size 3 ($\{0,1,\bot\}$).

\subparagraph{Ring size} Each process calculate ring size $n = \frac{i(i-1)}{2} + count +1$. The $i-1$ complete phases count $\frac{i(i-1)}{2}$ processes, plus $count$-many successfully reflected probing pulses in phase $i$, plus one - the leader of the $i$-th phase, $r_i$.

\subparagraph{Distance from leader} First, notice that for all processes except the last leader, $relayed-1$-many (for the last leader, $relayed$-many) other processes are counted after itself in the same round. The minus $1$ is necessary due to the counting completion testing: the last leader introduced an additional clockwise + counterclockwise traversal (Line 7-8 of \Cref{algo:leader}), which resulted in an increment of $relayed$ for all processes except the last leader. 

Now, define $k = relayed$ for the last leader, and $k = relayed-1$ for every other process. Depending on the $myPhase$ variable of the process, there are two cases: 

\begin{itemize}
    \item If $myPhase < i$: $d = \frac{myPhase(myPhase+1)}{2} - k -1$. This is because the phases from $1,2,\ldots,myPhase$ are known to have completed. In particular, there are $k$-many processes that are counted after the process of concern in that same iteration, which do not contribute to the clockwise distance starting from the leader. The minus 1 is due to the fact that the leader itself was included in the counting as well.
    
    \item If $myPhase = i$: $d = n-k -1$. In this case, the last phase - $myPhase$ - was not completed. Directly subtracting from the ring size the number of processes probed in the last phase, after probing the process of concern.
\end{itemize}

We dedicate the next lemma to establish the initial condition for every phase $i \leq \lceil{n^{0.5}}\rceil$ as identical to the ending condition of its previous phase. The main purpose it to inductively show that every leadership transfer is successful.

\begin{lemma}
    Consider a ring of processes performing the $O(n^{1.5})$ counting algorithm. Assume at some moment, all processes other than $r_i = p_0$ have started to perform NonLeader($phase = i, counted, relayed$), and $r_i = p_0$ has started to perform Leader($phase = i$).
    Let $(p_0, p_1, \ldots, p_i)$ be a clockwise segment of processes that have $counted = false$.
    
    With a finite delay, the following statements become true.

    \begin{itemize}
        \item[(1)] All processes except $p_0$ and $p_i$ call NonLeader($phase = i+1, counted, relayed$), and their $counted, relayed$ are not changed.
        \item[(2)] Process $p_0$ calls NonLeader($phase = i+1, counted = true, relayed = 0$).
        \item[(3)] Process $p_i$ calls Leader($phase = i+1$) after (1) and (2) become true.
        \item[(4)] Processes $p_1, \ldots,p_i$ have $counted = true$, and $myPhase = i$.
        \item[(5)] Processes $p_j$, $j\leq i$ has $relayed = i-j$.
    \end{itemize}
\end{lemma}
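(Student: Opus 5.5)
We trace the execution of phase $i$ in three stages (probing, transfer pulse, confirmation pulse) and check the five claims against the pseudocode of \Cref{algo:nonleader,algo:leader}. We rely on the FIFO assumption and on the fact that every non-leader process is blocked on a receive until a pulse reaches it. Throughout, the implicit hypothesis is that $p_0$ does not lie in $\{p_1,\dots,p_i\}$, i.e.\ the segment of uncounted processes has length at least $i$. This is exactly the situation covered by the lemma, as opposed to the completion-detection case.

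\textbf{Probing.} We prove by induction on $j=1,\dots,i$ the following invariant. When $p_0$ receives its $j$-th probing pulse on port~1:
\begin{itemize}
\item $p_1,\dots,p_j$ have $counted=true$ and $myPhase=i$;
\item $p_k$ has $relayed=j-k$ for $k\le j$;
\item $p_{j+1},\dots,p_i$ are untouched;
\item processes outside the segment have received nothing.
\end{itemize}
For the step $j \to j+1$, the $(j+1)$-th clockwise pulse leaves $p_0$. Processes $p_1,\dots,p_j$ are counted, so each executes Lines 7--9 of \Cref{algo:nonleader}: it relays the pulse clockwise and then waits for exactly one pulse on port~1. The pulse thus reaches $p_{j+1}$, which still has $counted=false$. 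By Lines 3--5, $p_{j+1}$ reflects it, sets $counted=true$ and $myPhase=i$, and keeps $relayed=0$. The reflected pulse travels counter-clockwise. Each $p_k$ with $k\le j$ consumes it as the awaited port-1 pulse, forwards it, and increments $relayed$, so $relayed$ goes from $j-k$ to $j+1-k$. The pulse finally reaches $p_0$. Since pulses never go beyond $p_{j+1}$, processes $p_{i+1},\dots,p_{n-1}$ are untouched. After $j=i$, claims (4) and (5) hold, and every $p_k$ is back in a state waiting on both ports.

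\textbf{Transfer and confirmation.} Next, $p_0$ sends the transfer pulse on port~0 (Lines 12--13 of \Cref{algo:leader}). It travels counter-clockwise through $p_{n-1},\dots,p_1$. We check that each process reads it in the intended branch (Lines 12--13 of \Cref{algo:nonleader}) and not as a relay of Lines 10--11. This holds because none of them has an outstanding port-1 expectation, by the invariant above. The branch condition is ``$counted$ and $relayed=0$ with $myPhase$ equal to the current phase'', and only $p_i$ satisfies it. Indeed, $p_1,\dots,p_{i-1}$ have $relayed>0$, and the other processes either have $counted=false$ or counted in an earlier phase. Hence $p_i$ goes to Line 14 and every other non-leader goes to Line 19. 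The pulse returns to $p_0$ on port~1. Then $p_i$ sends the confirmation pulse on port~1, and it travels clockwise around the ring. Each non-leader at Line 19 receives it, forwards it, and calls NonLeader($phase=i+1$) with $counted,relayed$ unchanged, which gives (1). When it reaches $p_0$ (Lines 14--15 of \Cref{algo:leader}), $p_0$ forwards it and becomes NonLeader($i+1$, $true$, $0$), which gives (2). The pulse then passes through $p_1,\dots,p_{i-1}$ and returns to $p_i$ on port~0. Only then does $p_i$ call Leader($phase=i+1$). Every other process has forwarded the confirmation before it reaches $p_i$, which gives (3).

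\textbf{Main obstacle.} The delicate point is ruling out misinterpretation of pulses under asynchrony. In particular, we must exclude the following:
\begin{itemize}
\item a pulse of the transfer or confirmation wave being consumed as a probing relay;
\item $p_i$ starting phase $i+1$ probes that reach processes still in phase $i$.
\end{itemize}
The first is handled by showing that all probing round-trips are completed before $p_0$ sends the transfer pulse, because $p_0$ waits for all $i$ replies. For the second, we argue as in \Cref{thm:compose-any}: the confirmation wave is a termination wave in direction $1$ initiated by $p_i$. By FIFO, any later pulse from $p_i$ arrives at $p_{i+1}$ only after the confirmation pulse. Therefore each process has already entered phase $i+1$ when it sees a phase-$(i+1)$ pulse.
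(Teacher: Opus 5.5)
Your proposal is correct and follows essentially the paper's route: the probing round-trips give (4) and (5), and tracing the counter-clockwise transfer pulse and the clockwise confirmation pulse gives (1)--(3), with $p_i$ the unique process satisfying $phase = myPhase \wedge relayed = 0$. The paper obtains the fact that processes outside the segment have $myPhase$ undefined or smaller than $i$ by an induction on the phase index, which you use implicitly, and you are more explicit than the paper about ruling out misread pulses. One small correction: $p_i$ sends the confirmation right after forwarding the transfer pulse, so the confirmation may reach $p_0$ before the transfer pulse returns; what guarantees that $p_1,\dots,p_{i-1}$ are already at Line~19 when the confirmation reaches them is that $p_0$ receives on port~$1$ before port~$0$, not the temporal ``then'' in your narrative.
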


\begin{proof}
    We prove by induction from $i = 1$. 
    
    Recall that phase $i$ has two components: first, the current leader probes $i$ clockwise neighbors, and then transfers leadership to $p_i$, the $i$-th clockwise neighbor. We prove that (4) and (5) is immediately true for every $i$. Given that the $i$ clockwise neighbors of $p_0$ have $counted = false$ initially, they reflect the first probe pulse, and relay the later probing pulses, as described in Line 3-11 of \Cref{algo:nonleader}. Observe that the probing is controlled entirely by $p_0$, and is local to the clockwise segment $(p_0, p_1, \ldots, p_i)$. (4) and (5) are immediate corollaries from the description of Line 3-11 of \Cref{algo:nonleader}.

    Now we prove (1), (2), and (3) first for $i=1$. When so, every process other than $p_0$ and $p_1$ has $counted = false, relayed = 0$, and $myPhase$ undefined. Process $p_0$ transfers leadership to $p_i$ by sending a counterclockwise traversing \emph{transfer pulse}. This pulse is relayed by all processes on the ring by running Line 12-13 of \Cref{algo:nonleader}. The only process that triggers Line 14 of \Cref{algo:nonleader} and becomes a new leader is $p_1$. It sends a clockwise traversing \emph{confirmation pulse}, which eventually arrives at itself. Note that all other processes on the ring see the clockwise traversing \emph{confirmation pulse} only after the counterclockwise traversing \emph{transfer pulse}. At this point, $p_1$ initiates Leader($phase = i+1$). We therefore conclude the proof for (1), (2), and (3) when $i = 1$.

    Now let the lemma hold for $i = k$. For $i = k+1$, notice that we only need to reason that no processes other than $p_i$ can become the new leader, that is, $p_i$ is the only process that can trigger Line 14 of \Cref{algo:nonleader}. For processes that do not belong to the clockwise segment $(p_0, p_1, \ldots, p_i)$, they either have $myPhase$ undefined if they are not probed yet, or $myPhase < i$ if they are probed in an earlier phase. 
\end{proof}

We now prove a lemma stating that every phase $i$, except for the last, incomplete counting phase, is composable: effectively, every process can correctly label a pulse it receives with the phase $i$ that such pulse was produced in.

\begin{lemma}
    For $i \leq \lceil{n^{0.5}}\rceil$, the first $i$ phases can be composed.
\end{lemma}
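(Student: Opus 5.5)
We argue by induction on $i$, combining the preceding lemma with the mechanism behind \Cref{thm:compose-any}. Composability of the first $i$ phases means that every pulse a process receives while its local variable $phase$ equals $j$ was emitted during phase $j$. Equivalently, no process still in phase $j$ consumes a pulse belonging to phase $j+1$, and no process in phase $j+1$ receives a leftover pulse of phase $j$.

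\textbf{Setting up phase $j$.} For $j\le\lceil n^{0.5}\rceil$ we can apply the preceding lemma, because at least $j$ unprobed processes remain clockwise of $r_j$. It gives that phase $j$ ends with the leader $r_{j+1}=p_j$ entering Leader($phase=j+1$) only after every other process has entered NonLeader($phase=j+1,\ldots$). We then check that the end of phase $j$ has the structure required by Definition~\ref{def:composable-ending}, with terminator $r_{j+1}$ and direction $d=1$. The termination wave is the clockwise confirmation pulse. Each other process returns from phase $j$ (i.e.\ moves to phase $j+1$, Line 21 of \Cref{algo:nonleader} or Lines 14--15 of \Cref{algo:leader}) right after receiving it on port~0 and forwarding one pulse on port~1. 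The terminator returns when the pulse comes back on port~0.

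\textbf{Checking the rest of the definition.} Two further facts are needed.

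First, quiescence within the phase: when the confirmation pulse is sent, no other phase-$j$ pulse is in transit. All $j$ probe round trips have completed, since $r_j$ waited for each before sending the transfer pulse. The transfer pulse has also completed its counter-clockwise traversal, since $r_{j+1}$ only sends the confirmation after relaying it, and $r_j$ absorbs it on port 1. We should double check this last point: the transfer pulse must be fully absorbed by $r_j$ before $r_j$ can see the confirmation pulse. FIFO on the link $r_j\to r_{j-1}$ direction gives this ordering.

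Second, the terminator's waiting condition: between sending the confirmation on port 1 and receiving it back on port 0, $r_{j+1}$ waits on nothing except port 0.

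With these in place, the argument in the proof of \Cref{thm:compose-any} applies verbatim, since it used only FIFO and the ending structure. Consider a process $p$ still in phase $j$ and its counter-clockwise predecessor $p^-$. Any phase-$(j+1)$ pulse from $p^-$ to $p$ travels clockwise and is sent after $p^-$ forwarded the confirmation pulse. By FIFO, $p$ sees the confirmation first and so is already in phase $j+1$ when it receives that pulse. Phase-$(j+1)$ pulses travelling counter-clockwise toward a lagging process can only originate from $r_{j+1}$'s side, once $r_{j+1}$ itself has started phase $j+1$. Since $r_{j+1}$ starts last, by item (3) of the preceding lemma, every process is already in phase $j+1$ by then.

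\textbf{Main obstacle.} The delicate point is that, unlike in Definition~\ref{def:composable-ending}, processes move to phase $j+1$ at different times, and phase $j+1$ begins with clockwise probing from $r_{j+1}$. We must confirm that no counter-clockwise reflection of a phase-$(j+1)$ probe can reach a process that has not yet received the confirmation pulse. Reflections only travel between $r_{j+1}$ and its clockwise neighbours, and those neighbours have already forwarded the confirmation. Moreover, $r_{j+1}$ emits probes only after the confirmation returns, i.e.\ after all processes have switched. Stitching this together for each $j\le i$ gives the lemma by induction.
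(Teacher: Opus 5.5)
Your overall route is the paper's. The decisive fact is clause (3) of the preceding lemma: $r_{j+1}$ is the last process to leave phase $j$ and is the only source of phase-$(j+1)$ pulses, since non-leaders only send in reaction to a reception. Hence consecutive phases are temporally separated. Your check of Definition~\ref{def:composable-ending}, with the confirmation pulse as termination wave and $d=1$, is a more detailed rendering of this idea. The detour through the proof of \Cref{thm:compose-any} is not needed, because that single observation already covers both directions.

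One step is wrong as written, though it is easily repaired. Write $r_j=p_0$ and $r_{j+1}=p_j$.

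\emph{What fails.} When $p_j$ sends the confirmation, it has \emph{just} forwarded the transfer pulse. That pulse still has to cross $p_{j-1},\dots,p_1$ before reaching $r_j$ on port~$1$. So your claim that no other phase-$j$ pulse is in transit at that moment is false. FIFO cannot supply the ordering you want: the transfer reaches $r_j$ on port~$1$ and the confirmation reaches it on port~$0$, i.e.\ over two different links. The confirmation, which only has to cross $p_{j+1},\dots,p_{n-1}$, may well arrive first.

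\emph{The fix.} The correct reason is the code of \Cref{algo:leader}: $r_j$ performs its receive on port~$1$ before its receive on port~$0$, buffering the confirmation. So the confirmation is forwarded past $r_j$, and hence reaches $p_1,\dots,p_{j-1}$, only after the transfer pulse has passed through all of them. This gives two things:
\begin{itemize}
\item every process consumes the transfer pulse before the confirmation;
\item no pulse is in transit when the confirmation returns to $r_{j+1}$.
\end{itemize}
That is exactly the temporal separation the paper relies on.

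\emph{A smaller point.} Your unconditional claim that $j$ unprobed processes lie clockwise of $r_j$ for every $j\le\lceil n^{0.5}\rceil$ fails for some small $n$. For example, with $n=5$ and $j=3$, only one unprobed process remains clockwise of $r_3$. The paper's proof states this condition as a hypothesis rather than asserting it.
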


\begin{proof}
    We show that pulses belonging to consecutive phases are temporally separated. By the end of the $i$th phase, the following condition holds: All processes other than $p_i$ (the last process probed in phase $i$) has started to perform NonLeader($phase = i, counted, relayed$), which is guaranteed by applying clauses (1) and (2) to the $i$th iteration; And that $p_i$ has started to perform Leader($phase = i$), guaranteed by applying clause (3) to the $i$th iteration. If $p_i$ still has $i+1$ consecutive clockwise neighbors $counted = false$, the induction hypothesis for phase $i+1$ is therefore met. We argue that the $i+1$th phase can be correctly concatenated after execution up to phase $i$: by clause (3), the last probed process is the last to quiescently terminate during phase $i$, which also initiates communications of phase $i+1$. Therefore, pulses belonging to phase $i$ and $i+1$ are temporally separated.
\end{proof}

We now prove that the last counting phase has \emph{composable ending property}, hence to facilitate the concatenation of ring size broadcast and receiving afterwards.

\begin{lemma}
    Phase $i = \lceil{n^{0.5}}\rceil$ has composable ending property.
\end{lemma}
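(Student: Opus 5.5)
Let $i$ denote the final phase, i.e.\ the phase in which the current temporary leader $r_i$ sends a probing pulse that is not reflected because every other process already has $counted = true$. The plan is to check the three clauses of Definition~\ref{def:composable-ending} directly on the final part of this phase, which is the detection of completion (Line 7--9 of \Cref{algo:leader}). We take $r_i$ as the terminator and $d = 0$: the termination wave is the third counter-clockwise traversal, which triggers Line 25--26 of \Cref{algo:nonleader} at every other process. By the previous lemmas, every earlier phase ends cleanly. So at the start of phase $i$, every process other than $r_i$ executes NonLeader($phase=i,\ldots$) and $r_i$ executes Leader($phase=i$), and no earlier-phase pulse is still in transit. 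This gives the starting configuration for the argument.

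\textbf{Step 1: the pulses of the phase form a sequential trace.} I will trace the pulses of phase $i$ in order. First come the $count$ reflected probes, each a clockwise/counter-clockwise round trip confined to the segment $(r_i,\ldots,p_{count})$. Next comes the one unreflected probe, which goes all the way around clockwise: every process other than $r_i$ has $counted=true$, so it relays via Line 7--9. The essential point is that $r_i$ sends each new pulse only after the previous one has returned to it. Hence at any time at most one pulse of the phase is in transit. With FIFO links, every process therefore receives pulses in a fixed, well-defined order, and I can determine by induction on that order which line each process executes. In particular, a process that just relayed the long probe on port~1 is waiting, via Line 10--11, for a pulse on port~1.

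\textbf{Step 2: the three counter-clockwise traversals.} After the long probe returns to $r_i$ on port~0, $r_i$ sends the three counter-clockwise traversals one at a time. Each one travels all the way around and returns on port~1 before the next is sent. For each non-leader process I will check the effect of each traversal:
\begin{itemize}
\item the first is consumed by Line 10--11;
\item the second passes through Line 12--13 and then Line 19. No process enters Line 14, because the condition there ($myPhase = phase$ and $relayed=0$) fails everywhere: $relayed$ was incremented for all processes by the long probe, and processes outside the segment have $myPhase<i$;
\item the third is received at Line 25--26; the process forwards exactly one pulse on port~0 and returns, entering RcvCount().
\end{itemize}
This establishes that the third traversal is a termination wave with the required behaviour at non-terminator processes.

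\textbf{Step 3: the terminator and quiescence.} From the moment $r_i$ sends the third pulse on port~0, it only waits for a pulse on port~1 (the return of the wave), and it returns immediately upon receiving it. So it never waits on port~$d=0$, as Definition~\ref{def:composable-ending} requires. Since $r_i$ is the last to receive the wave, it is the unique process that returns last. Quiescence follows because, by Step~1, no other pulse of the phase is in transit when the wave circulates.

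The main obstacle is Step~2: showing that the transfer-detection branch (Line 14) is never spuriously triggered during the second traversal, and that each non-leader's control flow really expects these three counter-clockwise pulses in exactly this order. This depends on the off-by-one effect of the extra long probe on $relayed$. I would handle it by a careful case split on non-leaders, according to $myPhase = i$ (inside the segment) versus $myPhase<i$ (outside it).
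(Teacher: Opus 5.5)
Your proof is correct and follows the paper's route: $r_i$ is the terminator, $d=0$, and the third counter-clockwise traversal is the termination wave. Your trace spells out what the paper's short proof asserts, in particular that the long probe makes $relayed\ge 1$ at every non-leader, so the leadership branch at Line~14 is never taken. One small correction: in \Cref{algo:leader} the leader sends the three counter-clockwise pulses back-to-back and only then waits for three on port~1, so they are pipelined rather than sent one at a time, and your ``at most one pulse in transit'' claim fails for them; but since all three travel in the same direction over FIFO links, each non-leader still consumes them in exactly the order of your Step~2, and no pulse remains once $r_i$ receives the third.
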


\begin{proof}
    Directly from the definition of \Cref{def:composable-ending}. The last counting phase is quiescently terminating, as every process other than last leader terminates upon receiving the last of the three consecutive counter-clockwise traversals initiated by the last leader (Line 8-9 of \Cref{algo:leader}), and is quiescent as the leader also terminates the last after the three consecutive counter-clockwise traversals conclude. The last counter-clockwise pulse traversal is the \emph{termination wave} required by \Cref{def:composable-ending}.
\end{proof}

\begin{theorem}
    The total pulse complexity of the described counting algorithm is $O(n^{1.5})$.
\end{theorem}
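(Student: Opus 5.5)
The plan is to count pulses phase by phase and then add the cost of the final detection and broadcast step. The first task is to bound the number of phases. By the earlier lemmas, every complete phase $j$ probes exactly $j$ fresh processes. The first $i-1$ complete phases therefore account for $\frac{i(i-1)}{2}$ distinct processes. Since the ring has only $n$ processes, $\frac{i(i-1)}{2} \le n$, which gives a last phase index $i \le \lceil \sqrt{2n}\rceil + 1 = O(\sqrt{n})$.

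Next I would bound the cost of a single complete phase $j$, which has two parts.

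\textbf{Probing.} The temporary leader $r_j$ sends $j$ probing pulses. By the description of Lines 3--11 of \Cref{algo:nonleader}, the $k$-th probe travels $k$ hops clockwise to the $k$-th clockwise neighbour. There it is reflected and travels $k$ hops back. Probing thus costs $\sum_{k=1}^{j} 2k = O(j^2)$ pulses.

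\textbf{Leadership transfer.} This consists of one full counter-clockwise traversal (the transfer pulse) and one full clockwise traversal (the confirmation pulse), costing $2n$ pulses.

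Summing over the phases gives
\[
\sum_{j=1}^{i} \bigl(O(j^2) + 2n\bigr) = O(i^3) + O(i\,n) = O(n^{1.5}).
\]

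For the last, incomplete phase, the same accounting applies to the probes that are still reflected, each travelling at most $2i$ hops, so they fit within the $O(i^2)$ term. Added to this is the single probe that goes around the whole ring ($n$ pulses) and the three counter-clockwise traversals of Lines 8--9 of \Cref{algo:leader} ($3n$ pulses). The last phase therefore contributes $O(n)$ extra pulses.

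Finally, the broadcast via \Cref{algo:sndn,algo:rcvn} sends the binary encoding of $count$ followed by $\bot$. Each symbol is encoded by two ring traversals. Since $count < i+1 = O(\sqrt n)$, this needs $O(\log n)$ symbols, hence $O(n\log n)$ pulses. Adding up the three contributions gives $O(n^{1.5}) + O(n) + O(n\log n) = O(n^{1.5})$.

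The main obstacle is making sure each probe really travels only $k$ hops and not around the ring. I would handle it by relying on clauses (4)--(5) of the first lemma. These show that the $k$-th probe is absorbed exactly by the first uncounted process, while already-counted processes only relay it locally. I would also check that the counter for the last phase is bounded by $i$, so that the $O(\log n)$ broadcast length holds.
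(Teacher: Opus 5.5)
Your proposal is correct and uses the same accounting as the paper. There are $O(\sqrt{n})$ phases, each costing $O(j^2)$ pulses for probing plus $2n$ for the transfer and confirmation traversals; the last phase adds $O(n)$ for the full-ring probe and the three counter-clockwise traversals, and broadcasting $count$ costs $O(n\log n)$, with your explicit sum $O(i^3)+O(in)$ and phase bound $i \le \lceil\sqrt{2n}\rceil+1$ slightly more careful than the paper's version.
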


\begin{proof}
    For the $i$th phase, $i$ new, unprobed processes are probed and have their $counted$ set to $true$, except for the last phase, after which the entire ring has been probed. Therefore, there are $O(n^{0.5})$ probing phases.
    
    In each phase except the least, the pulses generated came from two sources: The current leader probing $i$ clockwise neighbors, which uses up to $O(i^2) = O(n)$ pulses, and leadership transfer, which comprises two ring traversal pulse waves (a counter-clockwise transfer pulse starting and ending at the old leader, and a clockwise confirmation pulse starting and ending at the next leader), which in total comprise $O(n)$ pulses.
    
    In the last phase, instead of the usual leadership transfer, the last leader produces 4 pulse traversals (first clockwise, next three counter-clockwise) to signal the ending of counting. This is still manageable in $O(n)$ pulses.

    Eventually, the last leader only has to broadcast its internal variable $count$ (defined as the number of processes counted by the last leader in the last phase, therefore $count \leq i$). For each bit of $count$ and additionally an end-of-transmission symbol, two pulse traversals around the ring are required. In total, $O(n\log n)$ pulses are required for broadcasting the ring size.
\end{proof}

\begin{algorithm}[H]
  \DontPrintSemicolon

  \SetKw{Send}{send}
  \SetKw{Receive}{receive}
  \SetKw{Exec}{execute}
  \SetKw{Continue}{continue}

  \caption{NonLeader($phase,counted,relayed$). The algorithm is called at the beginning on each non-leader process with parameters $phase=1, counted=false, relayed=0$, which are also local variables. The variable $myPhase$ is local (to a process) and global (stored across possibly more than one invocation of Nonleader()). }
  \label{algo:nonleader}

  \While{true}{
    \Receive a message on port $q$\;

    \uIf{$\neg counted \land q = 0$}{
    \tcp{first time counted, pong back}
      \Send a message on port $0$\; 
      $counted \leftarrow true$\;
      $myPhase \leftarrow phase$
      \tcp{I was counted in phase $myPhase$}
    }
    \uElseIf{$counted \land q = 0$}{
      $relayed \leftarrow relayed + 1$ \tcp{$relayed$ is used to remember the number of pulses that are counted after me in the same phase.}
      \Send a message on port $1$\;
      \Receive a message on port $1$\;
      \Send a message on port $0$\;
    }
    \ElseIf{$q = 1$}{
      \Send a message on port $0$\;
      \eIf{$ (phase = myPhase) \land (relayed = 0)$}{ \tcp{I have to become the new leader}
        \Send a message on port $1$\; 
        \Receive a message on port $0$\;
        $phase \leftarrow phase + 1$\;
        \Exec Leader$(phase)$\; 
      }{
        \Receive a message on port $q$\;
        \eIf{$q = 0$}{
          \Send a message on port $1$\;
          $phase \leftarrow phase + 1$\;
          \Continue\;
        }{
          \Send a message on port $0$ \tcp{If I receive two ccw messages, the algorithm is done and I have to receive the count.}
          \Exec RcvCount()\;
        }
      }
    }
  }
\end{algorithm}

\begin{algorithm}[H]
  \DontPrintSemicolon
  \footnotesize
  \SetKw{Send}{send}
  \SetKw{Exec}{execute}
  \SetKw{Continue}{continue}
  \SetKw{Break}{break}
  \SetKw{Receive}{receive}

  $count \leftarrow 0$\; 

    \Repeat{$count=phase$}
    {
        \Send a message on port $1$\;
         \Receive a message on port $q$\;
         \eIf{$q=1$}{
        $count \leftarrow count+1$\;
        }{
        \Send \ul{three} messages on port $0$\; 
         \Receive \ul{three} message on port $1$\;  
        \Exec SndCount$(count)$\;
        }
    }
    \Send a message on port $0$\; 
    \Receive a message on port $1$\; 
    \Receive a message on port $0$\;
     \Send a message on port $1$\;
     $phase \leftarrow phase +1$\;
     \Exec NonLeader($phase,true,0$)\;

  \caption{Leader($phase$) called at the beginning on the initial leader process with $phase=1$} 
  \label{algo:leader}
\end{algorithm}

\begin{algorithm}[H]
  \DontPrintSemicolon
  \footnotesize
  \SetKw{Send}{send}
  \SetKw{Exec}{execute}
  \SetKw{Continue}{continue}
  \SetKw{Break}{break}
  \SetKw{Receive}{receive}
\SetKw{Return}{return}

  \For{$i=0$ to $len(count)-1$}{
   \eIf{bit$(count,i)=0$}{
    \Send two messages on port $0$\; 
    \Receive two messages on port $1$\;
   }{
    \Send two messages on port $1$\; 
    \Receive two messages on port $0$\;
   }
  
  }
    \Send a message on port $1$\; 
        \Receive a message on port $0$\;
    \Send a message on port $0$\; 
      \Receive a message on port $1$\;
    \Return $n$\;
  \caption{SndCount($count$)} 
  \label{algo:sndn}
\end{algorithm}

\begin{algorithm}[H]
  \DontPrintSemicolon
  \footnotesize
  \SetKw{Send}{send}
  \SetKw{Exec}{execute}
  \SetKw{Return}{return}
  \SetKw{Break}{break}
  \SetKw{Receive}{receive}
    $L=[]$ \;
    \Repeat{$(q,u) = (0,1)$}
    {
         \Receive a message on port $q$\;
         \Send a message on port $1-q$ \;
         \Receive a message on port $u$\;
          \Send a message on port $1-u$ \;
        \uIf{$(q,u)=(1,1)$}{
        $L.append(0)$\;
        }
        \ElseIf{$(q,u)=(0,0)$}{
         $L.append(1)$\;
        }
    }
 \Return $bitencode(L)$
  \caption{RcvCount()} 
  \label{algo:rcvn}
\end{algorithm}

\section{Simulating the local model: Message exchange on rings}
\label{sec:bitsendingring}

In this section, we describe an algorithm that enables processes to exchange messages of arbitrary size with their neighbors. This communication occurs in parallel across all processes and requires $O(n)$ messages to send and receive a single bit. The algorithm requires the existence of a leader process.

\begin{theorem}
\label{thm:msgxchangering}
Let $\mathcal{A}$ be an algorithm in a \ul{message-passing, synchronous, possibly anonymous} ring,  where each process \emph{may send a different message on each incident edge} in a round, and let $b$ be the maximum number of bits to be sent in the round.
Consider an oriented ring with a unique leader and a designated set of \emph{active} processes (including the leader); all other processes are \emph{relays}.
Let $R$ be the virtual ring obtained by contracting each maximal relay segment into a single link.
There is a quiescently terminating \ul{content-oblivious algorithm on the physical ring} that simulates $\mathcal{A}$ round-by-round on $R$ with multiplicative overhead: each process sends $O(b)$ pulses per simulated round.
\end{theorem}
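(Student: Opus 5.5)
We build the simulator round by round. Each simulated round is a constant number of leader-driven ``waves'', and every transmitted bit is encoded by how many pulses a process injects into a wave. The basic primitive is a \emph{synchronizing wave}. The leader sends one pulse clockwise. Every active process, on receiving it, first performs its local sending action for the current bit slot and then forwards the wave pulse. Relays simply forward everything. When the wave returns to the leader, every active process has finished that slot.

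To let a process $p$ send a bit to its clockwise virtual neighbour $q$, we work in slots. In each slot, $p$ emits either zero or one \emph{data} pulse clockwise before forwarding the wave pulse. Links are FIFO and relays forward blindly, so $q$ sees its incoming pulses on port~$0$ in order. Between consecutive wave pulses, $q$ sees exactly the data pulses that $p$ emitted in that slot. The difficulty is that $q$ must separate data pulses from the wave pulse. We resolve this by making the wave itself distinguishable by direction: the wave travels counter-clockwise, while data travels clockwise.

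Concretely, a slot for clockwise data goes as follows.
\begin{itemize}
\item The leader sends a counter-clockwise ``start'' pulse. Each active process forwards it and then sends $0$ or $1$ data pulses clockwise, according to its bit.
\item The leader then sends a counter-clockwise ``end'' pulse. Each active process forwards it only after knowing its slot is closed.
\end{itemize}
The subtle point is that $q$ must know it has received all of $p$'s data before it forwards the end pulse. To avoid waiting on an absent pulse, we encode each bit with exactly one pulse in one of two slots: bit $0$ puts the pulse in sub-slot A, bit $1$ in sub-slot B. The clockwise segment from $p$ to $q$ contains only relays, and relays forward in order. Hence the pulse $p$ emitted after the start wave reaches $q$, and $q$ can block-wait for exactly one clockwise data pulse per bit. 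Whether that pulse arrived before or after $q$ saw the intermediate counter-clockwise marker gives the bit.

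The ordering claim, that $q$ can tell on which side of the intermediate marker the data pulse fell, is what I expect to be the main obstacle. The marker and the data travel on different directed channels, so the order in which $q$ observes them is not determined by FIFO alone. I would try to close this as follows.
\begin{itemize}
\item $p$ forwards the marker counter-clockwise only after sending its data pulse.
\item $q$ forwards the marker only after receiving its expected data pulse.
\end{itemize}
This creates a causal chain around the ring. To make the bit decodable, I would let the leader collect a round-trip acknowledgment between the sub-slots, so the information lies in which wave $q$ was inside when the pulse arrived, not in arrival order.

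Counter-clockwise messages are handled symmetrically, with the directions of wave and data swapped. For each of the $b$ bits, each process therefore sends $O(1)$ data pulses. It also relays $O(1)$ wave pulses per bit, plus relayed data pulses, which is $O(1)$ per bit per link.

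To handle variable-length messages, the leader announces length information first, or bits are sent in the ternary $\{0,1,\bot\}$ code used in \Cref{algo:sndn}. Quiescent termination and back-to-back composition follow by ending each round with a termination wave satisfying \Cref{def:composable-ending} and then applying \Cref{thm:compose-any}.
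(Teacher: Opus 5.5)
There is a genuine gap at the step you flag yourself: the two-sub-slot positional encoding is never made decodable, and neither proposed fix works.

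\begin{itemize}
\item \emph{Counter-clockwise marker, forwarded on receipt.} The marker reaches $q$ before $p$ (unless $p$ is the leader). Nothing orders $p$'s sub-slot-A pulse, emitted after $p$ saw the start wave, with respect to the marker. That pulse can still be in transit and reach $q$ after the marker, so $q$ sees ``data after marker'' for both bit values.
\item \emph{Counter-clockwise marker, held by $q$ until data arrives.} When $p$'s bit is $1$, $p$ emits its pulse only after receiving the marker, which $q$ is holding. This deadlocks.
\item \emph{Clockwise marker.} FIFO would order it behind $p$'s data, but it arrives on the same port~$0$ and is indistinguishable from a data pulse.
\item \emph{Leader-collected round trip between sub-slots.} This cannot act as a barrier. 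Certifying that all sub-slot-A pulses have landed requires knowledge no process has: $q$ does not know whether an A-pulse is coming, and $p$ never learns that its pulse was delivered.
\end{itemize}

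The missing ingredient is a per-link acknowledgment. With it, your first unary idea already works without sub-slots, and this is what the paper does.
\begin{itemize}
\item In \texttt{CCWBitSending}, a process sends $\kappa(b)\in\{0,1,2\}$ pulses for $b=\bot,0,1$. The receiver immediately echoes every data pulse back to the sender.
\item The leader injects a synchronization pulse travelling in the echo direction. Each process forwards it only after receiving $\kappa(b)+1$ pulses from that side.
\item Before the sync has passed, the neighbour sends on that link only the $\kappa(b)$ echoes and the one forwarded sync pulse. So those $\kappa(b)+1$ pulses are exactly all echoes plus the sync.
\item When the sync returns to the leader, every data pulse on every link has been delivered. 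A second lap, the termination wave, then tells each receiver its count is final. No receiver ever needs to know in advance how many pulses to await.
\end{itemize}

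Two smaller points are also unaddressed.
\begin{itemize}
\item The leader cannot ``announce length information'': it does not know the other processes' message lengths, and $b$ is not known. The paper runs \texttt{ComputeOR} on ``I still have bits'' before every slot, with relays voting \texttt{false}.
\item Relays need an explicit stopping rule, the $cnt$ counter in \texttt{CCWBitRelay}. Otherwise they cannot terminate quiescently together with the active processes, which the composable-ending argument requires.
\end{itemize}
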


The pseudocode is shown in Algorithm~\ref{algo:all2allmsg}. This algorithm is executed by all processes. Each process may send one message clockwise and another counterclockwise. Messages are encoded as lists of bits; if a process does not wish to send a message, it may invoke the procedure with an empty list.  
The algorithm performs a loop in which each process sends and receives one bit clockwise and one bit counterclockwise (see lines~\ref{line:all2all:repeat-start}--\ref{line:all2all:increment-phase}). This is done by first sending a bit clockwise (and receiving one counterclockwise), via the {\tt CWBitSending} procedure, and then sending a bit counterclockwise (and receiving one clockwise) via the {\tt CCWBitSending} procedure.  
The loop continues until all processes have finished transmitting their messages. This termination condition is checked by having each process invoke {\tt ComputeOr}, passing a variable set to {\tt true} if it still has bits to send and {\tt false} otherwise. The {\tt ComputeOr} procedure computes the logical OR of these values across all processes.

As we will show, each invoked procedure causes a process to send a constant number of messages, so the overall communication cost per loop iteration is $O(n)$ messages.

In many higher-level algorithms, we have a set of inactive processes that act as a relay by just forwarding messages, these processes are essentially acting as communication links.
We therefore distinguish between \emph{active} processes, which  call \texttt{MsgExchange}, and \emph{relay} processes, which must still forward pulses to allow their neighbors to communicate.
Relay processes execute the procedure \texttt{MsgRelay} (Algorithm~\ref{algo:all2allrelay}).
Intuitively, a relay behaves like a bidirectional link: whenever it receives a pulse on one port, it promptly forwards a pulse on the opposite port, thus preserving the structure of the bit-sending protocol.
However, relays still participate in global termination detection. 
In each iteration, relay processes call \texttt{ComputeOR(false)}.
Since \texttt{ComputeOR} returns the logical OR of these values across all processes, both \texttt{MsgExchange} and \texttt{MsgRelay} loops terminate simultaneously as soon as no active process has bits left to send.
This guarantees that relay processes stop forwarding pulses exactly when the current message-exchange phase is complete.

\begin{algorithm}
  \footnotesize
  \DontPrintSemicolon
  \SetKw{Send}{send}
  \SetKw{Receive}{receive}
  \SetKw{Return}{return}

  \caption{\texttt{MsgExchange}($msg_{cw},msg_{ccw}$): sends message $msg_{cw}$ to your $cw$ neighbour and message $msg_{ccw}$ to your $ccw$ neighbour, while receiving a message from them.}
  \label{algo:all2allmsg}

  $phase=0$\; \label{line:all2all:init-phase}
  $rcvd_{cw}=[]$\; \label{line:all2all:init-rcvdcw}
  $rcvd_{ccw}=[]$\; \label{line:all2all:init-rcvdccw}
  $b_{+}=msg_{cw}[phase]$\tcp{is $\bot$ if $phase>len(msg_{cw})$} \label{line:all2all:get-bplus}
   $b_{-}=msg_{ccw}[phase]$\tcp{is $\bot$ if $phase>len(msg_{ccw})$} \label{line:all2all:get-bminus}
   $active=(b_{+}\neq\bot)\lor(b_{-}\neq\bot)$\; \label{line:all2all:update-active}
  \While{{\tt ComputeOR}($active$)}{ \label{line:all2all:repeat-start}
    $rcvd_{ccw}.append(${\tt CWBitSending}$(b_{+}))$\; \label{line:all2all:send-cw}
    $rcvd_{cw}.append(${\tt CCWBitSending}$(b_{-}))$\; \label{line:all2all:send-ccw}
    $phase++$\; \label{line:all2all:increment-phase}
    $b_{+}=msg_{cw}[phase]$\tcp{is $\bot$ if $phase>len(msg_{cw})$} 
    $b_{-}=msg_{ccw}[phase]$\tcp{is $\bot$ if $phase>len(msg_{ccw})$} 
       $active=(b_{+}\neq\bot)\lor(b_{-}\neq\bot)$\; \label{line:all2all:update-active2}
  }
  \Return bitencode$(rcvd_{cw})$, bitencode$(rcvd_{ccw})$\; \label{line:all2all:return-msgs}
\end{algorithm}

\begin{algorithm}
  \footnotesize
  \DontPrintSemicolon
  \SetKw{Send}{send}
  \SetKw{Receive}{receive}
  \SetKw{Return}{return}

  \caption{\texttt{MsgRelay} .}
  \label{algo:all2allrelay}

  \While{{\tt ComputeOR}($false$)}{ 
    {\tt CWBitRelay}$()$\; 
    {\tt CCWBitRelay}$()$\; 
  }
  \Return\;
\end{algorithm}

\subsection{CCWBitSending Algorithm}

The {\tt CCWBitSending} procedure is shown in Algorithm~\ref{algo:bitsnd}. Each active process $p_i$ invokes the algorithm with a bit $b_i$ that it wishes to send to its counter-clockwise neighbor. The value $b_i$ can be $0$, $1$, or $\bot$, where $\bot$ is a special symbol indicating that no bit is being sent. The algorithm also enables process $p_i$ to receive the bit sent by its clockwise neighbor $p_{i+1}$.

Each process sends messages counter-clockwise through port~$0$, transmitting one, two, or zero messages depending on whether it needs to send a $0$, $1$, or $\bot$, respectively. The process also set a variable $cnt$ to a value equal to the number of counter-clockwise messages it has sent. 

The leader process is the only one that also sends a message clockwise. This is a synchronization message that a process $p_i$ forwards only after receiving confirmation from its counter-clockwise neighbor $p_{i-1}$ that $p_{i-1}$ has received the bit transmitted by $p_i$. 

Once a process receives a message from its clockwise neighbor (i.e., on port~$1$), it responds by sending a clockwise message (see lines~\ref{line:bitsnd:receive} and~\ref{line:bitsnd:forward1}).  

A process then waits until it has received from its counterclockwise neighbor (on port~$0$) a number of messages equal to the number it sent to transmit its bit, plus one (see the predicate of the \texttt{while} loop at line~\ref{line:bitsnd:while-start}). When this occurs, the process knows that the transmitted bit has been successfully received by its counterclockwise neighbor and that one of the received messages is the synchronization one.  

At this point, the process forwards the synchronization message clockwise (line~\ref{line:bitsnd:send-final1}). Note, however, that after forwarding, the process must still acknowledge incoming counterclockwise messages by sending counterclockwise pulses, since its clockwise neighbor may still be transmitting (see the \texttt{repeat--until} loop at line~\ref{line:bitsnd:repeat-start}).

Once the leader receives the synchronization pulse back, it sends a final pulse clockwise (line~\ref{line:bitsnd:send-final1}). This pulse is then forwarded by each process and serves as a signal indicating that they should stop responding to counter-clockwise pulses and that the algorithm has terminated.

\subparagraph{Relay behavior for clockwise bit sending.}
In the description above we focused on \texttt{CWBitSending} as executed by \emph{active}.
However, we may have \emph{inactive} processes that behave like a communication link so that pulses can traverse the ring, but these processes should also terminate exactly when the active \texttt{CCWBitSending} instances terminate.

Such processes execute the relay-side procedure \texttt{CCWBitRelay} (Algorithm~\ref{algo:bitrlay}).
Intuitively, a relay forwards every incoming pulse to the opposite port and keeps track, in a single integer variable $cnt$, of the net number of pulses it has seen:
$cnt$ is initialised to~$0$ and is incremented for each pulse received from port~$1$ and decremented for each pulse received from port~$0$.
Non-leader relays simply forward all pulses and stop when $cnt$ drops below $-1$.
The leader relay additionally injects the initial synchronization pulse on port~$1$, and stops forwarding exactly when it receives the final counterclockwise termination pulse with $cnt=-1$.
At that point $cnt$ becomes $-2$ and the loop condition fails.

This invariant ensures that an interval consisting only of relay processes is indistinguishable from a single link for the active processes executing \texttt{CCWBitSending}, and that all relays terminate synchronously with them at the end of the bit-sending phase.

\subparagraph{\texttt{CWBitSending}} Since the \texttt{CWBitSending} and \texttt{CWBitRelay} algorithms are symmetric to the counter-clockwise ones, their code and correctness proofs are omitted.

\begin{algorithm}
  \footnotesize
  \DontPrintSemicolon
  \SetKw{Send}{send}
  \SetKw{Receive}{receive}
  \SetKw{Return}{return}

  \caption{\texttt{CCWBitSending}($bit$): sends a bit counterclockwise and receives one from the clockwise neighbor. The input $bit \in \{0,1,\bot\}$, where $\bot$ denotes the absence of a bit to be sent. The function returns a value in $\{0,1,\bot\}$.}
  \label{algo:bitsnd}

  \uIf{$bit=0$}{ \label{line:bitsnd:bit0}
    \Send a message on port $0$\; \label{line:bitsnd:send0}
    $cnt \leftarrow 1$\; \label{line:bitsnd:cnt1}
  }\uElseIf{$bit=1$}{ \label{line:bitsnd:bit1}
    \Send two messages on port $0$\; \label{line:bitsnd:send1}
    $cnt \leftarrow 2$\; \label{line:bitsnd:cnt2}
  }\Else{ \label{line:bitsnd:bitbot}
    $cnt \leftarrow 0$\; \label{line:bitsnd:cnt0}
  }

  \If{$leader$}{ \label{line:bitsnd:leader-send}
    \Send a message on port $1$\; \label{line:bitsnd:leader-port1}
  }

  $in \leftarrow 0$\; \label{line:bitsnd:init-in}  

  \While{$cnt \ge 0$}{ \label{line:bitsnd:while-start}
    \Receive a message on port $q$\; \label{line:bitsnd:receive}
    \uIf{$q = 0$}{ \label{line:bitsnd:if-q0}
      $cnt \leftarrow cnt - 1$\; \label{line:bitsnd:dec-cnt}
    }\Else{ \label{line:bitsnd:else-q1}
      \Send a message on port $1$\; \label{line:bitsnd:forward1}
      $in \leftarrow in + 1$\; \label{line:bitsnd:inc-in}
    }
  } \label{line:bitsnd:while-end}

  \Send a message on port $1$\; \label{line:bitsnd:send-final1}

  \uIf{$leader$}{ \label{line:bitsnd:if-leader}
    \Receive a message on port $0$\; \label{line:bitsnd:leader-recv0}
  }\Else{ \label{line:bitsnd:else-nonleader}
    \Repeat{$q = 0$}{ \label{line:bitsnd:repeat-start}
      \Receive a message on port $q$\; \label{line:bitsnd:repeat-recv}
      \uIf{$q = 1$}{ \label{line:bitsnd:repeat-if1}
        $in \leftarrow in + 1$\; 
        \label{line:bitsnd:repeat-incin}
        \Send a message on port $q$\;
        \label{line:bitsnd:sendport1secondcase}
      }\Else{
      \Send a message on port $1 - q$\; \label{line:bitsnd:repeat-send}
      }
    } 
  }

  \uIf{$in = 0$}{ \label{line:bitsnd:return-bot}
    \Return $\bot$\; \label{line:bitsnd:return-bot2}
  }\uElseIf{$in=1$}{ \label{line:bitsnd:return-0}
    \Return $0$\; \label{line:bitsnd:return0}
  }\ElseIf{$in=2$}{ \label{line:bitsnd:return-1}
     \Return $1$\; \label{line:bitsnd:return1}
  }
\end{algorithm}

\begin{algorithm}
  \footnotesize
  \DontPrintSemicolon
  \SetKw{Send}{send}
  \SetKw{Receive}{receive}
  \SetKw{Return}{return}

  \caption{\texttt{CCWBitRelay}(): participate to the {\tt CCWBitSending} as a relay process, a relay will behave as a link but it will terminate its execution synchronously with the active processes executing {\tt CCWBitSending}.}
  \label{algo:bitrlay}
   $cnt \leftarrow 0$\; 
    \If{$leader$}{ \label{line:bitrlay:leader-send}
    \Send a message on port $1$\; \label{line:bitrlay:leader-port1}
  }

   \label{line:bitrlay:cnt0}
  \While{$cnt \ge -1$}{ \label{line:bitrlay:while-start}
    \Receive a message on port $q$\; \label{line:bitrlay:receive}
    \If{$\neg (leader \land q=0 \land cnt=-1)$}{
    \Send a message on port $1-q$\;
    }
    \uIf{$q = 1$}{ \label{line:bitrlay:if-q0}
      $cnt \leftarrow cnt + 1$\; \label{line:bitrlay:dec-cnt}
    }\Else{ \label{line:bitrlay:else-q1}
      $cnt \leftarrow cnt -1$\; \label{line:bitrlay:inc-in}
    }
  } \label{line:bitrlay:while-end}
   
\end{algorithm}

\subsubsection{Correctness of CCWBitsending}

We now prove correctness of \texttt{CCWBitSending} and
\texttt{CCWBitRelay}.  

\begin{lemma}\label{lem:ccw-active-ring}
Assume that all processes execute \texttt{CCWBitSending} (no relays).
Then every process terminates and each $p_i$ returns $b_{i+1}$, the bit of its
clockwise neighbor. 
During the execution of the algorithm at most $6n$ pulses are sent. 
Moreover, \texttt{CCWBitSending} has the composable ending property.
\end{lemma}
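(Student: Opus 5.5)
The plan is to fix a ring $(p_0,\dots,p_{n-1})$ with the leader at $p_0$ and to track pulse counts link by link. Write $c_i\in\{0,1,2\}$ for the number of data pulses $p_i$ sends on port $0$ at the start, so $c_i=0,1,2$ according as $b_i=\bot,0,1$. The key step is to classify every counter-clockwise pulse that $p_{i-1}$ receives from $p_i$ into one of three kinds:
\begin{itemize}
\item the $c_i$ data pulses sent before the while loop;
\item one pulse for each clockwise pulse $p_{i-1}$ sent that reached $p_i$ while $p_i$ was still in its while loop (line~\ref{line:bitsnd:forward1} at $p_i$'s side is the mirror: $p_i$ answers port-$0$ pulses only in the repeat loop);
\item the echo sent in the repeat loop (line~\ref{line:bitsnd:repeat-send}).
\end{itemize}
Symmetrically, I would classify the clockwise pulses on link $(p_{i-1},p_i)$. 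There are three sources: echoes of port-$1$ arrivals at $p_{i-1}$ (line~\ref{line:bitsnd:forward1}), the single synchronization forward (line~\ref{line:bitsnd:send-final1}), and the final forwards in the repeat loop.

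The core invariant I would prove is the following. $p_i$ leaves its while loop exactly after receiving $c_i+1$ pulses on port $0$. While it is in the loop, these are precisely the $c_i$ echoes from $p_{i-1}$ of its data pulses plus the synchronization pulse forwarded by $p_{i-1}$. Consequently, $p_{i-1}$ has counted $in=c_i$ port-$1$ arrivals before it receives anything further from $p_i$. I would prove this by induction along the clockwise direction starting from the leader. The leader sends the synchronization pulse unconditionally, so $p_1$'s $(c_1+1)$-th port-$0$ pulse is the sync pulse only if the $c_1$ earlier ones are echoes. FIFO guarantees this: the leader echoes each data pulse of $p_1$ immediately on receipt and sends sync at most once. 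The delicate point is ordering, namely that the sync could overtake echoes. I would handle it by noting that $p_{i-1}$ forwards sync only after finishing its own while loop. After that, $p_{i-1}$ is in the repeat loop, where port-$1$ arrivals are still echoed back on port $1$ and counted in $in$. So every data pulse of $p_i$ is echoed exactly once, and the count at $p_i$ is exact either way, whether the echo precedes or follows the sync.

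Next comes termination. The sync wave goes around once and returns to the leader on port $0$. The leader then exits its while loop and sends the final pulse on port $1$. Each non-leader is in its repeat loop when this pulse arrives, because by the invariant it had already forwarded sync. It forwards the pulse on port $1$ via the $q=0$ branch, and that pulse is the termination wave. At that moment each $p_i$ has $in=c_{i+1}$, because every data pulse of $p_{i+1}$ was sent before $p_{i+1}$ forwarded sync, and hence before the final wave on the same link (FIFO). So $p_i$ returns $b_{i+1}$. There is one subtlety at the leader: its own while loop requires $c_0+1$ port-$0$ pulses, and the last of these is the returning sync. The leader's final receive on line~\ref{line:bitsnd:leader-recv0} consumes the termination wave.

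For the pulse count I would bound each link. Counter-clockwise pulses number at most $2$ data plus $2$ echoes per link; clockwise pulses number the sync, the termination, and at most $2$ echoes. That gives at most $6$ per link and hence $6n$ overall. For the composable ending property, I would take $d=1$ and the leader as terminator. After sending the final pulse, the leader only waits on port $0$, so it never waits on port $d$, and non-leaders return right after forwarding exactly one pulse. Quiescence follows because every data pulse has been echoed before the termination wave passes.

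I expect the main obstacle to be the invariant's ordering argument, specifically excluding that $p_i$ mistakes a later pulse for the sync, or that some echo is still in flight when the termination wave passes.
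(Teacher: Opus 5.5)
Your skeleton matches the paper's proof: each $p_i$ leaves its while loop only after its acknowledgments plus the forwarded synchronization pulse, the sync wave circles once, the leader launches the termination wave, and the leader is the terminator with $d=1$. The gap is that you misread which pulses travel counter-clockwise, and this leaves the central claim $in_i=c_{i+1}$ unproved. In Algorithm~\ref{algo:bitsnd}, every send other than the initial data pulses (lines~\ref{line:bitsnd:send0}--\ref{line:bitsnd:send1}) goes on port~$1$. This covers line~\ref{line:bitsnd:forward1}, line~\ref{line:bitsnd:send-final1}, line~\ref{line:bitsnd:sendport1secondcase} (port $q=1$) and line~\ref{line:bitsnd:repeat-send} (port $1-q=1$). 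So the only counter-clockwise pulses on $\{p_{i-1},p_i\}$ are $p_i$'s $c_i$ data pulses, and your second and third kinds are empty. The paper's proof starts from exactly this fact, and you need it. Since $in$ is incremented on every port-$1$ arrival in both loops, $in_i\le c_{i+1}$ holds only because every port-$1$ arrival at $p_i$ is a data pulse of $p_{i+1}$. Your classification asserts the opposite, namely that further pulses from $p_i$ reach $p_{i-1}$'s port~$1$ after it has counted $c_i$; that would make $in$ overcount. It also makes your pulse tally inconsistent: $2+2$ counter-clockwise plus $1+1+2$ clockwise is $8$, not $6$. The correct per-link tally is $c_i\le 2$ counter-clockwise and $c_i+2\le 4$ clockwise.

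The lower bound $in_i\ge c_{i+1}$ and your ordering claims also need a different justification.
\begin{itemize}
\item FIFO on the link $p_0\to p_1$ makes the leader's sync the \emph{first} port-$0$ pulse at $p_1$, not the $(c_1+1)$-th.
\item The leader has no repeat loop; after its while loop it listens only on port~$0$. So ``port-$1$ arrivals are still echoed'' does not cover $p_1$.
\item Your appeal to ``FIFO on the same link'' for $in_i=c_{i+1}$ is invalid. The data pulses of $p_{i+1}$ travel counter-clockwise on $\{p_i,p_{i+1}\}$, while the termination wave reaches $p_i$ clockwise on $\{p_{i-1},p_i\}$. Being sent early says nothing about when they are received.
\end{itemize}
What is needed is a causal chain around the ring. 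The leader needs $c_0+1$ port-$0$ pulses, at most $c_0$ of which are acknowledgments, so it cannot leave its while loop before $p_{n-1}$ forwards sync. By the same counting, $p_j$ cannot forward sync before $p_{j-1}$ does, all the way back to $p_1$. Before its termination pulse, the leader sends $p_1$ only the sync and acknowledgments. Hence $p_1$ exits only after the leader has acknowledged all of $p_1$'s data inside its own while loop. In general, $p_{i+1}$ exits only after $p_i$ has received and acknowledged all $c_{i+1}$ data pulses, and all of this happens before the termination wave is launched. The same chain shows the termination forward is never among a process's $c_i+1$ while-loop pulses. Therefore the next port-$0$ pulse in the repeat loop is indeed the termination wave.
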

\begin{proof}
Let $\kappa(\bot)=0$, $\kappa(0)=1$, $\kappa(1)=2$.
By lines~\ref{line:bitsnd:bit0}--\ref{line:bitsnd:cnt0}, $p_i$ sends
$\kappa(b_i)$ messages on port~$0$ and sets $cnt_i \gets \kappa(b_i)$.

The only sends on port~$0$ in Algorithm~\ref{algo:bitsnd} are the initial
bit-encoding sends in lines~\ref{line:bitsnd:send0}--\ref{line:bitsnd:send1}.
All synchronization and termination pulses are sent on port~$1$.

Hence every message received on port~$1$ by $p_i$ is a bit-encoding pulse
originally sent on port~$0$ by its clockwise neighbor $p_{i+1}$.
There are exactly $\kappa(b_{i+1})$ such pulses in the whole execution.
Moreover, $in_i$ is incremented \emph{only} when $p_i$ receives on port~$1$
(lines~\ref{line:bitsnd:inc-in} and~\ref{line:bitsnd:repeat-incin}), and it is
incremented exactly once per such receive.
Thus, if we show that $p_i$ is still executing the algorithm when each of
those $\kappa(b_{i+1})$ pulses arrives, then necessarily $in_i = \kappa(b_{i+1})$.

The leader sends one extra synchronization pulse clockwise on port~$1$ at
line~\ref{line:bitsnd:leader-port1}, this pulse will do two loops around the ring, at the end of the first loop the leader will know that each process has received the bit sent to him (this is the transmission end wave), the second loop is used to make processes terminate (this is the termination wave).

Now look at $p_i$ and its counterclockwise neighbor $p_{i-1}$.
While $p_i$ is in the \texttt{while}-loop, $cnt_i$ decreases only on
receives from port~$0$, i.e., on pulses sent on port~$1$ by $p_{i-1}$.
Symmetrically, $p_{i-1}$ decreases its own counter $cnt_{i-1}$ only on
receives from its port~$0$, i.e., on pulses sent on port~$1$ by $p_{i-2}$,
and so on around the ring.

Crucially, $p_i$ leaves the \texttt{while}-loop only once $cnt_i < 0$
(line~\ref{line:bitsnd:while-start}), i.e., after it has received strictly
more than $\kappa(b_i)$ pulses from $p_{i-1}$ on port~$0$.  
The first $\kappa(b_i)$ of these pulses are 
acknowledgments that $p_{i-1}$ has already seen all $\kappa(b_i)$ bit pulses
sent by $p_i$ on port~$0$ (recall that $p_{i-1}$ always reply to a pulse on port $1$ either by line \ref{line:bitsnd:forward1} or by line \ref{line:bitsnd:sendport1secondcase}, we are for now assuming that $p_{i-1}$ does not terminate); the “$+1$” comes from the  wave initiated
by the leader’s termination end pulse and propagated via forwards on
port~$1$ (at line \ref{line:bitsnd:send-final1}). This can be shown by a simple induction on processes.
Therefore $p_i$ cannot send its transmission end pulse
(line~\ref{line:bitsnd:send-final1}) before $p_{i-1}$ has finished processing
all of $p_i$’s bit pulses, this also implies that $p_{i-1}$ cannot terminate before $p_{i}$ executes line~\ref{line:bitsnd:send-final1}, remember that the termination pulse is the second loop of the transmission end pulse.

Applying this argument consistently around the ring, we obtain that the
leader’s transmission end pulse and the local $cnt$-tests ensure no process forwards the final clockwise transmission end wave
until its counterclockwise neighbor has consumed all relevant bit pulses.
In particular, once the transmission end pulse reaches again the leader (that then exits the while at line \ref{line:bitsnd:while-start} and sends the termination wave at line \ref{line:bitsnd:send-final1}) all transmissions have been received.

Consequently, for each $i$ all $\kappa(b_{i+1})$ bit pulses sent by
$p_{i+1}$ on port~$0$ are received on port~$1$ at $p_i$ while $p_i$ is still
executing; none can be lost or delayed past $p_i$’s termination.
Thus $in_i$ is incremented exactly $\kappa(b_{i+1})$ times, and we conclude
that $in_i = \kappa(b_{i+1})$.

Notice that once the leader sends the termination wave (executing line \ref{line:bitsnd:send-final1}) all other processes are executing 
the repeat-until loop at line \ref{line:bitsnd:repeat-start} and no other message is in the network, thus each process receiving it will forward the pulse (at line \ref{line:bitsnd:repeat-send}) and terminate with the correct output. The leader will be the last process to receive the pulse back and it will be the last to terminate with a correct output leaving no remaining pulses in the network. Thus \texttt{CCWBitSending} satisfies Definition~\ref{def:composable-ending} with terminator = leader and $d=1$.

Finally, from the above a simple counting on the number of pulse sent by each process show that this is at most $6$ (two pulses to transmit its bit, two pulses to acknowledge the received bits, and two pulses one for the transmission end and one for the termination wave), thus a total bound of $6n$ pulses generated by the algorithm. 
\end{proof}

\begin{lemma}\label{lem:ccw-relay}
Consider a maximal interval of processes that execute
\texttt{CCWBitRelay} (Algorithm~\ref{algo:bitrlay}), while all other
processes execute \texttt{CCWBitSending}.
Then this interval is equivalent to a single link between its
two neighboring active processes and all relays in the interval terminate. If the leader is a \texttt{CCWBitRelay} it will be the last process to terminate.
\end{lemma}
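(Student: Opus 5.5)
The argument is a simulation: the relay interval $p_{a+1},\dots,p_{b-1}$, between active processes $p_a$ (its counterclockwise side) and $p_b$ (its clockwise side), reproduces link behaviour in both directions. Each non-leader relay executes exactly one ``receive on $q$, send on $1-q$'' per pulse. So every pulse entering the interval on one side leaves it on the other side, in FIFO order.

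\textbf{Step 1: pulse conservation.} By induction along the interval, the sequence of pulses $p_b$ emits counterclockwise is delivered to $p_a$ on port~$1$. Likewise, the sequence $p_a$ emits clockwise (acknowledgements, transmission-end and termination pulses) is delivered to $p_b$ on port~$0$. Order and multiplicity are preserved in each direction. Since the active processes' behaviour depends only on the number of pulses received per port, the run is identical to one in which the interval is replaced by a single link. Lemma~\ref{lem:ccw-active-ring} then gives correctness for the active processes. The inductive step applies to a relay only while it is still in its loop, so this is conditional on Step 2.

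\textbf{Step 2: the counter invariant.} At a non-leader relay, $cnt$ equals (port-$1$ receptions) $-$ (port-$0$ receptions). Every counterclockwise pulse comes from $p_b$'s bit encoding, so the number of port-$1$ receptions is at most $\kappa(b_b)\le 2$. By the analysis of Lemma~\ref{lem:ccw-active-ring}, the clockwise pulses crossing the link $(p_a,p_b)$ are one acknowledgement per bit pulse, then the transmission-end pulse, then the termination pulse. In the FIFO order at a relay, every acknowledgement arrives after the bit pulse it answers. Hence $cnt\ge 0$ holds until the transmission-end pulse, which brings $cnt$ to $-1$. The termination pulse brings it to $-2$, and the relay exits right after forwarding it. No premature exit can occur, because $cnt<-1$ requires both end pulses. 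No later pulse arrives, since the termination wave is the last pulse on that link.

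\textbf{Step 3: the leader relay.} This is the main obstacle. The leader relay does not merely forward: it injects the synchronization pulse on port~$1$, which plays the role of the leader's transmission-end wave. When it receives that pulse back on port~$0$ with $cnt=-1$, it absorbs it. I would argue that, under the convention above, $cnt$ reaches $-1$ on port~$0$ receptions exactly when the transmission-end wave completes its first loop. At that moment it exits the loop, and it must then launch the termination wave so that the second loop occurs. It is ambiguous in the pseudocode whether the absorbed arrival doubles as the launch or something else sends the next pulse. Either way, the identification to be made is that the leader relay together with its adjacent active process $p_b$ jointly emulates the active leader of Lemma~\ref{lem:ccw-active-ring}.

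Once that identification is made, the termination wave of the active-only analysis crosses every process, and the leader relay is the last to receive it, so it terminates last. If the pseudocode's accounting does not match here, I would adjust the invariant to count the injected pulse, so that the exit condition corresponds exactly to receipt of the second-loop pulse.
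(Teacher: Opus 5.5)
Your Steps~1 and~2 follow the paper's argument for non-leader relays: every pulse is forwarded, and $cnt$ can fall below $-1$ only after both end pulses have passed, because acknowledgements never outnumber the bit pulses they answer. The gap is Step~3, which is exactly the part of the statement about the leader relay. You leave it unresolved, and the mechanism you describe is not what the pseudocode does.

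You claim that the synchronization pulse, returning after its first loop, finds $cnt=-1$, is absorbed and ends the loop, and you then call it ambiguous who launches the termination wave. In fact the leader relay has $cnt=0$ when the transmission-end pulse comes back:
\begin{itemize}
\item The active process $p_b$ clockwise of the leader's segment leaves its while loop only after receiving all $\kappa(b_b)$ acknowledgements. Each of these crosses the leader relay after the bit pulse it answers.
\item $p_b$ sends its own transmission-end pulse only after that exit. The pulse from $p_a$ that returns to the leader relay is causally after it.
\end{itemize}
So the guard is not triggered and the leader relay forwards the pulse. This forwarding is the launch of the termination wave, the analogue of line~\ref{line:bitsnd:send-final1} for an active leader, and $cnt$ becomes $-1$. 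No further bit or acknowledgement can arrive. The next pulse the leader relay receives is therefore the same pulse after its second loop, on port~$0$ with $cnt=-1$. Only this pulse is suppressed; $cnt$ drops to $-2$ and the loop exits. Every other process has already forwarded the termination pulse and exited, so the leader relay terminates last.

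Thus the leader relay by itself plays the active leader's control role: it injects the first end pulse, turns the returning transmission-end pulse into the termination pulse by forwarding it, and absorbs the termination pulse. The pseudocode is not ambiguous, and you need neither a modified invariant nor a joint emulation with $p_b$. Without this computation, though, your proposal shows neither that the leader relay terminates nor that it terminates last.

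A smaller inaccuracy in Step~2: the clockwise stream on a link is not necessarily all acknowledgements followed by the transmission-end pulse. An active process can leave its while loop and send its transmission-end pulse before its clockwise neighbour's bit pulses reach it, and then acknowledge them at line~\ref{line:bitsnd:sendport1secondcase}. Also, relays between the leader relay and $p_b$ see the injected synchronization pulse, not $p_a$'s transmission-end pulse, as their first end pulse. So a relay's $cnt$ can hit $-1$ and rise again, or stay nonnegative right after the transmission-end pulse. Your conclusion still holds, because $cnt<-1$ requires both end pulses and the termination pulse is the last one on the link.
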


\begin{proof}
Each relay forwards every incoming pulse on port~$q$ to port~$1-q$
(except in the single special case described below) and updates
$cnt$ by $+1$ on port~$1$ and by $-1$ on port~$0$
(lines~\ref{line:bitrlay:receive}--\ref{line:bitrlay:inc-in}).
Thus pulses traverse the relay interval without duplication or loss, and in
the same direction as in a single-link segment.

Non-leader relays run the loop while $cnt \ge -1$, so they stop only after
the net number of pulses from port~$0$ exceeds by at least~$1$ those from
port~$1$, which happens only when a relay forwards the termination pulse (observe that the $cnt=-1$ when the relay forwards the transmission end pulse).

The leader relay additionally injects one synchronization pulse on port~$1$
(line~\ref{line:bitrlay:leader-port1}) and suppresses forwarding of exactly
one final pulse arriving from port~$0$ when $cnt=-1$ (guard in the
\texttt{if}-statement), this pulse is the termination pulse that is reaching back the leader.
Processing that last pulse still decrements $cnt$ to $-2$, so the loop
condition fails and the leader relay terminates.
From the point of view of its two active neighbors, thus every pulse entering the
interval eventually exits on the other side and each relay terminates exactly after forwarding the termination pulse.
\end{proof}

\begin{theorem}\label{thm:ccw-correct}
In an oriented ring where active processes run \texttt{CCWBitSending} with
inputs $b_i \in \{0,1,\bot\}$ and all other processes run
\texttt{CCWBitRelay}, every process terminates and each active process $p_i$
returns exactly the bit of its clockwise neighbor $p_{i+1}$ (treating relay
segments as links). During the execution of the algorithm at most $6n$ pulses are sent.  Moreover, the combined execution of
\texttt{CCWBitSending}/\texttt{CCWBitRelay} has the composable ending
property.

\end{theorem}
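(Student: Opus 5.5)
The plan is a reduction to the all-active case by contracting relay segments. First I apply Lemma~\ref{lem:ccw-relay} to every maximal interval of relays, which lets us replace each such interval by a single link between its two bounding active processes. This gives the virtual ring $R$ of Theorem~\ref{thm:msgxchangering}. Any execution on the physical ring then induces an execution on $R$ in which every active process sees exactly the same sequence of port events. This holds because relays forward each pulse in the same direction without loss or duplication, and FIFO order is preserved along the interval.

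There is one subtlety, which I handle by a case split on who the leader is. If the leader is active, $R$ contains the leader and all its processes run \texttt{CCWBitSending}, so Lemma~\ref{lem:ccw-active-ring} applies verbatim on $R$. Hence each active $p_i$ returns the bit of its clockwise neighbour in $R$, which is what the statement means by treating relay segments as links. If the leader is a relay, then $R$ has no active leader, and Lemma~\ref{lem:ccw-active-ring} cannot be quoted directly. Here I argue that the relay leader plays the leader's role in the synchronization protocol. It injects the synchronization pulse on port~$1$, and it lets the first clockwise loop (transmission end) pass, since its $cnt$ is still $\ge -1$. It absorbs the second loop (termination) exactly when $cnt=-1$. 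I then rerun the induction of Lemma~\ref{lem:ccw-active-ring} in this setting. Every active process exits its \texttt{while} loop only after receiving $\kappa(b_i)+1$ pulses from its counterclockwise neighbour, which are its acknowledgements plus the transmission-end pulse. So when the transmission-end pulse returns to the relay leader, all bit pulses have been consumed, and the termination wave it then lets propagate terminates everyone.

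For the pulse bound, each active process sends at most $6$ pulses, as in Lemma~\ref{lem:ccw-active-ring}. For relays I count forwards: a relay forwards each pulse crossing it once. Across any relay interval, the pulses crossing it are at most two bit pulses of the active neighbour, their two acknowledgements, and the two waves. So each relay sends at most $6$ pulses, giving at most $6n$ in total.

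For the composable ending property, I take the leader (active or relay) as terminator with $d=1$. After forwarding the transmission-end pulse, the leader only waits on port~$0$: in \texttt{CCWBitSending} at line~\ref{line:bitsnd:leader-recv0}, and as a relay in the guarded final receive. The termination wave then travels clockwise around the ring. Each non-leader, active or relay, exits right after receiving it on port~$0$ and forwarding one pulse on port~$1$. The leader returns upon receiving it back. Quiescence follows because, by the counting above, no other pulse is in transit when the termination wave starts.

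The main obstacle is the relay-leader case. There I must verify that the relay leader's $cnt$ is exactly $-1$ when the termination pulse arrives, and not earlier, for instance after absorbing a bit pulse from port~$0$. I would settle this by tracking $cnt$ at the relay leader as the net flow: it counts pulses from port~$1$ minus pulses from port~$0$. Bit pulses and their acknowledgements cancel in pairs across the interval, and the single injected synchronization pulse accounts for the offset of $-1$.
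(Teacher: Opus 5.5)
Your route is the paper's: contract each maximal relay interval via Lemma~\ref{lem:ccw-relay} and apply Lemma~\ref{lem:ccw-active-ring} on the virtual ring. You are more careful than the paper in two places, and both additions are needed:
\begin{itemize}
\item The paper's proof applies Lemma~\ref{lem:ccw-active-ring} to the contracted ring without noticing that this ring has no leader when the leader is a relay. Your case split covers that situation.
\item The paper never justifies the $6$-pulse bound for relays. Your per-relay count does: two bit pulses, two acknowledgements and two waves, with the relay leader's one injection offset by its one absorption.
\end{itemize}

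In the ``not earlier'' part of your $cnt$ argument, the net balance alone is not enough; what you need is the \emph{ordering}.
\begin{itemize}
\item Each acknowledgement (arriving on port~$0$) reaches the relay leader after its bit pulse (arriving on port~$1$), so $cnt\ge 1$ whenever an acknowledgement arrives.
\item The transmission-end pulse can return only after the first active process clockwise of the leader has collected all its acknowledgements. So $cnt=0$ when the wave returns, not merely $cnt\ge -1$, and that is why the pulse is forwarded rather than absorbed.
\end{itemize}
Your induction already supplies both facts.

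One step does not hold as written: the composable-ending claim for a relay leader. In \texttt{CCWBitRelay} the guard suppresses the \emph{send}, not the receive. After forwarding the termination wave with $cnt=-1$, the relay leader still executes ``receive a message on port $q$'' on either port. So the clause of Definition~\ref{def:composable-ending} requiring that the terminator never wait on port $d=1$ fails for the pseudocode.

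Concretely, its clockwise neighbour returns right after forwarding the wave and may start the next algorithm by sending on its port~$0$. The relay leader would consume that pulse, forward it, and raise $cnt$ to $0$. It would then forward the genuine termination pulse instead of absorbing it. The paper's proof does not address this either.

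The repair is one line: once $cnt=-1$, the relay leader receives only on port~$0$. By your own $cnt$ analysis, no port-$1$ pulse of this execution arrives after that point, so correctness and the pulse count are unaffected. Alternatively, restrict the claim to an active leader, as in \texttt{MsgExchange}, where your argument via line~\ref{line:bitsnd:leader-recv0} is correct.
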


\begin{proof}
Collapse each maximal relay interval into a single abstract link.
By Lemma~\ref{lem:ccw-relay}, this does not change the behavior observed by
the active endpoints, and all relays terminate when the endpoints do.
On the resulting virtual ring, all processes are active and execute
\texttt{CCWBitSending}, so Lemma~\ref{lem:ccw-active-ring} applies: every
process terminates and $p_i$ returns $b_{i+1}$.
The composable ending property follows from	Lemma~\ref{lem:ccw-active-ring} on the virtual ring, together with the fact
	that relays only forward pulses and terminate in the same final termination
	wave (Lemma~\ref{lem:ccw-relay}).
\end{proof}

\subsection{ComputeOR Algorithm}

The pseudocode is shown in Algorithm~\ref{algo:computeor_simplified}. In this algorithm, each process $p_i$ takes as input a Boolean variable $bool_i$, and the algorithm allows all processes to compute the global logical OR $\bigvee_{p_i \in \Pi} bool_i$.

The leader process $p_\ell$ is the one that kick-starts the process by sending a counter-clockwise message if $bool_l$ is false and clockwise otherwise.

Suppose that $bool_l$ is \texttt{false}. The counter-clockwise message sent by $p_\ell$ will be forwarded by each non-leader process $p_i$ for which $bool_i = \texttt{false}$. If all processes have $bool_i = \texttt{false}$, the leader will eventually receive the message back on port~1 (see line~\ref{line:computeor:leader-q1}). The leader will then send another counter-clockwise message, which will circulate around the ring and inform each process that the result of the OR operation is \texttt{false}. Each process detects this condition upon receiving two consecutive messages on port~1 (see line~\ref{line:computeor:nonleader-false-if}), after which it decides that the output is \texttt{false}. When this final message returns to the leader, the leader also sets its output to \texttt{false}.

Suppose now that at least one process $p_j$ has $bool_j = \texttt{true}$, restricting ourselves now to the case where $p_j \neq p_{\ell}$. In this case, when $p_j$ receives the counter-clockwise message initiated by the leader, it will send a clockwise message instead of forwarding the counter-clockwise one (see lines~\ref{line:computeor:nonleader-true-send1}--\ref{line:computeor:nonleader-true-send0}). This clockwise message propagates back toward the leader, indicating to each process receiving this message that at least one process holds a \texttt{true} value. All processes in the clockwise segment between $p_\ell$ and $p_j$ (excluding $p_\ell$ and $p_j$) terminate outputting true when the message is received (they received two consecutive messages on ports that are not both $1$). When the leader receives this message on port~0 (see lines~\ref{line:computeor:leader-q0-booltrue}--\ref{line:computeor:leader-q0-recv1}), it concludes that the global OR is \texttt{true}. The leader then sends a message clockwise. This message serves to inform all processes in the clockwise segment from $p_\ell$ to $p_j$ that the result of the OR is \texttt{true}. The message is forwarded until it reaches $p_j$, which is waiting for it at line~\ref{line:computeor:nonleader-true-recv}. At this point, $p_j$ sends the message back counterclockwise. This returning message ensures that all processes in the aforementioned segment with input \texttt{false} can decide: they will detect two consecutive messages arriving on different ports (not both on port~1) and thus conclude with output \texttt{true}. Processes with input \texttt{true} also terminate during this phase, as they execute lines~\ref{line:computeor:nonleader-true-send1}--\ref{line:computeor:nonleader-true-send0}. Finally, the leader decides that the output is \texttt{true} upon receiving this message back.

When $bool_l = \texttt{true}$, the leader sends a first message clockwise. This message is forwarded by all processes in the network until it returns to the leader. Any forwarding process with input \texttt{true} then waits for pulse arriving from the counter-clockwise direction. For this reason, the leader subsequently sends a counter-clockwise message to make all processes commit to a value. This counter-clockwise message is forwarded by all processes, which commit upon its reception to output $\texttt{true}$. Processes with output $\texttt{false}$ recognize this condition by receiving two consecutive messages that do not both arrive on port~1.

Finally, to make \texttt{ComputeOR} safely composable with subsequent procedures
(Section~\ref{sec:composability}), we append a last \emph{clockwise barrier wave}
after the OR value has been determined. This barrier is collectively executed before the termination of \texttt{ComputeOR}.
The leader sends one final pulse clockwise; each non-leader forwards it once and
then returns; finally, the leader receives the pulse back and returns.
This adds exactly $n$ pulses and ensures that the procedure has the composable
ending property.

\begin{algorithm}
  \footnotesize
  \DontPrintSemicolon
  \SetKw{Send}{send}
  \SetKw{Receive}{receive}
  \SetKw{Return}{return}

  \caption{\texttt{ComputeOR}($bool$): returns $true$ if at least one process called it with $true$, $false$ otherwise.}
  \label{algo:computeor_simplified}

  $ans \leftarrow true$\;

  \If{$leader$}{ \label{line:computeor:if-leader}
    \uIf{$bool$}{ \label{line:computeor:leader-booltrue}
      \Send a message on port $1$\; \label{line:computeor:leader-send1}
    }\Else{ \label{line:computeor:leader-boolfalse}
      \Send a message on port $0$\; \label{line:computeor:leader-send0}
    }
  }

  \Receive message on port $q$\; \label{line:computeor:receive-q}

  \uIf{$leader$}{ \label{line:computeor:leader-block}
    \uIf{$q = 1$}{ \label{line:computeor:leader-q1}
      \Send message on port $0$\; \label{line:computeor:leader-q1-send0}
      \Receive message on port $1$\; \label{line:computeor:leader-q1-recv1}
      $ans \leftarrow false$\; \label{line:computeor:leader-return-false1}
    }
    \Else{ 
      \uIf{$bool$}{ \label{line:computeor:leader-q0-booltrue}
        \Send message on port $0$\; \label{line:computeor:leader-q0-send0}
      }\Else{ \label{line:computeor:leader-q0-boolfalse}
        \Send message on port $1$\; \label{line:computeor:leader-q0-send1}
      }
      \Receive message on port $1$\; \label{line:computeor:leader-q0-recv1}
    }
  }
  \Else{ \label{line:computeor:nonleader}
    \uIf{$bool$}{ \label{line:computeor:nonleader-true-start}
      \Send message on port $1$\; \label{line:computeor:nonleader-true-send1}
      \Receive message on port $1 - q$\; \label{line:computeor:nonleader-true-recv}
      \Send message on port $0$\; \label{line:computeor:nonleader-true-send0}
    }\Else{ \label{line:computeor:nonleader-false-start}
      \Send message on port $1 - q$\; \label{line:computeor:nonleader-false-send1q}
      \Receive message on port $u$\; \label{line:computeor:nonleader-false-recvu}
      \Send message on port $1 - u$\; \label{line:computeor:nonleader-false-send1u}
      \If{$u = 1 \land q = 1$}{ \label{line:computeor:nonleader-false-if}
        $ans \leftarrow false$\; \label{line:computeor:nonleader-return-false}
      }
    } \label{line:computeor:nonleader-false-end}
  }

  \tcp{Barrier for composability} 
  \uIf{$leader$}{
    \Send message on port $1$\;
    \Receive message on port $0$\;
  }\Else{
    \Receive message on port $0$\;
    \Send message on port $1$\;
  }

  \Return $ans$\; \label{line:computeor:return-true}
\end{algorithm}

\subsubsection{Correctness of the \texttt{ComputeOR} Algorithm}

We now prove the correctness of Algorithm~\ref{algo:computeor_simplified}.

To show that the returned values are correct, we distinguish cases based on
the vector $(\mathit{bool}_i)_{p_i \in \Pi}$. We want also the remark that, by construction, at each instant of execution of \texttt{ComputeOR} at most one pulse is present in the network.

\begin{lemma}\label{lem:all-false}
Assume $bool_i = \texttt{false}$ for every process $p_i$. Then every process
terminates and outputs $\texttt{false}$. Moreover, this execution has the composable ending property.
\end{lemma}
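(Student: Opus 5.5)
We trace the execution of Algorithm~\ref{algo:computeor_simplified} directly, using the remark that at every instant at most one pulse is in transit. When all inputs are false this makes the execution essentially unique up to timing, so the argument reduces to following that single pulse.

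\emph{Step 1 (the first counter-clockwise wave).} Since $bool_\ell=\texttt{false}$, the leader sends one pulse on port~$0$ and waits at line~\ref{line:computeor:receive-q}. By induction along the ring in counter-clockwise order $p_{\ell-1},p_{\ell-2},\dots$, each non-leader $p_i$ receives this pulse on port $q=1$. It then takes the branch at line~\ref{line:computeor:nonleader-false-start}, forwards the pulse on port $1-q=0$, and waits at line~\ref{line:computeor:nonleader-false-recvu}. No other pulse exists, so no process can be triggered in any other way. The pulse therefore returns to the leader on port~$1$, and the leader enters line~\ref{line:computeor:leader-q1}.

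\emph{Step 2 (the second counter-clockwise wave).} The leader sends a second pulse on port~$0$ (line~\ref{line:computeor:leader-q1-send0}). Each non-leader, still blocked at line~\ref{line:computeor:nonleader-false-recvu}, receives it on port $u=1$ and forwards it on port $1-u=0$. Because $q=u=1$, it sets $ans\leftarrow\texttt{false}$. The pulse returns to the leader on port~$1$ (line~\ref{line:computeor:leader-q1-recv1}), and the leader sets $ans\leftarrow\texttt{false}$. At this point every process holds $\texttt{false}$, and no pulse is in transit.

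\emph{Step 3 (the barrier and composability).} All non-leaders now wait on port~$0$ in the barrier, and the leader sends one pulse on port~$1$. Each non-leader receives it, forwards it once on port~$1$, and returns. The leader receives it back on port~$0$ and returns $\texttt{false}$, so no pulse remains and the execution is quiescent, using $3n$ pulses in total. For Definition~\ref{def:composable-ending}, take the leader as the terminator and $d=1$. After sending the barrier pulse, the leader waits only on port~$0$. Every other process returns immediately after receiving the pulse on port $1-d=0$ and forwarding exactly one pulse on port $d$. The leader is the last to return, since the barrier pulse reaches it only after every non-leader has already returned.

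The main obstacle is making rigorous the claim that each process receives the pulses on the ports assumed above, rather than a pulse arriving early from a neighbour. I would handle this with an explicit invariant, checked at each step: at most one pulse is in transit, and every process other than the one it is heading to is blocked waiting for a receive. Under this invariant asynchrony cannot reorder events, and the case analysis above is exhaustive.
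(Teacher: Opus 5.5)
Your proposal is correct and follows the paper's proof essentially step for step. Like the paper, you trace the two counter-clockwise waves, in which every non-leader sees $q=u=1$ and so sets $ans=\texttt{false}$, and then the clockwise barrier wave, taking the leader as terminator with $d=1$. Your explicit one-pulse-in-transit invariant and your check of each clause of Definition~\ref{def:composable-ending} make rigorous what the paper states as a remark before the lemmas and asserts only briefly at the end of its proof.
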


\begin{proof}
Since $bool_\ell = \texttt{false}$, the leader $p_\ell$ sends its initial
pulse counter-clockwise on port~$0$ at
line~\ref{line:computeor:leader-send0}.
Consider the pulse that starts at $p_\ell$ on port~$0$. 

Let $p_i$ be the first non-leader reached counter-clockwise from $p_\ell$.
By orientation, this pulse is received on port~$1$, hence in
line~\ref{line:computeor:receive-q} we have $q_i = 1$. Since $bool_i =
\texttt{false}$, $p_i$ executes the ``false'' branch at
line~\ref{line:computeor:nonleader-false-start} and sends a message on port
$1-q_i = 0$ at line~\ref{line:computeor:nonleader-false-send1q}, i.e., further
counter-clockwise.

By induction along the counter-clockwise direction, the same holds at every
non-leader: the first pulse received is this pulse (arriving on port~$1$),
so $q_i = 1$, and the process forwards the pulse on port~$0$. Eventually the
pulse reaches $p_\ell$ from its clockwise neighbor and is received on
port~$1$, so at the leader we have $q_\ell = 1$ in
line~\ref{line:computeor:receive-q}.

	Since $q_\ell = 1$, the leader takes the branch at
	line~\ref{line:computeor:leader-q1}. It sends another pulse on port~$0$ at
	line~\ref{line:computeor:leader-q1-send0}, starting a \emph{second}
	counter-clockwise wave, and then waits to receive a pulse on port~$1$ at
	line~\ref{line:computeor:leader-q1-recv1}. Upon that receive, it sets
	$ans \leftarrow \texttt{false}$ at line~\ref{line:computeor:leader-return-false1}.

Now consider the second pulse emitted by the leader on port~$0$. Each non-leader $p_i$ has already executed
lines~\ref{line:computeor:receive-q}
and~\ref{line:computeor:nonleader-false-send1q}, and is blocked at its second
receive in line~\ref{line:computeor:nonleader-false-recvu}. The next message
it receives is exactly this second wave, coming from its clockwise neighbor
on port~$1$. Hence $u_i = 1$ for all non-leaders.
Process $p_i$ then sends on port $1-u_i = 0$ at
	line~\ref{line:computeor:nonleader-false-send1u}, forwarding the second wave,
	and since $u_i = 1 \land q_i = 1$ holds at
	line~\ref{line:computeor:nonleader-false-if}, it sets
	$ans \leftarrow \texttt{false}$ at line~\ref{line:computeor:nonleader-return-false}.

	The second wave eventually reaches the leader from its clockwise neighbor on
	port~$1$, unlocking the receive at
	line~\ref{line:computeor:leader-q1-recv1}, after which the leader has
	$ans=\texttt{false}$.

	At this point the ring is quiescent, and the algorithm executes the final
	clockwise barrier wave (lines~\ref{line:computeor:return-true} and above).
	Each process returns $\texttt{false}$ after forwarding the barrier, and the
	leader returns last after receiving it back.

\end{proof}

\begin{lemma}\label{lem:leader-false-some-true}
Assume $bool_\ell = \texttt{false}$ and there exists at least one non-leader
$p_j$ with $bool_j = \texttt{true}$. Then every process terminates and
outputs $\texttt{true}$. Morevoer, this execution has the composable ending property.
\end{lemma}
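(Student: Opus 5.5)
I would trace the single pulse in the network, exploiting the remark that at each instant at most one pulse is in transit, so the execution is essentially sequential. Let $p_j$ be the first process with $bool_j=\texttt{true}$ met when walking counter-clockwise from $p_\ell$. I would split the non-leaders into two groups. The first group, $S$, consists of the processes strictly between $p_\ell$ and $p_j$ in the counter-clockwise direction; all of them have input \texttt{false}. The second group, $T$, consists of the processes strictly between $p_j$ and $p_\ell$ on the other side; they are never reached in the first phase.

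\textbf{Step 1: the first counter-clockwise wave and the bounce.} Exactly as in the proof of Lemma~\ref{lem:all-false}, the leader's pulse on port~$0$ is received on port~$1$ by each process of $S$. Each of these has $q=1$, forwards on port~$0$, and blocks at line~\ref{line:computeor:nonleader-false-recvu}. When the pulse reaches $p_j$, that process takes the \texttt{true} branch. It sends on port~$1$ (line~\ref{line:computeor:nonleader-true-send1}) and waits on port $1-q=0$. The bounced pulse travels clockwise through $S$. Each $p\in S$ receives it with $u=0$, forwards it on port~$1$, and has $(q,u)=(1,0)$, so it keeps $ans=\texttt{true}$ and proceeds to the barrier. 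The leader then receives this pulse on port~$0$, so $q_\ell=0$, and $ans_\ell$ stays \texttt{true}. Since $bool_\ell=\texttt{false}$, the leader sends on port~$1$ (line~\ref{line:computeor:leader-q0-send1}) and waits on port~$1$.

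\textbf{Step 2: the clockwise wave through $T$ and the return.} The leader's clockwise pulse first traverses $T$. I must check that processes in $T$ behave correctly even though the first pulse they ever see arrives on port~$0$. For $p\in T$ with $bool=\texttt{false}$, we get $q=0$, so it forwards on port~$1$ and waits. For $p\in T$ with $bool=\texttt{true}$, it sends on port~$1$ (again forwarding clockwise) and waits on port $1-q=1$. In both cases the pulse continues clockwise until it reaches $p_j$. There, $p_j$ is unblocked at line~\ref{line:computeor:nonleader-true-recv}, which expects port~$0$, and sends on port~$0$. This counter-clockwise pulse retraces $T$. Each false process in $T$ receives it with $u=1$ and forwards on port~$0$; since $(q,u)=(0,1)\neq(1,1)$, it keeps $ans=\texttt{true}$. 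Each true process in $T$ receives on port~$1$ as expected and sends on port~$0$. Finally, the leader receives the pulse on port~$1$ and completes line~\ref{line:computeor:leader-q0-recv1}.

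\textbf{Step 3: barrier and composability.} At this point the only processes that have not yet finished the main body are none: every non-leader has executed its second send and entered the barrier receive on port~$0$, and no pulse is in transit. The clockwise barrier wave started by the leader is then forwarded once by every process, and the leader receives it last. This matches Definition~\ref{def:composable-ending} with the leader as terminator and $d=1$: after sending the barrier, the leader only waits on port~$0$.

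\textbf{Main obstacle.} The delicate point is Step 2 for processes in $T$, in particular true processes other than $p_j$. I need to verify that their \texttt{true} branch acts as a plain forwarder in both directions. I also need to verify that no process in $S$ or $T$ consumes a barrier pulse too early. I would handle both by an induction along the ring, using the one-pulse-in-transit invariant to justify that each process's next receive is exactly the pulse described.
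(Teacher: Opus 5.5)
Your proposal is correct and follows the paper's proof essentially step by step: the same choice of $p_j$ as the first \texttt{true} process counter-clockwise from the leader, the same split into the arc $S$ (the paper's $A$) and the opposite arc $T$ (the paper's $B$), and the same wave-by-wave trace ending with the clockwise barrier. Your explicit check of the composable ending property (terminator $=$ leader, $d=1$) is slightly more detailed than the paper, which simply notes that the ring is quiescent before the barrier wave.
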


\begin{proof}
Let $p_j$ be the first process with input $\texttt{true}$ encountered when
moving counter-clockwise from the leader. That is, $p_j$ is the first $\texttt{true}$ process on
the sequence $p_\ell, p_{\ell-1}, p_{\ell-2},\dots$.

Partition the ring into three disjoint sets:
\begin{itemize}
  \item $A$: the non-leaders between $p_\ell$ and $p_j$ along this
        counter-clockwise path (excluding $p_\ell$ and $p_j$);
  \item $B$: the non-leaders on the other (clockwise) arc between $p_\ell$
        and $p_j$;
  \item the two distinguished processes $p_\ell$ and $p_j$.
\end{itemize}

Since $bool_\ell = \texttt{false}$, $p_\ell$ sends a pulse on port~$0$ at
line~\ref{line:computeor:leader-send0}. For each $p_i \in A$, the first
pulse received (at line~\ref{line:computeor:receive-q}) is this,
arriving on port~$1$, so $q_i = 1$. Because $bool_i = \texttt{false}$, each such $p_i$ executes the false branch at
line~\ref{line:computeor:nonleader-false-start}, sends on port $1-q_i = 0$
and waits at line~\ref{line:computeor:nonleader-false-recvu}. Hence the pulse
is forwarded counter-clockwise through $A$ and reaches $p_j$ from its
clockwise neighbor on port~$1$, giving $q_j = 1$.

	Since $bool_j = \texttt{true}$ and $q_j = 1$, $p_j$ executes the true branch
	at line~\ref{line:computeor:nonleader-true-start}. It sends a message on
	port~$1$ (clockwise) at line~\ref{line:computeor:nonleader-true-send1} and
		then waits at line~\ref{line:computeor:nonleader-true-recv} for a message on
		port $1-q_j = 0$. It will later send on port~$0$ at
		line~\ref{line:computeor:nonleader-true-send0} and then return $\texttt{true}$
		after the final barrier wave.

Consider the pulse that $p_j$ sends on port~$1$. It travels clockwise
through all processes in $A$ and eventually reaches the leader.

Each $p_i \in A$ has already $q_i = 1$, sent a counter-clockwise message on
port~$0$, and is blocked at its second receive in
line~\ref{line:computeor:nonleader-false-recvu}. The next message delivered
to $p_i$ is this clockwise wave, arriving from its counter-clockwise neighbor
on port~$0$, so $u_i = 0$. Process $p_i$ then sends on port $1-u_i = 1$ at
	line~\ref{line:computeor:nonleader-false-send1u}, forwarding the back wave,
	and since $u_i = 0$, the condition $u_i = 1 \land q_i = 1$ at
	line~\ref{line:computeor:nonleader-false-if} is false. Hence $p_i$ keeps
	$ans=\texttt{true}$.

The same clockwise wave eventually reaches the leader, which receives it
as its first message at line~\ref{line:computeor:receive-q} on port~$0$,
hence $q_\ell = 0$.

At the leader we have $q_\ell = 0$ and $bool_\ell = \texttt{false}$. So it
takes the ``else'' branch of line~\ref{line:computeor:leader-q1} and the
``else'' at line~\ref{line:computeor:leader-q0-boolfalse}, sending a message
on port~$1$ at line~\ref{line:computeor:leader-q0-send1}, and then waiting for a message on port~$1$ at
line~\ref{line:computeor:leader-q0-recv1}.

Each $p_i \in B$ has not yet received any message. Its first receive at
line~\ref{line:computeor:receive-q} is the pulse, arriving from its
counter-clockwise neighbor on port~$0$, so $q_i = 0$. Then:
\begin{itemize}
  \item if $bool_i = \texttt{false}$, $p_i$ executes the false branch, sends
        on port $1-q_i = 1$ and waits for its second message at
        line~\ref{line:computeor:nonleader-false-recvu};
  \item if $bool_i = \texttt{true}$, $p_i$ executes the true branch, sends on
        port~$1$ and waits for its second message on port $1-q_i = 1$ at
        line~\ref{line:computeor:nonleader-true-recv}.
\end{itemize}
In either case, the pulse continues clockwise through $B$ until it
reaches $p_j$ from the other side on port~$0$.
Recall that $p_j$ is blocked waiting at
line~\ref{line:computeor:nonleader-true-recv} for a message on port $1-q_j =
0$. The pulse now arrives at $p_j$ from its clockwise neighbor on
	port~$0$, satisfying this receive. Then $p_j$ sends a message on port~$0$
	(counter-clockwise) at line~\ref{line:computeor:nonleader-true-send0} and
	keeps $ans=\texttt{true}$.
The counter-clockwise message sent by $p_j$ travels from $p_j$ back to the
leader through the processes in $B$. It is sent on port~$0$ and thus each
predecessor in the counter-clockwise direction receives it on port~$1$.

For a process $p_i \in B$:
\begin{itemize}
  \item If $bool_i = \texttt{false}$, then is blocked at
        line~\ref{line:computeor:nonleader-false-recvu}. It now receives the
        pulse on port~$1$, so $u_i = 1$. It sends on port $1-u_i =
        0$ at line~\ref{line:computeor:nonleader-false-send1u}, forwarding
	        the pulse, and since $q_i = 0$ the condition
	        $u_i = 1 \land q_i = 1$ is false; thus it keeps $ans=\texttt{true}$.
	  \item If $bool_i = \texttt{true}$, then is blocked at
	        line~\ref{line:computeor:nonleader-true-recv} waiting on port $1-q_i
	        = 1$. The termination wave arrives on port~$1$ and satisfies this
	        receive. The process then sends on port~$0$ and keeps
	        $ans=\texttt{true}$.
\end{itemize}
Hence every process in $B$ has $ans=\texttt{true}$.

	Finally, the pulse reaches the leader from its clockwise neighbor
	on port~$1$. This is the message the leader is waiting for at
	line~\ref{line:computeor:leader-q0-recv1}. At this point the ring is quiescent
	and all processes execute the final clockwise barrier wave, after which every
	process returns $\texttt{true}$ (line~\ref{line:computeor:return-true}), with the
	leader returning last.
\end{proof}

\begin{lemma}\label{lem:leader-true}
Assume $bool_\ell = \texttt{true}$ (other inputs arbitrary). Then every
process terminates and outputs $\texttt{true}$. Morever, the execution has the composable ending property.
\end{lemma}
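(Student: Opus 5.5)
We trace the execution of Algorithm~\ref{algo:computeor_simplified} when $bool_\ell=\texttt{true}$, in the same style as Lemmas~\ref{lem:all-false} and~\ref{lem:leader-false-some-true}. We rely on the invariant that at most one pulse is in transit at any time, so each process's successive receives are determined by following that single pulse around the ring.

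\textbf{First wave (clockwise).} Since $bool_\ell=\texttt{true}$, the leader sends its initial pulse on port~$1$ (line~\ref{line:computeor:leader-send1}). Every non-leader $p_i$ receives this pulse as its first message, on port~$0$, so $q_i=0$. We then check both branches. If $bool_i=\texttt{false}$, $p_i$ sends on port $1-q_i=1$ and blocks at line~\ref{line:computeor:nonleader-false-recvu}. If $bool_i=\texttt{true}$, $p_i$ sends on port~$1$ and blocks at line~\ref{line:computeor:nonleader-true-recv}, waiting on port $1-q_i=1$. In both cases the pulse moves on clockwise. By induction along the clockwise direction it returns to the leader on port~$0$, so $q_\ell=0$.

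\textbf{Second wave (counter-clockwise).} The leader has $q_\ell=0$ and $bool_\ell=\texttt{true}$, so it sends on port~$0$ (line~\ref{line:computeor:leader-q0-send0}) and waits on port~$1$. This pulse reaches each non-leader on port~$1$. A false process gets $u_i=1$, forwards on port~$0$, and since $q_i=0$ the test $u_i=1\land q_i=1$ fails, so it keeps $ans=\texttt{true}$. A true process has its pending receive on port~$1$ satisfied, sends on port~$0$, and keeps $ans=\texttt{true}$. The pulse therefore returns to the leader on port~$1$ and unblocks line~\ref{line:computeor:leader-q0-recv1}. The leader never reaches line~\ref{line:computeor:leader-return-false1}, so its $ans$ stays \texttt{true}.

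\textbf{Barrier and composability.} After the second wave no pulse is in transit and every non-leader is waiting on port~$0$ at the barrier. The barrier wave is then exactly the termination wave of Definition~\ref{def:composable-ending}, with terminator the leader and $d=1$. Each non-leader returns right after receiving on port~$0$ and forwarding one pulse on port~$1$. The leader, after sending the barrier pulse, waits only on port~$0$ and returns last. Quiescent termination follows since every pulse sent has been consumed.

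\textbf{Main obstacle.} The delicate step is justifying that each process's next receive is the specific wave we expect, and not some stray pulse. The one-pulse-at-a-time invariant settles this. We would make it explicit by induction on sends: every process sends exactly one pulse per pulse it receives, except the leader at well-defined points, and the leader only sends once the previous pulse has returned to it.
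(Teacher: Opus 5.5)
Your proof is correct and follows the paper's argument essentially step by step. The clockwise first wave fixes $q_i=0$ at every non-leader. The leader's counter-clockwise second wave then releases both the false processes (with $u_i=1$, $q_i=0$) and the true processes (waiting on port~$1$), so every $ans$ stays \texttt{true}, and the final barrier completes the execution. Your explicit identification of the barrier as the termination wave, with the leader as terminator and $d=1$, spells out the composable-ending check that the paper only states implicitly.
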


\begin{proof}

Since $bool_\ell = \texttt{true}$, the leader sends its initial pulse
clockwise on port~$1$ at line~\ref{line:computeor:leader-send1}. 

Follow this pulse from $p_\ell$. At each non-leader $p_i$, this wave
is received from its counter-clockwise neighbor on port~$0$, hence the first
receive in line~\ref{line:computeor:receive-q} sets $q_i = 0$.

If $bool_i = \texttt{false}$, then $p_i$ executes the false branch at
line~\ref{line:computeor:nonleader-false-start}, sends on port $1-q_i = 1$
and waits at line~\ref{line:computeor:nonleader-false-recvu}. If $bool_i =
\texttt{true}$, it executes the true branch at
line~\ref{line:computeor:nonleader-true-start}, sends on port~$1$ and waits
at line~\ref{line:computeor:nonleader-true-recv} on port $1-q_i = 1$.

In all cases, the pulse is forwarded clockwise on port~$1$, so it
does a full tour of the ring and eventually returns to the leader from its
counter-clockwise neighbor on port~$0$. Therefore, the leader's first
receive at line~\ref{line:computeor:receive-q} has $q_\ell = 0$.

At the leader, we have $q_\ell = 0$ and $bool_\ell = \texttt{true}$. Thus it
executes the ``if'' branch at
line~\ref{line:computeor:leader-q0-booltrue}: it sends a termination pulse on port~$0$
at line~\ref{line:computeor:leader-q0-send0} and then waits for a pulse on port~$1$ at
line~\ref{line:computeor:leader-q0-recv1}.

Consider any non-leader $p_i$. After Phase~1, it has $q_i = 0$, has sent
once on port~$1$, and is blocked waiting for a second message:
\begin{itemize}
  \item on any port $u_i$ at
        line~\ref{line:computeor:nonleader-false-recvu} if $bool_i =
        \texttt{false}$;
  \item on port $1-q_i = 1$ at
        line~\ref{line:computeor:nonleader-true-recv} if $bool_i =
        \texttt{true}$.
\end{itemize}
The final pulse propagates counter-clockwise: it is always sent on
port~$0$ and thus received on port~$1$ by the previous process in that
direction. Hence, for every non-leader $p_i$, the second pulse it receives
is the termination pulse on port~$1$.

	If $bool_i = \texttt{false}$, its second receive in
	line~\ref{line:computeor:nonleader-false-recvu} yields $u_i = 1$. The process
	then sends on port $1-u_i = 0$ at
	line~\ref{line:computeor:nonleader-false-send1u}, forwarding the termination
	pulse, and since $q_i = 0$ the condition $u_i = 1 \land q_i = 1$ is false; it
	therefore keeps $ans=\texttt{true}$.

If $bool_i = \texttt{true}$, the process was waiting at
line~\ref{line:computeor:nonleader-true-recv} for a pulse on port $1-q_i =
	1$. The termination pulse arrives on port~$1$ and unlock this receive; the
	process then sends on port~$0$ at
	line~\ref{line:computeor:nonleader-true-send0} and keeps $ans=\texttt{true}$.

		After the counter-clockwise termination wave returns to the leader at
		line~\ref{line:computeor:leader-q0-recv1}, the ring is quiescent and all
		processes execute the final clockwise barrier wave. Afterwards every process
		returns $\texttt{true}$ (line~\ref{line:computeor:return-true}), with the leader
		returning last.
\end{proof}

\begin{theorem}\label{thm:computeor-correct}
For any assignment of input bits $(bool_i)_{i=0}^{n-1}$, every process
terminates when executing Algorithm~\ref{algo:computeor_simplified}, and all
processes output the same Boolean: $\bigvee_{i=0}^{n-1} bool_i$.  The total number of pulses used is exactly $3n$. Moreover, 
\texttt{ComputeOR} has the composable ending property.
\end{theorem}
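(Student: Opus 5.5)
The proof combines the three preceding lemmas, which cover the three possible input patterns. Write $\beta=\bigvee_i bool_i$. The cases are exhaustive. Either $bool_\ell=\texttt{true}$, handled by Lemma~\ref{lem:leader-true}. Or $bool_\ell=\texttt{false}$, and then either every non-leader is \texttt{false} (Lemma~\ref{lem:all-false}) or some non-leader is \texttt{true} (Lemma~\ref{lem:leader-false-some-true}). Each lemma already gives termination, the correct output ($\texttt{false}$ exactly when $\beta=\texttt{false}$, and $\texttt{true}$ otherwise), and the composable ending property with the leader as terminator and $d=1$, where the termination wave is the final clockwise barrier. Termination, agreement on $\beta$, and composability therefore follow directly.

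The real work is the pulse count. I would use the remark that at most one pulse is in transit at any time, so the execution is a sequence of single-pulse walks. I would account for them case by case.

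In the all-false case there are two counter-clockwise tours (the probe and the second wave), each $n$ pulses, plus the barrier.

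In the case $bool_\ell=\texttt{true}$ there is one clockwise tour, one counter-clockwise termination tour, and the barrier.

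In the mixed case the pulse goes counter-clockwise from $p_\ell$ to $p_j$ across $|A|+1$ links. It then returns clockwise over the same $|A|+1$ links. Next it travels clockwise from $p_\ell$ through $B$ to $p_j$ over $|B|+1$ links, and back counter-clockwise over $|B|+1$ links. Since $|A|+|B|+2=n$, this totals $2n$. Adding the barrier gives $3n$ in every case.

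The main obstacle is confirming that each lemma's execution really sends no extra pulse. Every process sends exactly the sends listed on the path it takes: one per receive on the non-leader branches, and the leader's initial send plus one more. I would check this by counting sends per process. A non-leader sends exactly two pulses before the barrier on every branch, whether false or true. The leader sends two. Summing gives $2n$, and the barrier adds one send per process, for $3n$ total. This per-process count avoids the link-by-link tracking and is the cleanest route.
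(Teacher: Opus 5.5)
Your proposal is correct and matches the paper's proof. Termination, agreement and the composable ending property (leader as terminator, $d=1$, barrier as termination wave) come from the same exhaustive case split onto Lemmas~\ref{lem:all-false}, \ref{lem:leader-false-some-true} and~\ref{lem:leader-true}, and the pulse bound comes from the same per-process count of two sends before the barrier plus one in it, giving exactly $3n$. Your link-by-link tour accounting, including $|A|+|B|+2=n$ in the mixed case, is a valid cross-check but is not needed.
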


\begin{proof}
There are two possible values for the global OR.

\medskip\noindent
\emph{Case~1: $\bigvee_i bool_i = \texttt{false}$.}
Then all inputs are $\texttt{false}$. By Lemma~\ref{lem:all-false}, every
process terminates and outputs $\texttt{false}$, which equals the global OR.

\medskip\noindent
\emph{Case~2: $\bigvee_i bool_i = \texttt{true}$.}
Then at least one process has input $\texttt{true}$.
\begin{itemize}
  \item If $bool_\ell = \texttt{true}$, Lemma~\ref{lem:leader-true} applies,
        and every process terminates and outputs $\texttt{true}$.
  \item If $bool_\ell = \texttt{false}$, Lemma~\ref{lem:leader-false-some-true}
        applies (since there exists at least one non-leader with input
        $\texttt{true}$), and every process terminates and outputs
        $\texttt{true}$.
        \end{itemize}
In all three cases we have the composable ending by the appropriate lemma. 
       Moreover, each non-leader process sends exactly two messages in the
        OR-computation phase: one after
	its first receive (
	lines~\ref{line:computeor:nonleader-true-send1}, or
	\ref{line:computeor:nonleader-false-send1q}) and one after its second receive
	(lines~\ref{line:computeor:nonleader-true-send0}, or
	\ref{line:computeor:nonleader-false-send1u}).
	In addition, each process sends exactly one message in the final clockwise
	barrier wave.
	It is easy to observe that the leader also sends exactly two messages in the
	OR-computation phase and one message in the barrier. Hence every execution
	uses exactly $3n$ pulses.
\end{proof}

\subsection{Correctness of the \texttt{MsgExchange} Algorithm}
We are now ready to prove the correctness of the {\tt MsgExchange} algorithm. 
\begin{theorem}\label{thm:msgexchange-correct}
Consider an oriented ring with a unique leader. Let each active process execute
Algorithm~\ref{algo:all2allmsg} with input messages $msg_{cw}$ and $msg_{ccw}$
(possibly empty), and let each relay process execute
Algorithm~\ref{algo:all2allrelay}. Each
active process receives exactly the message sent to it by its clockwise and
counter-clockwise active neighbors (treating relay segments as links), and
Algorithm~\ref{algo:all2allmsg} returns these two messages (the message sent
counter-clockwise by its clockwise neighbor and the message sent clockwise by
its counter-clockwise neighbor).
Moreover, the combined execution of
Algorithms~\ref{algo:all2allmsg} and~\ref{algo:all2allrelay}  has the composable
ending property.
Let $L$ be the length of the maximum message sent.
Then the execution sends at most $15nL + 3n$ pulses. That is O(L) pulses per processes.
\end{theorem}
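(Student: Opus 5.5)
The argument is an induction over the iterations of the \texttt{while} loop of Algorithm~\ref{algo:all2allmsg}, which the relays mirror through Algorithm~\ref{algo:all2allrelay}. Each iteration is one \texttt{ComputeOR} followed, if the result is true, by one \texttt{CWBitSending} and one \texttt{CCWBitSending}; the relays run \texttt{CWBitRelay} and \texttt{CCWBitRelay} at the same points. Every one of these primitives has the composable ending property: \texttt{ComputeOR} by Theorem~\ref{thm:computeor-correct}, \texttt{CCWBitSending}/\texttt{CCWBitRelay} by Theorem~\ref{thm:ccw-correct}, and the clockwise pair by the symmetric argument. By Theorem~\ref{thm:compose-any}, running them back-to-back in the same order at every process is therefore safe, so each invocation can be analysed in isolation as if all processes started it from a quiescent ring.

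\textbf{Loop invariant.} I will prove that at the start of iteration $k$ (i.e., when $phase=k$):
\begin{itemize}
\item every active process has appended to $rcvd_{ccw}$ exactly the first $k$ symbols of $msg_{cw}$ of its counter-clockwise active neighbour;
\item it has appended to $rcvd_{cw}$ exactly the first $k$ symbols of $msg_{ccw}$ of its clockwise active neighbour;
\item all processes, relays included, are in the same iteration.
\end{itemize}
The inductive step has two parts. First, \texttt{ComputeOR} returns the same Boolean to everyone (Theorem~\ref{thm:computeor-correct}), so all processes either exit together or enter the loop body together. Second, inside the body, Theorem~\ref{thm:ccw-correct} and its clockwise analogue, applied to the virtual ring $R$ obtained by collapsing relay segments into links, give that each active process receives the correct symbol in $\{0,1,\bot\}$ from each active neighbour.

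\textbf{Termination and correctness.} The loop condition is true exactly while some active process still has a non-$\bot$ symbol to send, so the loop runs exactly $L$ iterations, where $L$ is the maximum message length. Every message is fully received, followed by $\bot$ padding that \texttt{bitencode} discards. This gives correctness of the returned pair. The composable ending property of the whole execution is inherited from the final \texttt{ComputeOR} (the one returning false): it is the last primitive executed, it ends with its barrier wave, and that call has the property by Theorem~\ref{thm:computeor-correct}.

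\textbf{Counting pulses.} There are $L+1$ calls to \texttt{ComputeOR}, each costing $3n$ pulses, which gives $3nL+3n$. There are $L$ calls to each of the two bit-sending directions, each costing at most $6n$ pulses by Theorem~\ref{thm:ccw-correct}. The total is at most $3n(L+1)+12nL = 15nL+3n$ pulses, i.e., $O(L)$ pulses per process.

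\textbf{Main obstacle.} The step needing most care is the interaction with relays. The $6n$ bound and the composable ending of Theorem~\ref{thm:ccw-correct} are stated for the whole physical ring, but I must check that relays behave as links not only inside a bit-sending call but also across consecutive primitives. The concern is that a relay might forward a pulse of the next primitive while it still believes it is in the previous one. I plan to resolve this using Lemma~\ref{lem:ccw-relay}, which says relays terminate exactly on the termination wave, together with the FIFO argument in the proof of Theorem~\ref{thm:compose-any}. That argument applies verbatim to relays, since they too return only upon forwarding the termination-wave pulse in direction $d$.
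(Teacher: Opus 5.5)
Your proposal is correct and follows essentially the same route as the paper. Like the paper, you collapse relay segments into links and invoke the correctness and composable-ending guarantees of \texttt{ComputeOR} and the bit-sending primitives, chaining them via Theorem~\ref{thm:compose-any}. You let the final \texttt{ComputeOR} (the one returning false) provide both the synchronized loop exit and the composable ending, and you count $3n(L+1) + 12nL = 15nL + 3n$ pulses. Your explicit loop invariant and your discussion of relays across consecutive primitives only make more formal what the paper states in a few lines.
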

\begin{proof}
Collapse each maximal relay interval into a single abstract link.
By Theorem~\ref{thm:ccw-correct} and symmetry, each invocation of
\texttt{CWBitSending}/\texttt{CWBitRelay} and
\texttt{CCWBitSending}/\texttt{CCWBitRelay} terminates and each active process
receives exactly the corresponding bit from its neighbor (treating relay
segments as links). By Theorem~\ref{thm:computeor-correct}, each invocation of
\texttt{ComputeOR} terminates and returns the global OR of the local $active$
flags.
Since these subroutines have the composable ending property, repeated
applications of Theorem~\ref{thm:compose-any} imply that they can be executed
back-to-back in each iteration without cross-interference.
Therefore, in each phase, every active process sends
at most one bit clockwise and one bit counter-clockwise, and appends exactly the
bits it receives from its two neighbors. The loop continues until all active
processes have no bits left to send, which is detected simultaneously when
\texttt{ComputeOR(active)} returns \texttt{false}. Relay processes always pass
\texttt{false}, hence they terminate in the same iteration as the active ones.

Each invoked subroutine is quiescent terminating, thus the entire execution is quiescent terminating. The pulse
bound follows by summing the costs of the $L$ bit-sending phases
($\le 12nL$ pulses) and the $L+1$ calls to \texttt{ComputeOR} ($=3n(L+1)$
pulses).
Finally, since the last invoked procedure before returning from
Algorithms~\ref{algo:all2allmsg} and~\ref{algo:all2allrelay} is
\texttt{ComputeOR}, the
combined message-exchange procedure also has the composable ending property.
\end{proof}

\begin{proof}[Proof of Theorem~\ref{thm:msgxchangering}]
Fix one synchronous round of $\mathcal{A}$ on the virtual ring $R$ of active
processes (relay segments contracted into links).
In the simulation, each active process encodes the $b$ bits it would send
clockwise (resp.\ counter-clockwise) in this round as lists $msg_{cw}$ and
$msg_{ccw}$ and executes \texttt{MsgExchange}$(msg_{cw},msg_{ccw})$; each relay
process executes \texttt{MsgRelay}.
By Theorem~\ref{thm:msgexchange-correct}, each active process receives exactly
the messages sent to it by its two active neighbors in $R$, and thus it can
perform the same state update as in $\mathcal{A}$ for this round.

By the composable ending property in Theorem~\ref{thm:msgexchange-correct}, the
simulation of consecutive rounds does not create interference between pulses of
different rounds.
The complexity of $O(b)$ pulses per round follows directly from Theorem \ref{thm:msgexchange-correct}.
\end{proof}

\section{Message exchange on 2-edge-connected networks}
\label{sec:bitsendinggeneral}

In this section, we extend the message sending routine of \Cref{sec:bitsendingring} to the case of general 2-edge-connected graphs.

\thmSim*
 

We outline one step of the simulation for a 2-edge-connected network here.

    \subparagraph{Robbins cycle construction}
    
    We first use Algorithm 4 of~\cite{censor2023distributed} to obtain a Robbins cycle of the graph. A Robbins cycle on $G$ is a directed cycle that goes through all vertices of $G$, and any edge of $G$ that appears multiple times in the Robbins cycle always follows the same orientation. This is possible from the initial assumption of a unique leader and distinct IDs.

    \begin{theorem} [{\cite[Lemma 19]{censor2023distributed}}]
    Let $G = (V, E)$ be a 2-edge-connected network, and $n = |V|$. There is a content-oblivious algorithm that constructs a \emph{Robbins cycle} $C$ of length $O(n^3)$, which takes $O(n^8\log n)$ pulses, such that each process knows its clockwise and counter-clockwise neighbors for each of its occurrences in $C$.
    \end{theorem}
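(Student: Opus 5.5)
The statement is imported from \cite[Lemma 19]{censor2023distributed}, so the plan below reconstructs how that proof goes rather than giving a new argument. The combinatorial core is Robbins' theorem: a 2-edge-connected graph has an \emph{ear decomposition} $G_0 \subset G_1 \subset \dots \subset G_k = G$. Here $G_0$ is a cycle through the leader. Each $G_{j+1}$ is obtained from $G_j$ by attaching a path (an ear) whose endpoints lie in $G_j$ and whose internal vertices are new.

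\textbf{Step 1 (invariant and length bound).} I would build the Robbins cycle incrementally, maintaining a closed walk $C_j$ that visits every vertex of $G_j$ and uses each edge of $G_j$ in a single consistent orientation. When an ear $P$ from $u$ to $v$ is attached, I splice into $C_j$ at an occurrence of $u$ the detour consisting of $P$ followed by a path from $v$ back to $u$ inside the oriented $G_j$.
\begin{itemize}
\item Because $G_j$ is strongly oriented, a directed return path exists. Orientations stay consistent because $P$'s edges are new and the return path only reuses already-oriented edges in their fixed direction.
\item The return path has $O(n)$ edges and $P$ has $O(n)$ edges. Only ears with at least one new vertex matter for covering vertices, so there are $O(n)$ such ears.
\item Under this splicing, each ear adds $O(n)$ edges, giving total length $O(n^2)$. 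This is comfortably within the claimed $O(n^3)$. The looser bound presumably comes from the paper's coarser accounting, for example charging each step for routing along the current walk.
\end{itemize}

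\textbf{Step 2 (distributed implementation).} The leader drives a DFS-like search for ears, as in~\cite{censor2023distributed}. Two facts are needed about this search.
\begin{itemize}
\item Every vertex on the current walk $C_j$ can communicate with the leader along $C_j$, since $C_j$ is a cycle through the leader. A $b$-bit message sent this way costs $O(|C_j|\cdot b)$ pulses using a ring-style encoding.
\item Finding the next ear, and learning for each of its vertices its predecessor/successor ports on every occurrence, requires exploring edges leaving $G_j$ one at a time. Each exploration step involves exchanging IDs and port numbers, i.e.\ $O(\log n)$-bit messages, between the leader and the frontier.
\end{itemize}

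\textbf{Step 3 (pulse count).} I would multiply the following quantities:
\begin{itemize}
\item $O(n)$ ears;
\item $O(n^2)$ edge probes in total, since every edge of $G$ is probed $O(1)$ times per ear attempt and $m=O(n^2)$;
\item a cost per probe of a round trip along a walk of length $O(n^3)$ carrying $O(\log n)$ bits.
\end{itemize}
This is the step I expect to be the main obstacle. Making the budget come out to $O(n^8\log n)$ requires a careful accounting of how many round trips each probe needs, including the extra confirmation waves used to tell a new edge from a cycle edge. Those waves are exactly what the synchronization overhead of \cite{censor2023distributed} pays for, and they are where the extra polynomial factors come from. I would first try to charge each probe a constant number of full traversals of the current walk. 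If that does not yield the stated bound, I would fall back to the per-message cost $O(n^3 b + n^3\log n)$ of their general simulator, applied to $O(n^5)$ messages of $O(\log n)$ bits.

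Finally, every process records its clockwise and counter-clockwise neighbours for each of its occurrences as the walk is spliced. This gives the last claim of the statement.
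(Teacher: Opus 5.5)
Your Step~1 is sound graph theory. Splicing each non-trivial ear, plus a shortest directed return path in the strongly oriented $G_j$, gives a consistently oriented closed walk through all vertices of length $O(n^2)$. The paper gives no proof of its own here; it just imports \cite[Lemma 19]{censor2023distributed}. That part, however, is the easy half. The gap is the content-oblivious implementation. There, Step~2 assumes what must be proved, and its ``as in \cite{censor2023distributed}'' defers to the very lemma in question.

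Ring-style encoding along $C_j$ reaches only processes already on $C_j$. The vertices of the next ear are off the walk by definition and run no protocol you have specified. So there is no channel over which the leader can ``exchange IDs and port numbers with the frontier''. You need a concrete pulse-level exploration, e.g.\ a single-token DFS launched from a walk vertex on a non-walk port, together with:
\begin{itemize}
\item a rule by which off-walk processes tell an ``ear found'' signal apart from ordinary backtracking or bouncing, using only pulse counts and ports;
\item a way to tell the internal ear vertices that they have joined, and through which two ports;
\item a way to reset explored off-walk vertices that did not join, which walk-based communication cannot address.
\end{itemize}
The splicing step has similar needs. The leader must know the oriented topology of $G_j$ to compute the return path $v\to u$. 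Every process on that path must also be told where its new occurrence sits in the cyclic order of its existing ones, since a process entered several times through the same port can only route by arrival order. None of this is provided.

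Step~3 does not close the argument either: you leave the pulse budget open, and your fallback is circular. The per-message cost $O(n^3b+n^3\log n)$ is that of the simulator of \cite{censor2023distributed} running over a Robbins cycle of all of $G$, i.e.\ over the object being built. Run on the partial walk $C_j$ instead, it still cannot reach off-walk vertices. The figure of $O(n^5)$ messages is not derived from any protocol; it is simply what is needed to reach $n^8\log n$. As written, the proposal proves that a short Robbins cycle exists. It does not give a content-oblivious construction of one within $O(n^8\log n)$ pulses in which every process knows its neighbours for each of its occurrences.
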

    
    \subparagraph{Network topology construction}
    Given the Robbins cycle, it already suffices in terms of feasibility for the purpose of simulating any algorithm.
    \begin{theorem} [Theorem 10 of~\cite{censor2023distributed}]
    Let $G = (V, E)$ be a 2-edge-connected graph, $n = |V|$, and $C$ be a Robbins cycle over $G$. Any asynchronous protocol $\pi$ that transmits $b$ bits can be simulated with $O(|C|\cdot b + |C|\log|V|)$ content-oblivious pulses.
    \end{theorem}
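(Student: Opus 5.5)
The statement immediately above is the restated Theorem 10 of~\cite{censor2023distributed}: given a Robbins cycle $C$ of the 2-edge-connected graph $G$, any asynchronous protocol $\pi$ transmitting $b$ bits can be simulated with $O(|C|\cdot b+|C|\log|V|)$ pulses. My plan is to obtain it by treating $C$ as a virtual oriented ring and reusing the ring machinery of \Cref{sec:bitsendingring}, rather than re-deriving the original construction.

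\textbf{Step 1: the virtual ring.} Each occurrence of a vertex in $C$ acts as a virtual process. Every process knows its clockwise and counter-clockwise neighbour for each occurrence, so it can route a pulse arriving on the edge preceding an occurrence to the edge following it. Because a Robbins cycle always traverses a repeated edge in the same orientation, pulses on a physical edge can be attributed unambiguously to the traversals of $C$ using that edge. The only delicate point is when one physical edge appears in $C$ several times: the copies of an edge must be multiplexed. I would handle this by keeping, at each endpoint, a counter of pulses received per edge, and processing the ring protocol occurrence by occurrence in the fixed cyclic order determined by the leader's traversal of $C$. The protocol of \Cref{sec:bitsendingring} only ever has $O(1)$ pulses in flight per occurrence and progresses in waves, so this routing can be made deterministic.

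\textbf{Step 2: mapping messages to bit exchanges.} To simulate the sending of a message of $\beta$ bits by $u$ to a neighbour $v$ in $\pi$, I would mark one occurrence of $u$ and one occurrence of $v$ (adjacent along $C$ via the edge $uv$, which exists since $C$ covers all edges) as active, with all other occurrences acting as relays. I would then invoke \texttt{MsgExchange}. By \Cref{thm:msgexchange-correct} this costs $O(|C|\cdot \beta)$ pulses and has the composable ending property, so consecutive sends compose by \Cref{thm:compose-any}.

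\textbf{Step 3: serialising the asynchronous protocol.} To serialise $\pi$, the leader repeatedly runs \texttt{ComputeOR} (cost $3|C|$) to ask whether any process has a pending message. It then selects one: a token circulates, and the first occurrence with a pending send claims it, broadcasting the identity of the target port with $O(\log|V|)$ bits via \texttt{MsgExchange}. This yields the additive $O(|C|\log|V|)$ term per message. Summing over all messages gives $O(|C|\cdot b+|C|\log|V|)$, reading the $\log$ term as per simulated message, as in the cited bound.

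\textbf{Main obstacle.} The hardest step is Step 1: proving that multiplexing repeated edges in $C$ preserves FIFO order and correctly attributes pulses. My first attempt would be to argue that all ring subroutines are wave-based, with at most one outstanding pulse per link direction awaiting a reply. Since the order of occurrences is fixed by the wave, attribution then follows by induction on waves.
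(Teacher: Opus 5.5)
The paper does not prove this statement. It quotes it from Censor-Hillel et al., and uses the Robbins cycle only to disseminate the topology. So your derivation has to stand on its own, and it has a genuine gap at Step~1, which is where all the difficulty lies.

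The Robbins property (a repeated edge is always traversed in the same orientation) tells a receiver only the \emph{virtual direction} of a pulse, not \emph{which copy} of the edge the pulse belongs to. Moreover, \Cref{thm:msgexchange-correct} is a statement about physical rings, so it cannot be applied as a black box to the ring of occurrences. The repair you sketch rests on a false premise: the subroutines of \Cref{sec:bitsendingring} do not keep at most one outstanding pulse per link direction. In \texttt{CCWBitSending} (\Cref{algo:bitsnd}) and its clockwise twin, a transmitting process injects up to two bit pulses at once, and acknowledgements travel concurrently with the synchronisation wave. A concrete failure: suppose $C$ traverses two copies of $\{u,v\}$ from $u$ to $v$, and in the same call $v^*$ sends a bit to $u^*$ directly while $u^*$ sends a non-$\bot$ bit around the long arc. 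Then whether the second pulse reaching $u$ from $v$ must be counted at $u^*$ or forwarded by the other copy of $u$ depends on $v^*$'s bit, which $u$ cannot observe. A per-edge counter does not resolve this.

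In your one-sender configuration the problem is narrower but still unproved. Say $C$ traverses $\{u,v\}$ from $u$ to $v$, so $u^*$ transmits via \texttt{CWBitSending}. Then $v^*$'s acknowledgements reach $u$ on the link from $v$ interleaved, in a scheduler-chosen order, with the synchronisation wave. If $\{u,v\}$ occurs more than once in $C$, that wave also crosses the other copies on this very link. So $u$ must split an arbitrarily interleaved stream of identical pulses between $u^*$ and its relay copies, and ``induction on waves'' does not do this, because the acknowledgements belong to no wave.

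You would need an explicit demultiplexing rule, for instance one using that $u$ knows how many acknowledgements $u^*$ expects and in which order the waves visit its copies. You would also need a proof that the per-copy reception sequences it induces are consistent with a legal execution of the virtual ring, i.e., no copy consumes a pulse before its virtual neighbour has sent it. The input-dependent route of the \texttt{ComputeOR} pulse needs a similar argument at every process with repeated edges.

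Alternatively, make the mechanism genuinely sequential. Keep a single pulse in the whole network, making complete traversals of $C$ from a commonly known starting occurrence, with each bit encoded by traversal directions at $O(|C|)$ pulses per bit. Then each process can infer the copy from its own counts, and your bound is preserved.

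Note that the paper never runs its ring simulator on a non-simple cycle. In \Cref{sec:bitsendinggeneral} it is applied only to the simple cycles of an Eulerian cover, where each physical edge is a single virtual link, and cycles sharing an edge are simulated one after another.

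A smaller point: \texttt{MsgExchange} is neighbour-to-neighbour, so announcing the target in Step~3 needs a ring broadcast such as \texttt{SndCount}/\texttt{RcvCount}. Also, a local port number must be replaced by an occurrence index in $C$.
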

    However, using such a simulation naively can be costly in the long run, as each message size explodes at least by a factor of $|C|$, which is at least $\Omega(n)$. Instead, we will use the Robbins cycle only to disseminate knowledge of network topology, to devise a less costly simulation. Following the assumption of~\cite{censor2023distributed} that each ID has size $O(\log n)$, there is a protocol such that each process learns the network topology with $O(|m|\log n)$ bits (by reporting each edge), which requires $O(|m|\cdot n^3 \log n)$ pulses. Combined with step 1, $O(n^8\log n)$ pulses are required to construct the network topology at all processes.

    \subparagraph{Cycle cover finding}
    An Eulerian subgraph of $G$ is a subgraph where all vertices have even degree. We make use of the following fact.
    \begin{lemma} [{\cite[Lemma 4.1]{alon1985covering}}]
    Every 2-edge-connected graph can be covered by three Eulerian subgraphs (i.e., each edge appears in at least one Eulerian subgraph).
    \end{lemma}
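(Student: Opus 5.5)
The statement is the classical Alon–Tarsi result; following the paper, I will prove it through Jaeger's theorem on nowhere-zero flows rather than with a direct graph-theoretic construction.

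\emph{Step 1: reduce covering to a nowhere-zero $\mathbb{Z}_2^3$-flow.} An element $\phi \in \mathbb{Z}_2^{3}$-flow on $G$ is an assignment $\phi\colon E \to \mathbb{Z}_2^3$ such that, at every vertex, the sum of the values on the incident edges is $0$ (orientations are irrelevant over characteristic $2$). Given a nowhere-zero such flow, project onto each of the three coordinates. The $i$-th projection $\phi_i\colon E \to \mathbb{Z}_2$ is again a $\mathbb{Z}_2$-flow. Hence the edge set $E_i=\{e:\phi_i(e)=1\}$ has even degree at every vertex, so $E_i$ is an Eulerian subgraph. Because $\phi(e)\neq 0$ for every $e$, each edge lies in at least one $E_i$. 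The three $E_i$ therefore cover $E$, and the statement follows once such a flow is shown to exist.

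\emph{Step 2: existence of a nowhere-zero $\mathbb{Z}_2^3$-flow.} This is Jaeger's $8$-flow theorem. I would prove it via Nash-Williams/Tutte tree packing. First, multiply every edge of $G$ into $3$ parallel copies. The resulting multigraph is $6$-edge-connected: every cut of $G$ contains at least $2$ edges, so it contains at least $6$ copies. By the Nash-Williams–Tutte theorem, a $2k$-edge-connected multigraph contains $k$ edge-disjoint spanning trees. This yields $3$ edge-disjoint spanning trees $T_1,T_2,T_3$ of the multigraph, and projecting back gives spanning trees of $G$ with $T_1\cap T_2\cap T_3=\emptyset$ as edge sets, since each edge of $G$ has only $3$ copies and so cannot belong to all three trees.

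Next, for each $i$, the complement $E\setminus T_i$ is a set of non-tree edges. For any edge set $F$ with $E\setminus F \subseteq T$ for some spanning tree $T$ (here $F$ avoids the cotree of $T_i$), there is an even subgraph $E_i$ containing $E\setminus T_i$. To build it, take the symmetric difference of the fundamental cycles of the edges of $E\setminus T_i$ with respect to $T_i$. Each non-tree edge appears in exactly its own fundamental cycle, so it survives the symmetric difference, and a symmetric difference of cycles is even. Then every edge $e$ of $G$ lies outside some $T_i$, since the three trees have empty common intersection, and hence $e \in E_i$. This gives the cover directly, without needing flows explicitly.

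\emph{Main obstacle.} The only nontrivial ingredient is the tree-packing theorem. I will cite it as a classical result, since the paper itself invokes the covering lemma as a black box from \cite{alon1985covering}. The remaining steps, the edge tripling and the fundamental-cycle symmetric differences, are routine. A care point is that $G$ may contain parallel edges; the tree-packing argument handles multigraphs, so this causes no difficulty.
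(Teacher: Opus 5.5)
There is a genuine gap in Step~2, at the claim that the projected trees satisfy $T_1\cap T_2\cap T_3=\emptyset$. Tripling gives each edge of $G$ three parallel copies, so the three edge-disjoint spanning trees of the tripled multigraph may each use a different copy of the same edge. Nothing prevents an edge from lying in all three projected trees.

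This is not only a missing justification; the claim is false for 2-edge-connected graphs. In the cycle $C_n$ with $n\ge 4$, every spanning tree is the cycle minus one edge. Three spanning trees therefore omit at most three edges and share at least $n-3\ge 1$ edges. So no argument that extracts three spanning trees with empty common intersection directly from 2-edge-connectivity can work. Your remaining ingredients are correct: Step~1, and the even subgraph $E_i\supseteq E\setminus T_i$ built from fundamental cycles. They just never get the trees they need.

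The missing idea is Jaeger's reduction to the 3-edge-connected case, by induction on $|E|$.

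\textbf{Case with a 2-edge-cut.} Suppose $G$ has a 2-edge-cut $\{e,f\}$ with shore $X$ containing one endpoint $x$ of $e$. Contract $e$. The graph $G/e$ is still 2-edge-connected and has fewer edges, so by induction it is covered by even subgraphs $C_1,C_2,C_3$. Counting degrees over $X$ gives $\deg_{C_i}(x)\equiv |C_i\cap\{f\}| \pmod 2$, and the same holds at the other endpoint of $e$. Hence $C_i\cup\{e\}$ (when $f\in C_i$) or $C_i$ (when $f\notin C_i$) is even in $G$. Since $f$ is covered, $e$ is covered too, and these sets cover $G$.

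\textbf{3-edge-connected case.} Here you should \emph{double}, not triple, every edge. The multigraph $2G$ is 6-edge-connected, so Nash-Williams--Tutte gives three edge-disjoint spanning trees. Each edge of $G$ now has only two copies, so it lies in at most two projected trees and outside at least one $T_i$. Your fundamental-cycle construction then finishes the proof.

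Note that the paper gives no proof of this lemma. It imports it as a black box from Alon--Tarsi, who in turn rely on Jaeger's 8-flow theorem.
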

    Further, every Eulerian subgraph can be broken down into cycles by iteratively removing cycles. Let every process $v$, with the knowledge of network topology, run a deterministic algorithm to obtain a cycle cover of the graph $G$ locally as $\mathcal{C} = (C_1, C_2, \ldots, C_k)$. Let $\mathcal{C}_v$ denote the subsequence (maintaining order) containing cycles in which process $v$ is present. Note that each edge appears in one, two, or three cycles.

   \subparagraph{On-cycle simulation}
    Now $v$ invoke the (efficient) $\CONGEST[b]$ simulation algorithm for rings (\Cref{algo:all2allmsg}), for each cycle contained in $\mathcal{C}_v$, in the order of $\mathcal{C}_v$. For a message that $v$ would like to send to neighbor $u$, $v$ transmits the message at its earliest possible timing, i.e., when simulating the first cycle among $\mathcal{C}_v$ that contains edge $\{u,v\}$. For subsequent cycles among $\mathcal{C}_v$ that also contains edge $\{u,v\}$, $v$ remains silent (i.e., sending empty message $\bot$) to $u$.

    Notably, when simulating message passing for one cycle, only two incident edges of $v$ are relevant. For currently irrelevant edges, $v$ holds incoming pulses while the simulation of the current cycle is on-going.

    To prove the correctness of the entire simulation, assume process $v$ is contained in distinct cycles covers $C_i$ and $C_j$. We show that a pulse sent by $v$'s neighbor $u$ while simulating $C_i$ will not be processed by $v$ if the latter is simulating $C_j$. If $u$ does not belong to $C_j$, $v$ will receive such a pulse from an irrelevant ingress port from its ongoing simulation and will hold it temporarily during the simulation of $C_j$. Now, let $u$ belong to $C_j$, hence the edge $\{u,v\}$ belongs to $C_i$ and $C_j$. It is proven in \Cref{thm:msgexchange-correct} that the simulation of $C_i$ (or $C_j$, whichever has a smaller index) has \emph{composable ending property}, hence can be safely concatenated to the latter simulation, according to \Cref{thm:compose-any}. That means both $u$ and $v$ can correctly attribute pulses to the ring simulation that spawns them, hence proving the correctness claim.

    \subparagraph{Complexity}
    
    The pulses needed at one process to simulate a round of passing a message of size $b$ in the 2-edge-connected network is $O(b)$, i.e., multiplicative overhead. This follows from the multiplicative overhead of simulating one round of $\CONGEST[b]$ on a ring as in~\Cref{thm:msgxchangering}, together with the fact that each edge is at most covered by three cycles.

\section{Computing self-decomposable aggregation functions}
\label{sec:colorcount}


In this section, we design an efficient aggregation algorithm for content-oblivious rings by combining the simulator from \Cref{thm:msgxchangering} with the \texttt{ComputeOR} algorithm. The procedure \texttt{ComputeOR} is described in \Cref{algo:computeor_simplified}, and its correctness and performance guarantees are established in \Cref{thm:computeor-correct}. The simulator is used to run an MIS algorithm over a virtual ring over the active processes. 

\begin{lemma}[Cole--Vishkin~\cite{cole1986deterministic}]\label{lem:MIS}
In the $\CONGEST[1]$ model, for a ring $C$ with a designated leader and whose processes have {distinct} $\lambda$-bit IDs, there exists a deterministic algorithm that computes an MIS of $C$ containing the leader in $O(\lambda)$ rounds.
\end{lemma}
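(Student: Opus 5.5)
This is the classical Cole--Vishkin color reduction on an oriented ring, followed by a constant-round conversion of a 3-coloring into an MIS. The only extra care is keeping messages to one bit per round and forcing the leader into the MIS.

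\textbf{Step 1: color reduction.} Each process starts with its $\lambda$-bit ID as its color.
\begin{itemize}
\item In one Cole--Vishkin step, a process $p$ with color $c_p$ compares $c_p$ with the color $c_{p^-}$ of its counter-clockwise neighbor. Orientation is known, so every process has a well-defined predecessor.
\item $p$ takes the least index $i$ where the two colors differ and sets its new color to $2i + c_p[i]$. This keeps the coloring proper and shrinks colors from $k$ bits to $\lceil\log k\rceil+1$ bits.
\item In $\CONGEST[1]$, one step costs as many rounds as there are bits in the current color, since $p^-$ streams its color bit by bit. The color lengths are $\lambda, O(\log\lambda), O(\log\log\lambda),\dots$, so the total is $\lambda + O(\log \lambda) + \dots = O(\lambda)$ rounds.
\item After $O(\log^*\lambda)$ steps the colors lie in $\{0,\dots,5\}$. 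Three further standard reduction rounds, each of $O(1)$ rounds of 1-bit communication, bring this to $\{0,1,2\}$. In each, every process of the currently largest color recolors to the smallest color unused by its two neighbors.
\end{itemize}

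\textbf{Step 2: leader and MIS.} To put the leader in the MIS, I first give it a fresh color $-1$ (a fourth color) before the 6-to-3 reduction. I exclude that color from being taken by others and treat it as always part of the independent set. The proper coloring stays proper because the leader's neighbors have different colors from it.

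I then run the greedy color-class construction. Leader first; then for each color $c=0,1,2$ in turn, every process of color $c$ with no neighbor already in the set joins. Processes of the same color are non-adjacent, so the result is independent. It is maximal because each process is either in the set or had a neighbor in it at its color's turn. Each turn costs one 1-bit round to announce membership, so $O(1)$ rounds in total.

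\textbf{Main obstacle.} The step needing most care is the bit-serialization accounting in Step 1. The $O(\lambda)$ bound relies on the first step dominating, with color lengths shrinking super-geometrically.

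A secondary subtlety is synchronization: processes must agree on when each step ends, which requires knowing the current color length. The first step has a variable length $\lambda$ that processes may not know, so I would pad IDs implicitly by having a process stream its ID followed by an end marker. In $\CONGEST[1]$ an end marker costs only $O(1)$ extra via a 2-bit-per-symbol encoding. Every process can also compute the next color length deterministically from the previous one, since all lengths then agree on $\max$ after one $O(\lambda)$ round exchange. Alternatively, I would rely on $\lambda$ being known as part of the model's parameters.
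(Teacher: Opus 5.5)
Your proof is correct and matches the paper's: Cole--Vishkin reduction starting from the IDs with bit-serialized colors, giving $\lambda + O(\log\lambda) + O(\log\log\lambda) + \cdots = O(\lambda)$ rounds, then the leader joins first and the $O(1)$ color classes join greedily in turn (your extra 6-to-3 reduction and fresh leader color are harmless but unnecessary). On synchronization, use your fallback that $\lambda$ is known, as the paper does by computing it with \texttt{ComputeOR} before invoking the lemma; your other alternative would fail, because a local $O(\lambda)$-round exchange cannot make all processes agree on a global maximum length.
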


\begin{proof}
The lemma follows from the classical Cole--Vishkin color reduction algorithm~\cite{cole1986deterministic}. In an oriented ring, a proper $k$-coloring can be transformed into a proper $O(\log k)$-coloring in one round by exchanging colors with neighboring processes. Starting from the distinct IDs as an initial proper coloring, repeating this reduction for $O(\log^* n)$ iterations yields a proper $O(1)$-coloring.

When implemented in the $\CONGEST[1]$ model, the total number of rounds required for the color-reduction process is $\lambda + O(\log \lambda) + O(\log \log \lambda) + \cdots = O(\lambda)$.

Finally, a proper $O(1)$-coloring can be converted into a maximal independent set that includes the leader in $O(1)$ additional rounds. Let $k = O(1)$ denote the number of colors. Initially, the leader joins the MIS. Then, for $i = 1, 2, \ldots, k$, each process of color $i$ joins the MIS if none of its neighbors has already joined.
\end{proof}

At a high level, our aggregation algorithm proceeds iteratively. In each iteration, starting from the original ring, we compute an MIS of the current ring by simulating the algorithm of \Cref{lem:MIS} using the simulator from \Cref{thm:msgxchangering}. Processes outside the independent set send their local aggregates to a neighboring process in the independent set and then become inactive. The remaining processes form a virtual ring on which the algorithm continues.

After $O(\log n)$ iterations, only a single active process remains, having accumulated the aggregate of all values in the ring. Each iteration incurs $O(\lambda + \beta)$ pulses per process, where the $O(\lambda)$ term arises from the MIS computation and the $O(\beta)$ term from transmitting local aggregates. Thus, the overall message complexity is $O\left((\lambda + \beta)\log n\right)$ pulses per process.

Implementing this approach involves several subtleties. In particular, both the number of processes $n$ and the maximum identifier length $\lambda$ are initially unknown. Consequently, the algorithm must incorporate mechanisms that allow all processes to detect the termination of the overall computation, as well as the completion of individual subroutines; this is achieved using \texttt{ComputeOR}.

\thmCount*
\begin{proof}
We begin with a preprocessing step that allows all processes to determine $\lambda$, which is required to run the MIS algorithm of \Cref{lem:MIS}. The preprocessing proceeds as follows. For $i = 1, 2, 3, \ldots$, we execute \texttt{ComputeOR}, where the Boolean input of each process indicates whether its identifier length is at most $i$. The loop terminates once the outcome of \texttt{ComputeOR} is \texttt{false}. At that point, all processes learn that $\lambda$ equals the largest index for which the outcome was \texttt{true}. This preprocessing step requires $O(n\lambda)$ pulses.

 The main algorithm operates in phases. At the beginning of phase $t$, there is a set $A(t)$ of \emph{active} processes, which always includes the leader, and each active process $p$ stores a value $w_p(t)$. Initially, $A(0)$ consists of all processes in the ring, and $w_p(0) = f(x_p)$ for every process $p$. Each phase consists of the following three steps.

\begin{description}
  \item[1. Loneliness test.]
  We run \texttt{ComputeOR}, where the Boolean input of each process indicates whether it is a non-leader in $A(t)$. If the outcome is \texttt{false}, then all processes know that $|A(t)| = 1$, and the algorithm terminates. This step requires $O(n)$ pulses.

  \item[2. MIS on active processes.]
  We execute the MIS algorithm from \Cref{lem:MIS} on the logical ring induced by $A(t)$ using the simulator of \Cref{thm:msgxchangering}. Let $I(t) \subseteq A(t)$ denote the resulting MIS, which, by \Cref{lem:MIS}, is guaranteed to include the leader. Each active process learns whether it belongs to $I(t)$. This step requires $O(\lambda n)$ pulses.

  \item[3. Aggregation.]
Each process in $I(t)$ becomes the center of a cluster and remains active in the next phase; thus, $A(t+1) = I(t)$. Every process $p \in A(t) \setminus I(t)$ selects a neighboring process $p' \in I(t)$, joins the cluster of $p'$ as a leaf, and sends its value $w_p(t)$ to $p'$. Each cluster center $p'$ then applies the associative operator $\oplus$ to its own value $w_{p'}(t)$ and to all values received from its leaves, and sets $w_{p'}(t+1)$ to the resulting aggregate. Recall that $\oplus$ satisfies, for any disjoint multisets $X_1$ and $X_2$,
\[
f(X_1 \uplus X_2) = f(X_1) \oplus f(X_2).
\]
This aggregation step requires $O(\beta)$ rounds in the $\CONGEST[1]$ model and can therefore be implemented using the simulator from \Cref{thm:msgxchangering}. We emphasize that the simulator works even when messages have varying lengths and the parameter $\beta$ is unknown. This step requires $O(\beta n)$ pulses.
\end{description}

 By construction, the algorithm eventually terminates with the leader as the only active process, holding the value $f\left(\biguplus_{v \in C} \{x_v\}\right)$. Since an MIS of a ring contains at most half of the processes, the number of phases is $O(\log n)$. The total message complexity is therefore
\[
O(\log n) \cdot \left(O(n) + O(\lambda n) + O(\beta n)\right)
= O\!\left((\lambda + \beta)n \log n\right),
\]
as required.

 It remains to disseminate the final aggregate to all processes. One option is to use \Cref{algo:rcvn,algo:sndn}: the leader runs \texttt{SndCount}$\left(f\left(\biguplus_{v \in C} \{x_v\}\right)\right)$ while all other processes run \texttt{RcvCount}(). The dissemination step requires $O(n\beta)$ pulses.

Alternatively, the final value can be broadcast along the hierarchical decomposition constructed by the algorithm. We traverse the phases in reverse order, and in each phase $t$, the center of each cluster broadcasts the value $f\left(\biguplus_{v \in C} \{x_v\}\right)$ to the leaves of its cluster. As in the aggregation step, this broadcast can be implemented using $O(\beta n)$ pulses. Since there are $O(\log n)$ phases, the total number of pulses required for this dissemination step is $O(\beta n \log n)$.
\end{proof}

\section{Minimum finding and multiset computation}\label{sec:minfinding}

In this section, we describe an algorithm to compute the minimum input $x_{\min}$ of the processes and then show how to use it to compute the entire multiset of inputs. Applying the algorithm of Section~\ref{sec:colorcount} to the minimum-finding problem would yield an $O(B n\log n)$-pulse algorithm where $B$ is an upper bound on the size of the identifiers and on the size of the inputs. Moreover, this algorithm would require unique identifiers. In \Cref{sec:mf}, we present an algorithm that uses $O(|x_{\min}|n)$ pulses and works without IDs, establishing the following theorem.  

\minFinding*

If we want to compute the set of inputs, we can iteratively apply the algorithm of Theorem~\ref{th:minFinding} in order to discover all the initial inputs in increasing order. If we want to compute the multiset $\{\{x_p\}\}$ of values, i.e., the set of initial values together with their multiplicities, we show in \Cref{sec:multi} we can do it by iteratively applying the algorithm of Theorem~\ref{th:minFinding} and the counting algorithm of Theorem~\ref{thm:countnsqrtn} or Corollary~\ref{cor:counting}, proving the following theorem. 

\multisetComputation*

\subsection{Minimum finding}\label{sec:mf}
In this section we prove Theorem \ref{th:minFinding}. 
Suppose that each process $p_j$ starts with an input $x_j$.
We now discuss a leader-based algorithm to find the minimum input $x_{\min}$
with message complexity $O(n\cdot B)$. This procedure will then be used to compute the multiset of the inputs. 


We first compute the length $B = |x_{\min}|$ of the minimum input $x_{\min}$ using \texttt{ComputeOR}. Starting with $B = 0$, we increment $B$ until there is an active process with an input of length $B$, and then we keep as active the processes that have an input of minimum length. This can be done using $B$ iterations of \texttt{ComputeOR} that use $O(nB)$ pulses in total\footnote{Using binary search, one can even compute the length of the minimum input with $O(n \log B)$ pulses, but since the second phase of the algorithm uses $O(nB)$ pulses, this does not improve the asymptotic complexity of the algorithm}. 
From now on, we assume that every input is represented as a bit string of equal length $B$.

The minimum is discovered bit by bit by comparing the array {\tt bits}, which contains the bit encoding of the process’s input, with the MSB at position 0 and the LSB at the last position.

The procedure is in Algorithm \ref{algo:minfinding}. Each process maintains a local boolean variable \texttt{active} and a local array \texttt{minBits}. Intuitively, \texttt{minBits} stores the prefix of the minimum input that has been discovered so far, and a process is \emph{active} exactly when its own input shares that prefix; only active processes can still be candidates for the global minimum. Initially, the \texttt{active} flag records whether the process participates in the search (the designated starting active set). When we want to find the minimum all processes are  \texttt{active}.

In phase $i$, a process $p$ reads its local bit $\texttt{bit} = \texttt{bits}[i]$. To determine the minimum bit at this position, each process computes $\texttt{hasZero} = (\texttt{active} \land (\texttt{bit} = 0))$,
and all processes invoke \texttt{ComputeOR} on this value. If the result \texttt{existsZero} is \emph{true}, then there exists at least one active process with bit $0$, so the minimum bit at position $i$ is $0$. In this case, every active process with bit $1$ sets \texttt{active}~$\leftarrow$~\texttt{false}, and all processes append $0$ to \texttt{minBits}. If instead \texttt{existsZero} is \emph{false}, then no active process has bit $0$, and all active bits must be $1$; the minimum bit is $1$, the active set remains unchanged, and all processes append $1$ to \texttt{minBits}. After this step, the set of active processes is precisely the subset of participants whose input agrees with the newly extended prefix \texttt{minBits}.

 After $\len(\texttt{bits})$ phases, the array \texttt{minBits} records the entire bitstring of the global minimum input, and each process converts it into an integer and outputs $in_{\min}$. We omit the discussion on the correctness of the algorithm described as it is immediate from the correctness and composability of the {\tt ComputeOR} procedure, the cost of the procedure is a total of $6nB$ pulses as it calls $2B$ times {\tt ComputeOR}.

\begin{algorithm}[H]
  \footnotesize
  \DontPrintSemicolon
  \SetKw{Send}{send}
  \SetKw{Receive}{receive}
  \SetKw{Return}{return}

  \caption{MinFinding(bits, active) computes the minimum input among the processes that start with active=true; at the end the active flag remains true exactly at those minima. The MSB of bits is in position 0 and the LSB is in the last position.}
  \label{algo:minfinding}

  $B \leftarrow 0$\;   
  \Repeat{$\text{\texttt{ComputeOR}}(active \land (B \geq \len(\texttt{bits})))$}{
    $B \leftarrow B +1$
  }
  $active \leftarrow active \land (\len(\texttt{bits}) = B)$\;
  $phase \leftarrow 0$\;
  $minBits \leftarrow [\,]$\;

  \While{$phase < \len(\texttt{bits})$}{
    $bit \leftarrow \texttt{bits}[phase]$\;

    $hasZero \leftarrow (active \land (bit = 0))$\;
    $existsZero \leftarrow$ \texttt{ComputeOR}($hasZero$)\;

    \uIf{$existsZero$}{
      \If{$active \land (bit = 1)$}{
        $active \leftarrow false$\;
      }
      $minBits$.append($0$)\;
    }
    \Else{
      \tcp{no active bit 0: all active bits are 1}
      $minBits$.append($1$)\;
    }

    $phase \leftarrow phase + 1$\;
  }

  $minInput \leftarrow \texttt{convertBitsToInt}(minBits)$\;
  \Return $minInput$\;
\end{algorithm}

\subsection{Multiset computation}\label{sec:multi}
In this section we prove Theorem \ref{th:multisetComputation}. 
Suppose that each process $p_j$ starts with an input $x_j \in Input$. Our goal now is to compute the multiset of inputs. We assume the existence of a leader $p_\ell$ and work in the
ID-equipped setting of Section~\ref{sec:colorcount} so that we can invoke its
counting procedure.

We maintain a local boolean flag \texttt{eligible}, initially set to
\texttt{true} at every process, indicating that its input has not yet been
accounted for.
As long as \texttt{ComputeOR}(\texttt{eligible}) returns \texttt{true}, all
processes execute \texttt{MinFinding}(\texttt{bits}, \texttt{eligible})
(Algorithm~\ref{algo:minfinding} described in the previous section).
This a algorithm identifies the current minimum
value $x_{\min}$ among eligible processes and sets $active=\texttt{true}$ at
exactly the eligible processes with input $x_{\min}$.

Next, these active processes execute the counting algorithm of
Section~\ref{sec:colorcount} on the virtual ring induced by the active set (all
other processes act as relays), and all processes learn the multiplicity
$\#(x_{\min})$.
Finally, each process updates $\texttt{eligible} \leftarrow \texttt{eligible} \land \neg
active$, removing all occurrences of $x_{\min}$ from the eligible set, and the
loop continues.
The procedure above is correct as all the algorithms composing it have the composable ending. 

\textbf{Complexity:}
Let $B$ be the input bit length, let $D$ be the number of distinct input values,
and let $k_1,\ldots,k_D$ be their multiplicities (so $\sum_r k_r = n$).
Each \texttt{MinFinding} invocation costs at most $O(nB)$ pulses, and it is executed once per distinct
value, so the total cost of minimum-finding is $O(nBD)$.
For a value of multiplicity $k_r$, the counting step uses the algorithm of
Section~\ref{sec:colorcount} to compute the sum of the local indicator bits
$active$, which costs $O(n\log^2 n)$ pulses.
Hence the total counting cost is $O(nD\log^2 n)$.
Overall, the multiset computation uses $O(nBD + nD\log^2 n) = O(nD(B+\log^2 n))$
pulses.

\section{Lower bound} \label{sec:lb}
 
In this section we prove that when processes do not know any upper bound on ring size, any algorithm that solves counting  must send $\Omega(n\log n)$ messages in some execution even if processes have distinct identifiers and a leader process $p_{\ell}$ is present. 

\subparagraph{Interface traces of intervals.}
For an interval (contiguous subpath) $I=(p_a,p_{a+1},\dots,p_b)$ we denote its two boundary edges by
\[
e_L(I)=\{p_{a-1},p_a\}, \qquad e_R(I)=\{p_b,p_{b+1}\}.
\]
Fix an execution $E$ and fix its global interleaving order of atomic events (a total order chosen by the adversary).
Let $\Sigma=\{\Liin,\Liout,\Riin,\Riout\}$.
Consider the subsequence $(d_1,\dots,d_m)$ of all \emph{delivery events} in $E$ that occur on the two boundary edges $e_L(I)$ and $e_R(I)$, listed in that global order.
Define a mapping $\pi_I$ from these delivery events to $\Sigma$ by
\[
\pi_I(d)=
\begin{cases}
\Liin & \text{if $d$ is a delivery \emph{at} $p_a$ of a pulse that arrived \emph{from} $p_{a-1}$},\\
\Liout & \text{if $d$ is a delivery at $p_{a-1}$ of a pulse that arrived from $p_a$},\\
\Riin & \text{if $d$ is a delivery at $p_b$ of a pulse that arrived from $p_{b+1}$},\\
\Riout & \text{if $d$ is a delivery at $p_{b+1}$ of a pulse that arrived from $p_b$}.
\end{cases}
\]
The \emph{interface trace} of $I$ in $E$ is the word
\[
\trace_E(I) \;:=\; \pi_I(d_1)\,\pi_I(d_2)\,\cdots\,\pi_I(d_m)\ \in \Sigma^{*}.
\]
Let $|\trace_E(I)|$ denote its length. If $M(e)$ is the number of pulses that traverse edge $e$, then
\begin{equation}
\label{eq:trace-length-bound}
|\trace_E(I)| = M(e_L(I)) + M(e_R(I)).
\end{equation}
Two intervals are \emph{trace-equivalent} if their interface traces are identical words.

\begin{observation}[Counting traces]\label{obs:trace-count}
For $k\ge 0$, the number of possible words over $\Sigma=\{\Liin,\Liout,\Riin,\Riout\}$ of length at most $k$ is
\(\sum_{i=0}^{k}4^i \le 4^{k+1}\).
\end{observation}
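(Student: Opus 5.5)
The claim is a counting bound: the words over $\Sigma$ of length exactly $i$ number $|\Sigma|^i = 4^i$. To bound the words of length at most $k$, I will sum over $i=0,\dots,k$ (the empty word counts once for $i=0$) and then bound the geometric series.

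The sum is
\[
\sum_{i=0}^{k}4^i=\frac{4^{k+1}-1}{3}.
\]
This is at most $4^{k+1}$, since $4^{k+1}-1 \le 3\cdot 4^{k+1}$ holds trivially. A cruder route avoids the closed form: each of the $k+1$ terms is at most $4^k$, so the sum is at most $(k+1)4^k$, and $k+1\le 4$ fails only for $k\ge 4$. So I will use the closed form.

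There is no real obstacle. The only point to check is that words of different lengths are distinct, so summing the counts per length does not double count. This is immediate, since a word determines its own length.
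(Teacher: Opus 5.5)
Your proof is correct. It is the same count the paper relies on: there are $4^i$ words of length exactly $i$, and the geometric sum $(4^{k+1}-1)/3$ is at most $4^{k+1}$. The paper states this as an observation without proof.
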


\begin{lemma}[Splicing with a shared boundary]\label{lem:splice-prefix}
Let $E$ be an execution of an algorithm $\mathcal A$ on an oriented ring $R$.
Let $I_1\subset I_2$ be intervals that share their left boundary,
$e_L(I_1)=e_L(I_2)=:e_L$, and assume the leader $p_{\ell}$ is not in $I_2$.
If $I_1$ and $I_2$ are trace equivalent, that is $\trace_{E}(I_1)=\trace_{E}(I_2)$, then in the ring $R'$ obtained by replacing $I_2$ with $I_1$
(preserving orientation and wiring to the same two external neighbors)
there exists an execution $E'$ in which every process outside $I_2$ has exactly the same local history as in $E$.
In particular, the leader has the same history and output in $E$ and $E'$.
\end{lemma}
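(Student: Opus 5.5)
The plan is a cut-and-paste argument in which the boundary trace serves as the entire interface between an interval and its environment. Since pulses carry no content, the behavior of the outside world, i.e.\ all processes not in $I_2$ (including $p_{a-1}$, $p_{b_2+1}$ and the leader), depends on the interval only through the timed sequence of pulses crossing $e_L$ and $e_R(I_2)$. Write $I_1=(p_a,\dots,p_{b_1})$ and $I_2=(p_a,\dots,p_{b_2})$ with $b_1<b_2$. In $R'$, $p_{b_1}$ is wired to $p_{b_2+1}$, so $e_R(I_1)$ in $R'$ connects $p_{b_1}$ to the outside process $p_{b_2+1}$. I will build $E'$ by interleaving two given pieces: the outside part of $E$ restricted to $\mathrm{Out}=R\setminus I_2$, and the inside part of $E$ restricted to $I_1$. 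Both are then replayed in $R'$.

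Concretely, I take the total order of $E$ and extract two subsequences. The first is $E_{\mathrm{out}}$, the events of processes in $\mathrm{Out}$, including their sends and the deliveries at them. The second is $E_{\mathrm{in}}$, the events of processes in $I_1$. Consider a boundary delivery in $E_{\mathrm{out}}$ that arrives from the interval. In $E$ it is labelled $\Liout$ or $\Riout$ in $\trace_E(I_2)$. In $E_{\mathrm{in}}$, the boundary deliveries are labelled by $\trace_E(I_1)$. Since the two words are identical, the $k$-th symbol of each trace can be matched, which gives a bijection between the boundary-delivery events of $E_{\mathrm{out}}$ and those of $E_{\mathrm{in}}$. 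Matched events have the same type: left or right, in or out. I then define $E'$ by merging $E_{\mathrm{out}}$ and $E_{\mathrm{in}}$ so that the $k$-th boundary deliveries of the two sides occupy the same relative position in both. Between consecutive boundary deliveries the merge is free, and I put, say, the outside events first. The required order exists because each side's boundary deliveries appear in its own stream in trace order.

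Next I check that $E'$ is a legal asynchronous execution of $\mathcal A$ on $R'$. Each process runs a deterministic local algorithm whose input is the sequence of port-labelled deliveries, so replaying a process's own local history is consistent as long as every delivery it experiences corresponds to a pulse that was sent earlier and is still in transit on that link. Here I use FIFO and content-obliviousness. Take the link $(p_{b_1},p_{b_2+1})$ in $R'$. An $\Riout$ delivery at $p_{b_2+1}$ in $E'$ must be matched by a prior send by $p_{b_1}$ toward port $1$. In $E$, every $\Riout$ delivery at $p_{b_1+1}$ (relative to $I_1$) was preceded by a corresponding send of $p_{b_1}$. That send belongs to $E_{\mathrm{in}}$ and precedes the delivery there, so it precedes the matched event in $E'$. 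Pulses are indistinguishable, so only counts matter. I will argue by induction along the merged order that, on every boundary link and in each direction, the number of sends so far is at least the number of deliveries so far. The same argument applies to $\Riin$ at $p_{b_1}$ (pulses sent by $p_{b_2+1}$) and to the unchanged left edge. Pulses sent but never delivered in $E$ stay undelivered in $E'$, and finitely many undelivered pulses are allowed only if the corresponding receiver has halted. This holds because if such a pulse were consumed in $E$ it would appear in the trace.

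The main obstacle is this causality and consistency bookkeeping. A matched delivery in $E'$ is a single event that simultaneously belongs to an outside process's history and is caused by an inside process's send, or the reverse. I must show that the merge never places a delivery before its causing send, even though the two sides were produced by two different sub-executions of $E$ (intervals $I_1$ versus $I_2$). I will handle this by treating the two sides as emitting sends and consuming deliveries independently, and by noting that a send by an outside process to the interval on a boundary edge is recorded in $E$ at the same outside event regardless of whether the receiver is viewed as inside $I_1$ or $I_2$. Once this is established, every process outside $I_2$ has, by construction, exactly its local history from $E$. The leader, which lies outside $I_2$, therefore has the same history and the same output.
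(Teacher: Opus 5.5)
Your proposal is correct and is essentially the paper's cut-and-paste argument: keep the part of $E$ outside $I_2$ and the part inside $I_1$, and use $\trace_E(I_1)=\trace_E(I_2)$ to align their boundary deliveries, so that no process outside $I_2$ (in particular the leader) can tell the difference. Your two-stream merge, synchronized at matched trace positions, is a more explicit version of the paper's step that re-times $I_1$'s internal schedule around the prescribed boundary deliveries; note that it needs all four kinds of boundary deliveries included as markers in both streams, which your causality check already implicitly assumes.
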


\begin{proof}
Fix the global interleaving (total order) of events in $E$.
Let $O=(o_0,o_1,\dots,o_{f})$ be the subsequence of all events that are \emph{not} boundary \emph{deliveries} on $I_2$’s two boundary edges (so $O$ includes all sends and all deliveries away from those two edges). 
For each boundary delivery $\delta$ on $I_2$’s boundaries, let $i(\delta)\in\{-1,0,\dots,s,s+1\}$ be the unique index such that $\delta$ occurs strictly after $o_{i(\delta)}$ (if $i(\delta)>-1$) and  before $o_{i(\delta)+1}$ (if $i(\delta)<s+1$) in $E$ (multiple boundary deliveries may share the same index; their internal order is the one from $E$). 

Since $\trace_{E}(I_1)=\trace_{E}(I_2)$, the ordered list of left-boundary labels (those in $\{\Liin,\Liout\}$) and the ordered list of right-boundary labels (those in $\{\Riin,\Riout\}$) are identical for $I_1$ and $I_2$.
We now first show the sequence of events in $E'$ and then we show that such sequences is realizable. 

\emph{Execution $E'$ on $R'$:}
(i) replay the entire outside sequence $O$ verbatim;
(ii) for every boundary delivery $\delta$ on $I_2$ in $E$, insert in $E'$ a boundary delivery $\delta'$ placed in the same gap after $o_{i(\delta)}$ and before $o_{i(\delta)+1}$, with the same label $\pi_{I_2}(\delta)\in\{\Liin,\Liout,\Riin,\Riout\}$.
If the label is in $\{\Liin,\Liout\}$, perform $\delta'$ on the \emph{same physical edge} $e_L$; if it is in $\{\Riin,\Riout\}$, perform $\delta'$ on the right boundary of $I_1$ in $R'$ (which has the same external right neighbor as $I_2$ had).

\emph{Realizability of $E'$:}
The proof proceeds by a simple inductive argument. All events in $O$ can be replayed without problem until the first boundary event $\delta'$. 
If $\delta'$ is incoming to $I_1$ ($\Liin$ or $\Riin$), the sender is the external neighbor; since $O$ is unchanged, that neighbor issues the corresponding send in $E'$ as in $E$, and we choose a finite delay so the delivery occurs after the prescribed index in $O$ (more precisely between $o_{i(\delta)}$ and $o_{i(\delta)+1}$.).
If $\delta'$ is outgoing from $I_1$ ($\Liout$ or $\Riout$), in $E$ the processes of $I_1$ executed some internal schedule producing exactly this boundary word; by inductive hypothesis we can replicate that internal schedule inside $I_1$ in $E'$ and place the corresponding send before the prescribed delivery, choosing a finite delay so it arrives at the prescribed index in $O$ (more precisely between $o_{i(\delta)}$ and $o_{i(\delta)+1}$.).
Pulses carry no content and delays are finite but otherwise unconstrained, so this is always feasible irrespectively of the IDs of processes in $I_1$.

Because $O$ is replayed verbatim and each external neighbor sees the same ordered boundary deliveries at the same places among outside events, every process outside $I_2$, in particular the leader, has the same local history in $E'$ as in $E$.
\end{proof}

\medskip
We now prove the lower bound.

\thmlb*

\begin{proof}
Assume for contradiction that there exists a correct algorithm $\mathcal A$ and an infinite set of sizes for which the worst-case message complexity is $o(n\log n)$.
Takes one of such $n>100$, specifically one where on a ring $R$ of size $n$ there is an execution $E$ of ${\cal A}$ with total message complexity \(M\) such that: $$4^{\frac{6M}{n}+1}<n/4-1,$$ 
notice that since the message complexity of ${\cal A}$ is $o(n\log n)$ such situation exists for a sufficiently large $n$.
The proof now proceeds by showing that such execution $E$ makes the leader output a wrong count.

Recall that \(M(e)\) is the number of pulses that traverse edge $e$ (both directions) in execution $E$, so $M=\sum_{e\in Edges(R)} M(e)$ where $Edges(R)$ is the set of ring edges, \(|Edges(R)|=n\). Let \(\overline m := \frac{M}{n}\).

We now show that we can choose a sparse edge at at least distance $n/2+1$ from the leader.

By averaging and the fact that each $M(e)$ is non-negative, at least $n/2$ edges satisfy \(M(e)\le 2\overline m\), we define such edges as  sparse edges.
Among those, pick an edge \(e_l=\{u,v\}\) with both $u\neq \ell$ and $v \neq \ell$,  that maximizes the distance from one of its endpoints to the leader $p_{\ell}$; let $v$ be that endpoint, and let
\(d\) be the number of edges encountered when moving on the maximum distance arc from $v$ up to $p_{\ell}$, without loss of generality we assume that the arc is a clockwise arc, otherwise the proof is the same by just assuming the arc counter-clockwise. 
By the fact that at least $n/2>50\geq5$ edges are sparse and among any five edges, at least one has an endpoint $v$ whose longer  arc to $p_{\ell}$ has length \(d\ge n/2+1\).
Recall that we have \(M(e_l)\le 2\overline m\).

We now show that there are at least $n/4$ intervals with sparse interface traces.

For each \(t=0,1,\dots,d-1\), define the clockwise interval
$I_t := (v, v{+}1, \dots, v{+}t)$, which does not contain~$p_{\ell}$.

Then \(e_L(I_t)=e_l\), while \(e_R(I_t)=\{v{+}t, v{+}t{+}1\}\).
The edges \(e_R(I_t)\) are distinct for different \(t\) and lie on the clockwise arc from \(v\) to \(\ell\).
Thus we have:
\[
\sum_{t=0}^{d-1} M(e_R(I_t)) \;\le\; M.
\]

Averaging over all $d$ edges we have that the average number of messages exchanged on edges in $I_{d-1}$ is: 
\[
\frac{1}{d}\sum_{t=0}^{d-1} M(e_R(I_t)) \;\le\; \frac{M}{d} \;\le\; \frac{M}{n/2} \;=\; 2\overline m.
\]

Hence, using the same averaging algorithm above, at least \(d/2 \geq n/4\) indices \(t\) satisfy \(M(e_R(I_t))\le 4\overline m\).
For each such \(t\),
\[
|\trace_{E}(I_t)|
\ \le M(e_L(I_t)) + M(e_R(I_t))
\;\le\; 2\overline m + 4\overline m
\;=\; 6\,\overline m.
\]

Let \(T := \{\, t \in \{1,\dots,d-1\}: M(e_R(I_t))\le 4\overline m \,\}\).
We have \(|T|\geq n/4-1\), and the family \(\{I_t : t\in T\}\) is totally ordered by inclusion.
By Observation~\ref{obs:trace-count}, the number of distinct traces over \(\Sigma\) of length at most \(6\overline m\) is at most \(4^{6\overline m+1}\), and by our choice of the ring size and execution we have \(4^{6\overline m+1} < n/4-1\).
Hence there exist \(t_1<t_2\) in \(T\), such that
$\trace_{E}(I_{t_1})=\trace_{E}(I_{t_2})$. Set \(I_1:=I_{t_1}\) and \(I_2:=I_{t_2}\); then \(I_1\subset I_2\), they share the left border and neither contains~$p_{\ell}$. So we can apply Lemma \ref{lem:splice-prefix}.

Let \(R'\) be the ring obtained by replacing \(I_2\) with \(I_1\) (preserving orientation), and let \(n' := |R'| < n\).
By Lemma~\ref{lem:splice-prefix}, there is an execution \(E'\) of $\mathcal A$ on \(R'\) in which every process outside \(I_2\), in particular the leader \(\ell\), has the same local history as in \(E\), and therefore the same output.

Since processes do \emph{not} know \(n\), correctness of counting requires that, on a ring of size \(n\), the leader eventually outputs \(n\), and on a ring of size \(n'\), the leader eventually outputs \(n'\).
But in \(E\) and \(E'\) the leader's local history is identical, so its output is the same in both executions.
This cannot equal both \(n\) and \(n'\), a contradiction.\end{proof}

\section{Conclusions and open problems}\label{sec:conclusion}

In this work, we demonstrated that content-oblivious communication is far more powerful than previously understood. We showed that message-passing algorithms on rings, and more generally on 2-edge-connected networks, can be simulated in the content-oblivious setting with only a \emph{constant} multiplicative overhead. This constitutes a substantial improvement over the previously known simulator of Censor-Hillel~et~al.~\cite{censor2023distributed}, which incurs a large \emph{polynomial} multiplicative overhead.

On rings, our simulator operates with no preprocessing beyond the existence of a pre-selected leader. For general 2-edge-connected networks, the simulation relies on a preprocessing phase that requires sending a polynomial number of pulses. While our results establish that such preprocessing suffices to enable constant-overhead simulation, an important open question is whether the cost of this preprocessing can be significantly reduced.

A central application of our simulator is a deterministic $O(n \log^2 n)$-message counting algorithm for rings with unique $O(\log n)$-bit IDs and a leader, which nearly matches our $\Omega(n \log n)$ lower bound. Closing the gap between these bounds remains an intriguing open problem.

The picture is far less complete in the \emph{anonymous} setting, where the power and limitations of content-oblivious computation remain largely unknown. Here, despite the absence of IDs and the inherent difficulty of symmetry breaking, we showed that counting is possible using $O(n^{1.5})$ pulses in the presence of a leader. Our algorithm simultaneously solves the naming problem by assigning unique $O(\log n)$-bit IDs to all processes. Is it possible to obtain substantially more efficient solutions for counting and naming in anonymous content-oblivious rings?


 

\bibliographystyle{plainurl}
\bibliography{refs}


\appendix

\end{document}